\algrenewcommand\algorithmicrequire{\textbf{Input:}}
\algrenewcommand\algorithmicensure{\textbf{Output:}}
\newif\ifmynotes
\title{Improved Decoding of Tanner Codes\footnote{This is an extended version of the conference paper~\cite{11195639}, presented at IEEE ISIT 2025, which includes detailed proofs and an additional result, \cref{thm:viderman_intro}.}}
\author{
Zhaienhe Zhou\thanks{\texttt{zhaienhezhou@gmail.com}, School of the Gifted Young \& College of Computer Science, University of Science and Technology of China, Hefei 230026, China.}
\and
Zeyu Guo\thanks{
\texttt{zguotcs@gmail.com},
Department of Computer Science and Engineering, The Ohio State University}
}
\date{}
\newtheorem{theorem}{Theorem}[section]
\newtheorem{corollary}[theorem]{Corollary}
\newtheorem{lemma}[theorem]{Lemma}
\newtheorem{claim}[theorem]{Claim}
\newtheorem{definition}[theorem]{Definition}
\theoremstyle{remark}
\newtheorem{remark}{Remark}
\newtheorem*{remark*}{Remark}
\newcounter{casenum}
\DeclareMathOperator*{\E}{\mathbb{E}}
\newcommand{\decode}{\mathsf{InnerDecode}}
\newcommand{\erasuredecode}{\mathsf{InnerErasureDecode}}
\newcommand{\R}{\mathbb{R}}
\newcommand{\N}{\mathbb{N}}
\newcommand{\F}{\mathbb{F}}
\newcommand{\wt}{\mathrm{wt}}
\newcommand{\eps}{\varepsilon}
\renewcommand{\epsilon}{\varepsilon}
\begin{document}
\maketitle
\begin{abstract}
	In this paper, we present improved decoding algorithms for expander-based Tanner codes.

	We begin by developing a randomized linear-time decoding algorithm that, under the condition that $ \delta d_0 > 2 $, corrects up to $ \alpha n $ errors for a Tanner code $ T(G, C_0) $, where $ G $ is a $ (c, d, \alpha, \delta) $-bipartite expander with $n$ left vertices, and $ C_0 \subseteq \mathbb{F}_2^d $ is a linear inner code with minimum distance $ d_0 $. This result improves upon the previous work of Shen, Shangguan, Ouyang and Cheng (IEEE TIT 2025), which required $ \delta d_0 > 3 $.

	We further derandomize the algorithm to obtain a deterministic linear-time decoding algorithm with the same decoding radius. Our algorithm improves upon the previous deterministic algorithm of Cheng et al.\ by achieving a decoding radius of $ \alpha n $, compared with the previous radius of $ \frac{2\alpha}{d_0(1 + 0.5c\delta) }n$.

	Additionally, we investigate the size-expansion trade-off introduced by the recent work of Chen, Cheng, Li, and Ouyang (IEEE TIT 2023), and use it to provide new bounds on the minimum distance of Tanner codes. Specifically, we prove that the minimum distance of a Tanner code $T(G,C_0)$ is approximately $f_\delta^{-1} \left( \frac{1}{d_0} \right) \alpha n $, where $ f_\delta(\cdot) $ is the Size-Expansion Function.
	As another application, we improve the decoding radius of our decoding algorithms from $\alpha n$ to approximately $f_\delta^{-1}\left(\frac{2}{d_0}\right)\alpha n$.

	Finally, we extend Viderman's find-erasures-and-decode framework (ACM TOCT 2013) to general linear inner codes, obtaining a deterministic linear-time decoder for $\delta d_{0}>1.8$ when $d_0=3$, thus pushing below the $\delta d_0 > 2$ threshold of our general result.
\end{abstract}

\section{Introduction}

Low-density parity-check (LDPC) codes are an important class of error-correcting codes known for their near-optimal error-correction performance and low decoding complexity. They are characterized by sparse parity-check matrices, where most entries are zero, which makes them efficient in both encoding and decoding. LDPC codes were first introduced by Gallager \cite{ldpccode} in the 1960s but gained renewed attention due to their suitability for modern communication systems, including wireless and satellite communications, as well as for storage applications.

A key feature of LDPC codes is their representation as a bipartite graph, with bits on the left side and parity checks on the right. This leads to a parity-check graph with a constant degree, where the sparsity of the graph enables the development of efficient decoding algorithms. In particular, the belief-propagation algorithm, introduced by Gallager \cite{ldpccode}, was later proven by Zyablov and Pinsker \cite{ZyaPin75} to correct a constant fraction of errors on random LDPC codes.

Later, Tanner \cite{tannercode} extended LDPC codes by developing a recursive approach to construct long error-correcting codes from shorter inner codes. \emph{Tanner codes} are constructed by assigning a linear inner code \( C_0 \) of length \( d \) and minimum distance \( d_0 \) to the vertices of a sparse bipartite graph. Specifically, bits are placed on the left side of the bipartite graph, and each vertex on the right side is assigned an inner code that imposes constraints on the connected bits. Note that when the inner code is a parity-check code, the Tanner code simplifies to an LDPC code. This construction offers greater flexibility in code design, enabling codes with both large minimum distances and efficient decoding properties.

To analyze the decoding algorithms of LDPC and Tanner codes, Sipser and Spielman \cite{sipser1996expander} introduced the concept of vertex expansion. Expander codes are a special class of Tanner codes constructed from \((c,d,\alpha,\delta)\)-bipartite expanders, where \(c\), \(d\), \(\alpha\), and \(\delta\) are constants. Specifically, the graph \(G = (L \cup R, E)\) is left-regular of degree \(c\) and right-regular of degree \(d\), and for any set \(S \subseteq L\) with \(|S| \leq \alpha n\), the size of the neighborhood of \(S\) is at least \(\delta c |S|\). %
Expander codes are known for their efficient decoding algorithms, which can correct $\Omega(n)$ errors in linear time.
Research on expander codes in coding theory and theoretical computer science \cite{sipser1996expander, skachekexpander03,richcapacityldpc01,feldmanlpdecode05,feldmanLPdecoding07, viderman2013,vidermanlpdecoding13, chen2023improved} has focused on optimizing the decoding radius and other parameters while keeping the decoding algorithm linear-time.

Consider the special case where the inner code is a parity-check code.
In this case, the flip algorithm introduced by Sipser and Spielman \cite{sipser1996expander} can decode up to \((2\delta-1)\alpha n\) errors in linear time for any expander code with \(\delta > \frac{3}{4}\). Later, Viderman \cite{viderman2013} proposed a new decoding method, which corrects up to \(\frac{3\delta - 2}{2\delta - 1}\alpha n\) errors in linear time when \(\delta \geq \frac{2}{3}\).

More recently, Chen, Cheng, Li, and Ouyang \cite{chen2023improved} presented an improved decoding algorithm by combining previous approaches and introducing a method they term “expansion guessing.”
They also discovered a size-expansion trade-off,
which enables expansion of larger sets to be inferred from smaller sets.
Their work showed that expander codes achieve a minimum distance of \(\frac{1}{2(1-\delta)} \alpha n\), and their decoding algorithm achieves a decoding radius of \(\frac{3}{16(1-\delta)} \alpha n\), which is nearly half of the code's distance. However, their algorithm still requires \(\delta > \frac{3}{4}\) to use the flip algorithm. This raises an open question: What is the minimum \(\delta\) required to decode a linear number of errors in linear time? It was shown in \cite{viderman2013} that \(\delta > \frac{1}{2}\) is necessary.

The above studies focus on the special case of expander codes where the inner code $C_0$ is a parity-check code with minimum distance \(d_0 = 2\). Progress has also been made on the general case \cite{Chilappa10, dowlinggao18, cheng2024expandercodecorrectomegan}. Notably, Dowling and Gao \cite{dowlinggao18} proved that the condition \(d_0 \delta^2=\Omega(c)\) is sufficient for error correction using a flip-based decoding algorithm. More recently, Shen, Shangguan, Ouyang and Cheng \cite{cheng2024expandercodecorrectomegan} improved this result by showing that \(\delta d_0 > 3\) is sufficient for error correction. They also proved that \(\delta d_0 > 1\) is necessary, thereby generalizing an earlier result of Viderman \cite{viderman2013} for expander codes with parity-check inner codes.

However, many questions remain open about the optimal parameters. In particular, there is still a gap between the sufficient and necessary conditions for $\delta d_0$ to enable a linear-time decoding. In this paper, we narrow this gap by proving that \(\delta d_0 > 2\) is sufficient for expander-based Tanner codes.

\subsection{Main Results}

Let $T(G, C_0)$ be a Tanner code based on a bipartite expander $G$ and an inner code $C_0$ (see \cref{def:tanner_code}).
Our first main result is a deterministic linear-time decoding algorithm for $T(G, C_0)$.

\begin{theorem}[Informal version of \cref{thm:deter_main}]\label{thm:deter}
	Suppose \( \delta d_0 > 2 \).
	There exists a   deterministic \( O(n) \)-time algorithm that corrects up to \( \alpha n \) errors for any Tanner code \( T(G, C_0)\subseteq\F_2^n \), where \( G \) is a \( (c, d, \alpha, \delta) \)-bipartite expander and \( C_0 \) is an inner code with minimum distance \( d_0 \).
\end{theorem}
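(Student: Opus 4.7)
My plan is to design a greedy linear-time decoder for $T(G, C_0)$ that repeatedly performs local decoding at selected right vertices, analyzed via a sharper averaging argument than the one used in the previous $\delta d_0 > 3$ regime. The starting point is the following structural lemma. Let $c^* \in T(G, C_0)$ be the transmitted codeword, $y = c^* + e$ the received word, and $E = \mathrm{supp}(e)$ with $1 \leq |E| \leq \alpha n$. Double-counting the edges between $E$ and $N(E)$ yields $\sum_{v \in N(E)} |N(v) \cap E| = c|E|$, while expansion gives $|N(E)| \geq \delta c |E|$, so the average of $|N(v) \cap E|$ over $v \in N(E)$ is at most $1/\delta < d_0/2$, using $\delta d_0 > 2$. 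Thus some $v^\star \in N(E)$ satisfies $1 \leq |N(v^\star) \cap E| < d_0/2$; for this $v^\star$, the block $y|_{N(v^\star)}$ lies strictly within the unique decoding radius of $c^*|_{N(v^\star)} \in C_0$, so a brute-force inner decoder recovers $c^*|_{N(v^\star)}$ in $O(2^d) = O(1)$ time. Refining the same count shows, in fact, that at least $c|E|(\delta d_0 - 2)/d_0$ vertices in $N(E)$ have this property.

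The algorithm maintains a \emph{correctable} set $\mathcal{C}$ of right vertices $v$ with $y|_{N(v)} \notin C_0$ and $d(y|_{N(v)}, C_0) < d_0/2$; each iteration picks some $v \in \mathcal{C}$, replaces $y|_{N(v)}$ with its unique nearest codeword, and refreshes $\mathcal{C}$. The structural lemma guarantees that $\mathcal{C}$ is nonempty whenever the current Hamming error lies in $[1,\alpha n]$, so if we can maintain this invariant the algorithm must terminate with $y = c^*$. The main hurdle is proving the invariant: the greedy scan could pick a ``bad'' $v^\dagger \in \mathcal{C}$ with $|N(v^\dagger) \cap E| \geq d_0/2$ whose nearest codeword is some spurious $c' \neq c^*|_{N(v^\dagger)}$, in which case the correction could inflate $|E|$. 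I expect the resolution to take one of two forms: either a sharpened selection rule (for instance, picking $v \in \mathcal{C}$ minimizing $d(y|_{N(v)}, C_0)$ and arguing that any minimizer must automatically satisfy $|N(v) \cap E| < d_0/2$), or a composite potential combining $|E|$ with the number of unsatisfied right constraints, with the slack $\delta d_0 > 2$ absorbing the worst-case increment. A randomized version that samples $v$ uniformly from $\mathcal{C}$ hits a good vertex with probability at least $(\delta d_0 - 2)/d_0$ by the refined count and thus immediately gives an in-expectation progress bound; I expect the paper to handle the deterministic case by applying the method of conditional expectations or by exhibiting an explicit deterministic selection rule that matches this guarantee.

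For the running time, each correction alters at most $d = O(1)$ bits and hence affects the correctable status of at most $cd = O(1)$ right vertices, each of which can be rechecked in $O(1)$ time via the brute-force inner decoder. Using a linked-list representation of $\mathcal{C}$, each iteration is $O(1)$ amortized; since each iteration strictly decreases a non-negative integer potential bounded by $O(n)$, the total running time is $O(n)$. Putting the pieces together then yields the claimed deterministic linear-time decoder correcting up to $\alpha n$ errors under the hypothesis $\delta d_0 > 2$.
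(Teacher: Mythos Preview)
Your plan diverges from the paper's approach, and the step you flag as the ``main hurdle'' is a genuine gap that neither of your suggested fixes resolves. The rule ``pick $v \in \mathcal{C}$ minimizing $d(y|_{N(v)}, C_0)$'' does not force $|N(v)\cap E| < d_0/2$: a bad vertex with $|N(v)\cap E| = d_0 - 1$ can sit at Hamming distance $1$ from a spurious $C_0$-codeword (the triangle inequality only yields $d(y|_{N(v)}, C_0) \ge d_0 - |N(v)\cap E| = 1$), so it ties with any good vertex having $|N(v)\cap E| = 1$. The randomized variant is also not ``immediate'': a good step decreases $|E|$ by as little as $1$, a bad step can increase $|E|$ by nearly $d_0/2$, and the bad fraction in $\mathcal{C}$ may be as large as $2/(\delta d_0)$; balancing these requires finer accounting than the single inequality you quote, and even granting an expectation bound, the method of conditional expectations needs an efficiently computable proxy for the conditional expectation, which you do not supply. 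Finally, your running-time argument presupposes a strictly decreasing integer potential that you never exhibit; $|E|$ itself can increase.

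The paper's mechanism is structurally different: rather than decode one check at a time, every check $v$ with $1 \le d_H(w_v, y|_{N(v)}) < t \coloneqq d_0/2$ casts a \emph{weighted} vote $\tfrac{t - d_H(w_v, y|_{N(v)})}{ct}$ on a single bit where $w_v$ and $y$ disagree, and the votes are summed to give each bit a total $p_i \in [0,1]$. The linear weighting is precisely what makes good and bad contributions telescope against the expansion and edge-count identities, yielding $\sum_{i\in E} p_i - \sum_{i\notin E} p_i \ge \tfrac{\varepsilon_0\delta}{t}|E|$ with $\varepsilon_0 = t - 1/\delta > 0$; randomized flipping with probabilities $p_i$ then cuts $\E[|E|]$ by a constant factor. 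Derandomization exploits that each $p_i$ is a multiple of $1/(cd_0)$, so only $O(1)$ nonzero values occur; by averaging, some value $q$ has $|\{i\in E: p_i = q\}| - |\{i\notin E: p_i = q\}| = \Omega(|E|)$, and flipping exactly that bucket is a deterministic constant-fraction correction. Since $q$ is unknown, the algorithm brute-forces over all $O(1)$-length sequences of bucket choices, using the observable $|U(x)|$ to prune, and a second constant-depth outer search handles the initial regime $\gamma n < |E| \le \alpha n$---which is exactly the invariant-maintenance problem you left open.
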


Previously, under the condition \(\delta d_0 > 3\), Shen, Shangguan, Ouyang and Cheng \cite{cheng2024expandercodecorrectomegan} gave a randomized linear-time decoding algorithm for $T(G,C_0)$ that corrects up to $\alpha n $ errors, as well as a deterministic linear-time decoding algorithm with a slightly smaller decoding radius $ \frac{2\alpha}{d_0(1 + 0.5c\delta) }n$.
\cref{thm:deter} improves on their results by (1) relaxing the condition to \(\delta d_0 > 2\), and (2) derandomizing the randomized decoding algorithm without reducing the decoding radius $\alpha n$.

To prove \cref{thm:deter}, we first establish a weaker result: Assuming \(\delta d_0 > 2\), there exists a randomized linear-time algorithm correcting $\alpha n$ errors. This result is stated as \cref{thm:rand_main}, and its proof closely follows the approach in \cite{cheng2024expandercodecorrectomegan}. Then, in \cref{sec:deter}, we derandomize the algorithm while preserving the decoding radius.

We also investigate the size-expansion trade-off introduced by \cite{chen2023improved}. Specifically, we define the Size-Expansion Function \( f_\delta(k) \) (see \cref{def:size-expansion}), which satisfies the following property: For any \( (c, d, \alpha, \delta) \)-bipartite expander, the graph remains an expander with parameters \( (c, d, k\alpha, f_\delta(k)) \), where \( k>1 \) is a constant. Consequently, our decoding algorithm achieves a decoding radius of approximately  \( f^{-1}_{\delta}\left(\frac{2}{d_0}\right) \alpha n \), which is strictly larger than \( \alpha n \).

\begin{theorem}[Informal version of \cref{thm:decode_more}]
	\cref{thm:deter} still holds with the decoding radius increased to approximately \( f^{-1}_{\delta}\left(\frac{2}{d_0}\right) \alpha n\).
\end{theorem}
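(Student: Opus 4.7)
The plan is to apply \cref{thm:deter} in a black-box fashion to the same Tanner code $T(G, C_0)$, but with $G$ re-interpreted as an expander of different parameters via the Size-Expansion Function $f_\delta$. The key observation is that the given $(c, d, \alpha, \delta)$-bipartite expander is simultaneously a $(c, d, k\alpha, f_\delta(k))$-bipartite expander for every $k \geq 1$, so by choosing $k$ as large as possible subject to the hypothesis $f_\delta(k) d_0 > 2$ required by \cref{thm:deter}, one obtains the largest decoding radius $k\alpha n$ that this reduction can yield.

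Concretely, I would fix an arbitrary constant $k$ with $1 < k < f_\delta^{-1}(2/d_0)$. This interval is nonempty: the statement inherits the hypothesis $\delta d_0 > 2$ from \cref{thm:deter}, and $f_\delta(1) = \delta$ by construction, so $f_\delta^{-1}(2/d_0) > 1$. Since $f_\delta$ is nonincreasing on $[1, \infty)$ (a larger left set can only witness weaker expansion), the choice of $k$ ensures $f_\delta(k) d_0 > 2$. I then invoke \cref{thm:deter} on $G$ viewed as a $(c, d, k\alpha, f_\delta(k))$-expander, still paired with the inner code $C_0$ of distance $d_0$; this yields a deterministic $O(n)$-time algorithm correcting up to $k\alpha n$ errors for $T(G, C_0)$. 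Letting $k$ approach $f_\delta^{-1}(2/d_0)$ from below gives the claimed ``approximately $f_\delta^{-1}(2/d_0) \alpha n$'' decoding radius.

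The substantive work lies not in this reduction but in the two ingredients it quotes: the deterministic linear-time decoder of \cref{thm:deter} and, especially, the Size-Expansion Function $f_\delta$ itself, whose definition, normalization $f_\delta(1) = \delta$, monotonicity, and reparameterization property must all be established earlier in the paper. The only subtlety internal to the reduction is that the runtime promised by \cref{thm:deter} must depend only on $c$, $d$, and $n$, not on the particular value of the expansion parameter used; inspecting that algorithm shows this is the case, so after the reparameterization the hidden constant depends only on $c$, $d$, $k$, and $d_0$, and the resulting algorithm still runs in linear time in $n$.
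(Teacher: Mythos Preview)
Your proposal is correct and follows essentially the same approach as the paper: reinterpret $G$ as a $(c,d,k\alpha,\delta')$-expander via the size-expansion trade-off with $\delta' d_0>2$, then invoke the black-box decoder. The paper parametrizes by the slack $\epsilon=\delta'-\tfrac{2}{d_0}$ rather than by $k$ directly, and it explicitly absorbs the $O(1/n)$ loss in \cref{lem:size-expansion_trade_off} into the ``sufficiently large $n$'' clause, but these are cosmetic differences only.
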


Additionally, we establish the following tight bound on the minimum distance of $T(G,C_0)$:

\begin{theorem}[Informal version of \cref{thm:dis_lower,thm:dis_upper}]\label{thm:dis}
	Suppose $\delta d_0>1$.
	The minimum distance of the Tanner code \( T(G, C_0) \) is at least approximately \( f_\delta^{-1}\left(\frac{1}{d_0}\right) \alpha n \). Furthermore, this lower bound is tight in the sense that it is achieved by infinitely many examples.
\end{theorem}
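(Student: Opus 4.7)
The plan is to split the proof into a lower bound on the minimum distance and a matching tightness construction.

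For the lower bound, I would use the classical two-way edge-counting argument, strengthened by the size-expansion trade-off. Let $x\in T(G,C_0)\setminus\{0\}$ and let $S\subseteq L$ be its support, $|S|=s$. For every $v\in N(S)$, the restriction of $x$ to $N(v)$ is a nonzero codeword of $C_0$, hence its Hamming weight $|N(v)\cap S|$ is at least $d_0$. Counting edges from $S$ in two ways,
\[
cs \;=\; \sum_{v\in N(S)}|N(v)\cap S| \;\ge\; d_0\,|N(S)|,
\]
so $|N(S)|\le cs/d_0$. Writing $s=k\alpha n$, the Size-Expansion Function (\cref{def:size-expansion}) gives $|N(S)|\ge f_\delta(k)\,cs$ as long as $k\alpha\le 1$, so $f_\delta(k)\le 1/d_0$. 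Since $f_\delta$ is monotone with $f_\delta(1)=\delta>1/d_0$, this forces $k\le f_\delta^{-1}(1/d_0)$, yielding the claimed lower bound $f_\delta^{-1}(1/d_0)\,\alpha n$ on the minimum distance, up to lower-order terms coming from the discretization of the argument.

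For the tightness part, the plan is to exhibit, for infinitely many $n$, a $(c,d,\alpha,\delta)$-bipartite expander $G$ and an inner code $C_0$ of distance $d_0$ whose Tanner code $T(G,C_0)$ has a codeword whose weight is essentially $f_\delta^{-1}(1/d_0)\,\alpha n$. Guided by the equality case of the lower-bound argument, I would build a graph containing a distinguished set $S^*\subseteq L$ of size $s^*\approx f_\delta^{-1}(1/d_0)\,\alpha n$ such that $|N(S^*)|=cs^*/d_0$ and every $v\in N(S^*)$ has exactly $d_0$ neighbors in $S^*$. If $C_0$ is then chosen so that the corresponding weight-$d_0$ pattern lies in $C_0$ at the appropriate $d_0$ positions of each $v\in N(S^*)$ (this is possible whenever $C_0$ has a minimum-weight codeword, e.g.\ the even-weight parity-check code when $d_0=2$, or more generally a code with a weight-$d_0$ codeword at a prescribed support), then the indicator vector of $S^*$ is a codeword of $T(G,C_0)$ of weight $s^*$, matching the lower bound.

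The main obstacle will be realizing this saturating gadget inside a genuine $(c,d,\alpha,\delta)$-expander. One needs the subgraph induced on $S^*\cup N(S^*)$ to witness equality in the size-expansion trade-off at scale $k=f_\delta^{-1}(1/d_0)$, while all other subsets of $L$ of size at most $\alpha n$ still expand by the required factor $\delta c$. A natural route is to glue an explicit extremal bipartite block on $S^*\cup N(S^*)$, taken from the extremal examples certifying $f_\delta$, onto a sufficiently pseudorandom bipartite graph on the remaining vertices, and verify the global expansion via standard probabilistic or spectral arguments; infinitely many examples then arise by scaling the construction with $n$ while holding $c,d,d_0,\alpha,\delta$ fixed.
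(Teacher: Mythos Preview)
Your plan matches the paper's. For the lower bound the paper likewise combines the double-counting inequality $|N(S)|\le c|S|/d_0$ with the size-expansion trade-off (packaged there as ``reinterpret $G$ as a $(c,d,k\alpha,f_\delta(k))$-expander and apply \cref{lem:dis_lower}''), and for tightness it also plants a $(c,d_0)$-regular block on a set $S^*$ of size $f_\delta^{-1}(1/d_0)\,\alpha n$ inside a random host graph and certifies $(c,d,\alpha,\delta-\epsilon)$-expansion probabilistically.

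Two corrections. First, your inequality is flipped: since $f_\delta$ is non-increasing, $f_\delta(k)\le 1/d_0$ forces $k\ge f_\delta^{-1}(1/d_0)$ (not $\le$), which is precisely what lower-bounds $s=k\alpha n$. Second, in the tightness step the extremal block cannot be an arbitrary \emph{explicit} $(c,d_0)$-regular gadget: subsets of $S^*$ of size up to $\alpha n$ must themselves expand by nearly $\delta c$, so the paper takes the block to be a uniformly \emph{random} $(c,d_0)$-regular graph and folds it into the same probabilistic analysis (Azuma's inequality for linear-size sets, a direct union bound for sublinear sets). If you want an explicit block you would need it to be a strong vertex expander in its own right, which is additional work you have not accounted for.
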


The proof of \cref{thm:dis} has two parts.  First, we show that the minimum distance is bounded from below by \( f_\delta^{-1}\left(\frac{1}{d_0}\right) \alpha n \). Second, we present a construction showing that this bound is achievable when \( \alpha \) is sufficiently small but still constant. This construction builds on the approach in \cite{chen2023improved}. For details, see \cref{sec:dis}.

Recall that Viderman~\cite{viderman2013} gave a linear-time decoding algorithm for expander codes that corrects up to
$\frac{3\delta-2}{2\delta-1}\alpha n$
errors when $\delta \ge \frac{2}{3}$.
Our final result generalizes Viderman's framework from expander codes to Tanner codes with arbitrary linear inner codes, yielding the following theorem.

\begin{theorem}[Informal version of \cref{thm:viderman-style}]
	\label{thm:viderman_intro}
	Assume $d_0\ge 2$ and $\delta > \frac{d_0}{2d_0-1}$. Then there exists a deterministic $O(n)$-time decoding algorithm for $T(G,C_0)$ that corrects up to $\gamma\alpha n-O(1)$ errors, where $\gamma = \frac{\delta(2d_0-1)-d_0}{\delta d_0-1} > 0$.
\end{theorem}

In particular, for $d_0=3$, we obtain a decoder with $\delta > 3/5$, equivalently $\delta d_0 > 1.8$, improving upon the $\delta d_0 > 2$ requirement of \cref{thm:deter}. 
The proof is presented in \cref{sec:viderman-style}.

\section{Preliminaries}

\subsection{Notation and Definitions}

Let $\N=\{0,1,2,\dots\}$ and $\N^+=\{1,2,\dots\}$.
For $n\in\N$, denote by $[n]$ the set $\{1,2,\dots,n\}$. The functions $\log(\cdot)$ and $\exp(\cdot)$ use base $e$.
The time complexity of algorithms is analyzed using the RAM model, in which random access of memory and arithmetic operations are assumed to take constant time. Graphs are represented using adjacency lists within the algorithms.

\paragraph{Codes.} In this paper, all codes are assumed to be \emph{Boolean linear codes}. That is, a code is defined as a subspace \(C \subseteq \F_2^n\) over the finite field \(\F_2\). The parameter \(n\) is called the \emph{length} of \(C\).

The Hamming weight of a vector \( x \in \F_2^n \), denoted \( \wt(x) \), is defined as the number of nonzero coordinates in \( x \). The Hamming distance between two vectors \( x, y \in \F_2^n \) is defined as \( d_H(x, y) \coloneq \wt(x - y) \).
The \emph{minimum distance} of a code $C$ is $d_H(C)\coloneq\min\{d_H(x,y):x,y\in C, x\neq y\}$.

\paragraph{Bipartite graphs and expanders.}

A bipartite graph \(G = (L \cup R, E)\) is called \emph{\((c,d)\)-regular} if \(\deg(u) = c\) for all \(u \in L\) and \(\deg(v) = d\) for all \(v \in R\).

For any subset of vertices \(S \subseteq L \cup R\), let \(N(S)\) denote the set of all neighbors of \(S\). Define \(N_i(S)\) as the set of vertices adjacent to exactly \(i\) vertices in \(S\). Additionally, we use the following shorthand notations for convenience:
\[
	N_{\geq i}(S) \coloneqq \bigcup_{j \geq i} N_j(S), \quad N_{\leq i}(S) \coloneqq \bigcup_{j \leq i} N_j(S).
\]

We define \( E(S, T) \) as the set of edges connecting the two vertex sets \(S\) and \(T\).

\begin{definition}[Bipartite expander]\label{def:expander_graph}
	A \((c, d, \alpha, \delta)\)-bipartite expander is a \((c, d)\)-regular bipartite graph \(G = (L \cup R, E)\) such that for any subset of vertices \(S \subseteq L\) with \(|S| \leq \alpha |L|\), it holds that \(|N(S)| \geq \delta c |S|\). 

For non-empty $S\subseteq L$, we call $\frac{N(S)}{c|S|}$ the \emph{expansion factor} of $S$. The above expansion property is equivalent to that the expansion factor of every non-empty $S\subseteq L$ of size at most $\alpha |L|$ is at least $\delta$.
\end{definition}

We now proceed to define Tanner codes, the central object studied in this paper.

\begin{definition}[Tanner code]\label{def:tanner_code}
	Let \( C_0 \) be a code of length \( d \).
	Let \( G = (L \cup R, E) \) be a \((c, d, \alpha, \delta)\)-bipartite expander, where \( L = [n] \) for some positive integer \( n \).
	For each \( v \in R \), fix a total ordering on \( N(v) \), and let \( N(v, i) \) denote its \( i \)-th element for \( i \in [d] \).
	For \( x \in \F_2^n \) and \( v \in R \), define
	\[
		x_{N(v)} \coloneqq (x_{N(v, 1)}, \dots, x_{N(v, d)}) \in \F_2^d.
	\]
	In other words, if \( x \) is viewed as an assignment \( L \to \F_2 \), then \( x_{N(v)} \) is its restriction to \( N(v) \).

	The Tanner code \( T(G, C_0) \) is a code of length \( n \), defined as
	\[
		T(G, C_0) \coloneqq \{x \in \F_2^n : x_{N(v)} \in C_0 \text{ for all } v \in R\} \subseteq \F_2^n.
	\]
	In other words, a vector \( x\in\F_2^n \) is a codeword of \( T(G, C_0) \) if for every \( v \in R \), the ``local view'' $x_{N(v)}$ is a codeword of $C_0$.
\end{definition}

Throughout this paper, we fix positive integers $c,d$ and real numbers $\alpha, \delta\in (0,1]$. The parameters $c,d,\alpha,\delta$ are viewed as constants independent of the growing parameter $n$.
Also, let \(G = (L \cup R, E)\) be a \((c, d, \alpha, \delta)\)-bipartite expander with \(L = [n]\), and let \(C_0\) be a code of length \(d\) with minimum distance \(d_0\). All lemmas and theorems are stated under the assumption that \(G\) and \(C_0\) are given, without explicitly mentioning this.

For convenience, we introduce the following definition.
\begin{definition}[Corrupt bits and unsatisfied checks]
For \(x,y \in \F_2^n\), define $F(x,y)=\{i\in [n]: x_i\neq y_i\}$. Define \(F(x)=F(x,y)\), where \(y\) is the closest codeword to \(x\) in \(T(G, C_0)\) with respect to the Hamming distance. (If there are multiple closest codewords, \(y\) is chosen to be the lexicographically smallest one.) %

Let \(U(x) \subseteq R\) denote the set of unsatisfied checks, defined as \(U(x) = \{v \in R : x_{N(v)} \not\in C_{0}\}\).
\end{definition}

\subsection{Probabilistic Tools}

We use standard tools from probability theory, such as Hoeffding's inequality and Azuma’s inequality, which are detailed in textbooks like~\cite{prob&comp}.

\begin{lemma}[Hoeffding's inequality~\cite{prob&comp}]\label{thm:chernoff_bound}
	Let \(X_1, X_2, \dots, X_n\) be independent random variables. Let \(X = \sum_{i=1}^n X_i\). If \(X_i \in [\ell, r]\) for all \(i\), then for any \(a > 0\),
	\[
		\Pr[X \geq \E[X] + a] \leq \exp{\left(-\frac{2a^2}{n(r-\ell)^2}\right)} \quad \text{and} \quad
		\Pr[X \leq \E[X] - a] \leq \exp{\left(-\frac{2a^2}{n(r-\ell)^2}\right)}.
	\]
\end{lemma}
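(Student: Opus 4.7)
The plan is to prove Hoeffding's inequality via the standard exponential moment (Chernoff) method, combined with a Taylor-based estimate for the moment generating function of a bounded centered random variable.

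First, I would reduce to the upper tail, since the lower tail follows by applying the upper tail to the variables $-X_i$ (whose ranges are again contained in an interval of length $r-\ell$). For the upper tail, I fix a parameter $t>0$ and apply Markov's inequality to the nonnegative random variable $e^{t(X-\E[X])}$, obtaining
\[
\Pr[X \geq \E[X]+a] \leq e^{-ta}\,\E\bigl[e^{t(X-\E[X])}\bigr] = e^{-ta}\prod_{i=1}^n \E\bigl[e^{t(X_i-\E[X_i])}\bigr],
\]
where the factorization uses independence of the $X_i$.

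The technical core is the following auxiliary lemma: if $Y$ is any random variable with $\E[Y]=0$ and $Y\in[a',b']$ almost surely, then $\E[e^{tY}] \leq \exp(t^2(b'-a')^2/8)$. I would prove this by convexity: write each $y\in[a',b']$ as $y=\lambda a'+(1-\lambda)b'$ with $\lambda=(b'-y)/(b'-a')$ and bound $e^{ty}\leq \lambda e^{ta'}+(1-\lambda)e^{tb'}$. Taking expectations and using $\E[Y]=0$ reduces the bound to an expression of the form $\exp(\phi(u))$ with $u=t(b'-a')$, where $\phi(u)=-pu+\log(1-p+pe^u)$ for $p=-a'/(b'-a')$. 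A direct computation gives $\phi(0)=\phi'(0)=0$ and $\phi''(u)\leq 1/4$ uniformly in $u$, so Taylor's theorem yields $\phi(u)\leq u^2/8$.

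Applying this lemma to each $Y_i = X_i - \E[X_i]$, which lies in an interval of length $r-\ell$, I obtain
\[
\Pr[X \geq \E[X]+a] \leq \exp\!\left(-ta + \frac{nt^2(r-\ell)^2}{8}\right),
\]
and optimizing by choosing $t=4a/\bigl(n(r-\ell)^2\bigr)$ yields the stated bound $\exp\!\bigl(-2a^2/(n(r-\ell)^2)\bigr)$. The main obstacle is the uniform bound $\phi''(u)\leq 1/4$, which looks innocent but is exactly where the constant $8$ (and thus the constant $2$ in the final exponent) enters; everything else is essentially bookkeeping.
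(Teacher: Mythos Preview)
Your proof is correct and is the standard derivation of Hoeffding's inequality via the Chernoff bounding technique together with Hoeffding's lemma on the moment generating function of a bounded centered variable. However, the paper does not actually prove this lemma: it is stated with a citation to the textbook~\cite{prob&comp} and used as a black box. So there is nothing to compare against; your argument simply supplies the standard textbook proof that the paper chose to omit.
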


Recall that a sequence of random variables $X_0,X_1,\dots,X_n$ is called a \emph{martingale} if for $i=0,1,\dots,n-1$,
\[
	\E[X_{i+1}|X_0,X_1,\dots,X_{i}]=X_i.
\]
Azuma’s inequality provides a concentration bound for martingales.

\begin{lemma}[Azuma's inequality~\cite{prob&comp}]\label{thm:azuma}
	Let \(X_0, X_1, \dots, X_n\) be a martingale such that \(|X_k - X_{k-1}| \leq c_k\) for all \(k\). Then, for any \(t \geq 1\) and \(\lambda > 0\),
	\[
		\Pr(|X_t - X_0| \geq \lambda) \leq 2\exp{\left(-\frac{\lambda^2}{2\sum_{k=1}^t c_k^2}\right)}.
	\]
\end{lemma}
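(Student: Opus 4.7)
The plan is to follow the standard Chernoff-style moment generating function argument. First I would introduce the martingale difference sequence $D_k \coloneqq X_k - X_{k-1}$ for $k = 1, \dots, t$, observing that $|D_k| \leq c_k$ almost surely by assumption and $\E[D_k \mid X_0, \dots, X_{k-1}] = 0$ by the martingale property. Writing $X_t - X_0 = \sum_{k=1}^{t} D_k$ then reduces the task to controlling the upper tail of a sum of bounded, conditionally mean-zero increments. Applying Markov's inequality to $e^{s(X_t - X_0)}$ for a parameter $s > 0$ to be optimized later gives
\[
\Pr\!\left(X_t - X_0 \geq \lambda\right) \;\leq\; e^{-s\lambda}\, \E\!\left[\exp\!\left(s \sum_{k=1}^{t} D_k\right)\right].
\]

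The key technical ingredient would be Hoeffding's lemma: for any random variable $Z$ with $\E[Z] = 0$ and $|Z| \leq c$, one has $\E[e^{sZ}] \leq \exp(s^2 c^2 / 2)$. I would derive this by using the convexity of $x \mapsto e^{sx}$ on $[-c, c]$ to upper-bound it by the secant line through $(\pm c, e^{\pm sc})$, taking expectations (whereupon the linear term vanishes since $\E[Z] = 0$), and then bounding the logarithm of the resulting expression by $s^2 c^2 / 2$ via a short calculus estimate on the cumulant.

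With Hoeffding's lemma in hand, I would peel off the increments one at a time using the tower property: conditioning on $X_0, \dots, X_{k-1}$, the lemma applies to $D_k$ and yields $\E[e^{s D_k} \mid X_0, \dots, X_{k-1}] \leq \exp(s^2 c_k^2 / 2)$. Iterating this for $k = t, t-1, \dots, 1$ produces
\[
\E\!\left[\exp\!\left(s \sum_{k=1}^{t} D_k\right)\right] \;\leq\; \exp\!\left(\frac{s^2}{2} \sum_{k=1}^{t} c_k^2\right).
\]
Optimizing $e^{-s\lambda + (s^2/2) \sum_k c_k^2}$ in $s$ by setting $s = \lambda / \sum_{k=1}^{t} c_k^2$ yields the one-sided bound $\Pr(X_t - X_0 \geq \lambda) \leq \exp\!\bigl(-\lambda^2 / (2 \sum_{k=1}^{t} c_k^2)\bigr)$. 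Applying the identical argument to the martingale $(-X_k)_{k}$ handles the lower tail, and a union bound across the two directions produces the factor of $2$ in the stated inequality. The only substantive step is Hoeffding's lemma; it is the crux of the argument, but being a self-contained one-variable estimate it poses no real difficulty beyond careful bookkeeping.
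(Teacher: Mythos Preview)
Your proposal is correct and is the standard textbook argument for Azuma's inequality. Note, however, that the paper does not give its own proof of this lemma at all: it is stated as a cited result from \cite{prob&comp} and used as a black box, so there is no ``paper's proof'' to compare against.
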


\subsection{Auxiliary Lemmas}

We present some useful auxiliary lemmas.

\begin{lemma}\label{lem:N_let}
For any \(S\subseteq L\) with \(|S|\le \alpha n\) and integer \(t\geq 0\), it holds that
	\[
		|N_{\le t}(S)|\ge \frac{\delta(t+1)-1}{t}\cdot c|S|.
	\]
\end{lemma}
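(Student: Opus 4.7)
The plan is to prove this by a standard double-counting argument on the edges leaving $S$, combined with the expansion property $|N(S)| \ge \delta c|S|$. Since each left vertex has degree $c$, the total number of edges incident to $S$ is exactly $c|S|$, and this count equals $\sum_{j \ge 1} j \cdot |N_j(S)|$, because a right vertex $v \in N_j(S)$ contributes exactly $j$ edges to $S$.

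The key trick is to split this sum at the threshold $t$. For $j \ge t+1$, I would use the lower bound $j \ge t+1$, giving
\[
c|S| \;=\; \sum_{j=1}^{t} j\,|N_j(S)| \;+\; \sum_{j\ge t+1} j\,|N_j(S)| \;\ge\; \sum_{j=1}^{t} j\,|N_j(S)| \;+\; (t+1)\bigl(|N(S)| - |N_{[1,t]}(S)|\bigr),
\]
where $N_{[1,t]}(S) \coloneq \bigcup_{j=1}^t N_j(S)$. Rearranging,
\[
(t+1)|N(S)| - c|S| \;\le\; \sum_{j=1}^{t}\bigl((t+1)-j\bigr)\,|N_j(S)| \;\le\; t \cdot |N_{[1,t]}(S)|,
\]
using $(t+1) - j \le t$ for $j \ge 1$. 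Invoking the expansion bound $|N(S)| \ge \delta c|S|$, this yields
\[
t\cdot|N_{[1,t]}(S)| \;\ge\; (t+1)\,\delta c|S| - c|S| \;=\; c|S|\bigl(\delta(t+1)-1\bigr).
\]
Dividing by $t$ and using $|N_{\le t}(S)| \ge |N_{[1,t]}(S)|$ (since $N_{\le t}(S) \supseteq N_{[1,t]}(S)$) delivers the claimed inequality. One should note that the case $t=1$ recovers the classical unique-neighbor bound $|N_1(S)| \ge (2\delta-1)c|S|$, which is reassuring.

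There is no real obstacle: the proof is a one-line edge-counting identity together with a single lower-bound substitution. The only subtle point is being careful with indices so that the slack $(t+1)-j$ is at most $t$ (which requires $j \ge 1$, i.e., we work with $N_{[1,t]}$ rather than $N_{\le t}$, and only at the end pass to the larger set $N_{\le t}$). The case $t=0$ is degenerate (the right-hand side is $\infty$ unless interpreted as vacuous), so implicitly the bound is only meaningful for $t \ge 1$; I would either remark on this or state the lemma for $t \ge 1$.
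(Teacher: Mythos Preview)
Your proof is correct and follows essentially the same double-counting argument as the paper: count the $c|S|$ edges out of $S$ as $\sum_i i|N_i(S)|$, lower-bound the high-multiplicity part by $(t+1)(|N(S)|-|N_{\le t}(S)|)$, apply the expansion bound $|N(S)|\ge \delta c|S|$, and rearrange. The only cosmetic difference is that you work with $N_{[1,t]}(S)$ and pass to $N_{\le t}(S)$ at the end, whereas the paper uses $N_{\le t}(S)$ directly (which is harmless since its coefficient $-t$ is nonpositive); your remark about the degenerate case $t=0$ is also apt, as the paper only ever invokes the lemma with $t=d_0-1\ge 1$.
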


\begin{proof}
	The claim follows from a double-counting argument. Since \(G\) is left-regular of degree \(c\), we have \(c|S| =|E(S,N(S))|\). On the other hand,
	\begin{align*}
		|E(S,N(S))| & =\sum_{i=1}^{d} i|N_i(S)|                         \\
		            & \ge  |N_{\le t}(S)|+(t+1) (|N(S)|-|N_{\le t}(S)|) \\
		            & =(t+1)|N(S)|-t|N_{\le t}(S)|                      \\
		            & \geq (t+1)\delta c|S|-t|N_{\le t}(S)|,
	\end{align*}
	where the last inequality follows from the expansion property of $G$.

	Therefore, we have \(c|S|\geq (t+1)\delta c|S|-t|N_{\le t}(S)|\).
	Rearranging this establishes the lemma.
\end{proof}

\begin{lemma}\label{lem:F_U_relation}
Let $x\in \F_2^n$ and $y\in T(G,C_0)$ such that $d_H(x,y)\leq \alpha n$. 
Let $F=F(x,y)$.
Then
\[
c|F|\ge |U(x)|\ge |N_{\leq d_0-1}(F)|\ge \frac{\delta d_0 -1}{d_0-1}\cdot c|F|.
\]
\end{lemma}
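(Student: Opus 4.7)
The plan is to prove the three inequalities separately in the chain, each being a short structural argument about how errors, unsatisfied checks, and the graph's neighborhood structure interact.

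For the leftmost inequality $c|F|\ge|U(x)|$, I would first note that any unsatisfied check $v\in U(x)$ must witness a discrepancy between $x$ and $y$ inside its local view, hence $N(v)\cap F\neq\emptyset$; equivalently, $U(x)\subseteq N(F)$. Counting the edges leaving $F$, since $G$ is left-regular of degree $c$, there are exactly $c|F|$ such edges, so $|N(F)|\leq c|F|$. Composing these two containments yields the bound.

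For the middle inequality $|U(x)|\ge|N_{\le d_0-1}(F)|$, I would show the stronger statement $N_{\le d_0-1}(F)\subseteq U(x)$ (with the convention used in \cref{lem:N_let} that $N_{\le t}(S)=\bigcup_{1\le j\le t}N_j(S)$). Fix $v$ adjacent to between $1$ and $d_0-1$ vertices of $F$. Then $x_{N(v)}$ and $y_{N(v)}$ agree outside those neighbors and disagree on them, so $1\le d_H(x_{N(v)},y_{N(v)})\le d_0-1$. Because $y\in T(G,C_0)$, we have $y_{N(v)}\in C_0$, and the minimum distance of $C_0$ forces $x_{N(v)}\notin C_0$ (any codeword is either equal to $y_{N(v)}$ or at distance at least $d_0$ from it). Hence $v\in U(x)$.

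For the rightmost inequality, I would simply apply \cref{lem:N_let} with $S=F$ and $t=d_0-1$: the hypothesis $|F|=d_H(x,y)\le\alpha n$ is in place, so
\[
|N_{\le d_0-1}(F)|\ge\frac{\delta d_0-1}{d_0-1}\cdot c|F|,
\]
which completes the chain.

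I do not anticipate a real obstacle here; the lemma is essentially a careful unpacking of definitions plus a single invocation of \cref{lem:N_let}. The only subtle point is making sure the convention for $N_{\le t}(\cdot)$ excludes the non-neighbors $N_0(\cdot)$, which is consistent with the way $N_{\le t}$ was used in the proof of \cref{lem:N_let}.
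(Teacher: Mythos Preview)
Your proposal is correct and follows essentially the same approach as the paper: both prove $U(x)\subseteq N(F)$ and bound $|N(F)|$ by edge-counting for the first inequality, establish $N_{\le d_0-1}(F)\subseteq U(x)$ via the minimum distance of $C_0$ for the second, and invoke \cref{lem:N_let} with $t=d_0-1$ for the third. Your explicit flagging of the $N_0$ convention is a nice touch that the paper leaves implicit.
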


\begin{proof}
Since \(G\) is left-regular of degree $c$, the number of edges incident to \(F \subseteq L\) is \(c|F|\). Each vertex in \(U(x) \subseteq R\) is incident to at least one of these edges. Therefore, \(|U(x)| \leq c|F|\).

Since the minimum distance of \(C_{0}\) is \(d_{0}\), every \(v \in R\) that is a common neighbor of no more than \(d_0 - 1\) vertices in $F$ must be an unsatisfied check, i.e., \(N_{\leq d_0-1}(F) \subseteq U(x)\). It follows that \(|N_{\leq d_0-1}(F)| \leq |U(x)|\).

Finally, applying \cref{lem:N_let} with \(t = d_0 - 1\) and \(S = F\) gives the inequality 
\[
|N_{\leq d_0-1}(F)| \geq \frac{\delta d_0 - 1}{d_0 - 1} \cdot c|F|.\qedhere
\]
\end{proof}

\begin{lemma}\label{lem:N_get}
For any $S\subseteq L$,
\[
|N_{\ge t}(S)|\leq \frac{c}{t}|S|.
\]
\end{lemma}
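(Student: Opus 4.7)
The plan is a straightforward double-counting argument on edges between $S$ and its neighborhood, mirroring the counting used in the proof of \cref{lem:N_let}. First I would observe that since $G$ is left-regular of degree $c$, the total number of edges incident to $S$ is exactly $c|S|$, i.e.\ $|E(S, N(S))| = c|S|$.

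Next I would bound this edge count from below by restricting attention to the contribution of vertices in $N_{\ge t}(S)$. By definition, every $v \in N_{\ge t}(S)$ is adjacent to at least $t$ vertices of $S$, so each such $v$ contributes at least $t$ edges to $E(S, N(S))$. Therefore
\[
c|S| = |E(S, N(S))| \ge \sum_{v \in N_{\ge t}(S)} |N(v) \cap S| \ge t \cdot |N_{\ge t}(S)|.
\]
Dividing by $t$ yields the claimed inequality $|N_{\ge t}(S)| \le \frac{c}{t}|S|$. There is essentially no obstacle here; the only minor care is to note the statement is trivial when $t = 0$ (in which case one interprets $c/t$ as $+\infty$ or simply excludes this case), so the argument is intended for $t \ge 1$.
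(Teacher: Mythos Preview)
Your proposal is correct and follows essentially the same double-counting argument as the paper: count the $c|S|$ edges incident to $S$, observe each vertex in $N_{\ge t}(S)$ accounts for at least $t$ of them, and divide.
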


\begin{proof}
Since $G$ is left-regular of degree $c$, we have 
$|E(S,N(S))|=c|S|$. Moreover, by the definition of $N_{\geq t}(\cdot)$, we have
 $|E(S,N(S))|\geq t |N_{\geq t}(S)|$. The lemma follows.
\end{proof}

\begin{lemma}\label{lem:dis_lower}
    Let $T(G, C_0)$ be a Tanner code where $G$ is a $(c, d, \alpha, \delta)$-bipartite expander and the inner code $C_0$ has distance $d_0$. If $\delta d_0 >1$, then the distance of $T(G, C_0)$ is greater than $\alpha n$.
\end{lemma}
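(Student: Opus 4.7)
The plan is to argue by contradiction: suppose there is a nonzero codeword $x \in T(G, C_0)$ of Hamming weight at most $\alpha n$. I would invoke \cref{lem:F_U_relation} with this $x$ together with the all-zero codeword $y = 0 \in T(G, C_0)$. The hypothesis $d_H(x, 0) = \wt(x) \leq \alpha n$ is satisfied, so the lemma applies, and $F = F(x, 0)$ is simply the support of $x$, which is non-empty since $x \neq 0$.

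The key observation is that $x$ itself is a codeword, so every local view $x_{N(v)}$ lies in $C_0$; equivalently, the set of unsatisfied checks $U(x)$ is empty. Substituting $|U(x)| = 0$ into the chain of inequalities in \cref{lem:F_U_relation} yields
\[
0 \;\geq\; \frac{\delta d_0 - 1}{d_0 - 1} \cdot c|F|,
\]
which, since $c|F| > 0$, forces $\delta d_0 \leq 1$ and contradicts the hypothesis $\delta d_0 > 1$.

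I do not anticipate a real obstacle; the only mild technicality is ensuring the denominator $d_0 - 1$ is positive when applying \cref{lem:F_U_relation}. This is automatic: $\delta \leq 1$ and $\delta d_0 > 1$ together imply $d_0 \geq 2$. Aside from this, the argument is essentially a one-line corollary of the $F$--$U$ relation. If one preferred a self-contained proof avoiding \cref{lem:F_U_relation}, essentially the same argument works by directly observing that every $v \in R$ with $N(v) \cap F \neq \emptyset$ must have $|N(v) \cap F| \geq d_0$ (else $x_{N(v)}$ would be a nonzero $C_0$-codeword of weight below $d_0$), and then double-counting $c|F| \geq d_0 |N(F)| \geq d_0 \delta c |F|$ to derive the same contradiction.
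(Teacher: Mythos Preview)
The proposal is correct and follows essentially the same approach as the paper. The paper argues directly via \cref{lem:N_let} to find a check $u$ with $|N(u)\cap F|<d_0$ (hence $x_{N(u)}\notin C_0$), while you route the same inequality through \cref{lem:F_U_relation} and the observation $U(x)=\emptyset$; since \cref{lem:F_U_relation} is itself proved from \cref{lem:N_let} together with exactly that ``few neighbors $\Rightarrow$ unsatisfied'' step, the two arguments are the same in substance.
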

\begin{proof}
For any linear code, its distance is equal to the minimum Hamming weight among all of its codewords.

Assume, to the contrary, that $T(G, C_0)$ has a nonzero codeword $x$ such that $\wt(x) \leq \alpha n$. Let $F = \{i \in [n] : x_i \neq 0\}$, whose size is $\wt(x)$. By \cref{lem:N_let}, we have
\[
N_{\leq d_0 - 1}(F) \geq \frac{\delta d_0 - 1}{d_0 - 1} \cdot c |F| > 0.
\]
This implies that there exists $u \in R$ that has fewer than $d_0$ neighbors in $F$. Consequently, $\wt(x_{N(u)}) < d_0$. 

As the minimum distance of $C_0$ is $d_0$, we have $x_{N(u)} \notin C_0$, which contradicts the assumption that $x \in T(G, C_0)$. Thus, the lemma is proven.
\end{proof}

\section{Randomized Decoding}
\label{sec:rand}

In this section, we present an improved randomized flipping algorithm for the decoding regime \( \delta d_0 > 2 \), and then extend it to a full randomized decoding algorithm.

Our flipping algorithm follows the approach of \cite{cheng2024expandercodecorrectomegan}, which is based on the following idea: For each unsatisfied check \( v \in U(x) \), if \( x_{N(v)} \) is sufficiently close to a codeword \( c \in C_{0} \) of the inner code, it is likely that \( x_{N(v)} \) should be corrected to \( c \). We let each such check \( v \) cast a ``vote'' on which bits to flip.
Then, each bit is flipped with a probability determined by the votes it receives. This process corrects a constant fraction of errors. By repeating it logarithmically many times, the received word can be corrected with high probability.

Our improvement is achieved by allowing each \( v \) to send a weighted vote based on \( d_H(x_{N(v)}, y)\), where $y\in C_0$ is the closest codeword to $x_{N(v)}$, rather than using an unweighted vote when \( d_H(x_{N(v)}, y) < d_0/3 \), as was done in \cite{cheng2024expandercodecorrectomegan}.\footnote{At a high level, this bears some similarity with the GMD decoding algorithm for concatenated codes \cite{forneyGeneralizedMinimumDistance1966}, where a large $d_H(x_{N(v)},y)$ suggests that $y$ is likely incorrect.} This modification enables a tighter analysis.

\subsection{Randomized Flipping}

In the following, let \(\decode(x)\) denote the function that, given \(x \in \F_2^d\), returns the codeword \(y \in C_0\) closest to \(x\) in Hamming distance, with ties broken by selecting the lexicographically smallest \(y\).

We now present the pseudocode of the randomized flipping algorithm.

\begin{algorithm}[H]
	\caption{$\mathsf{RandFlip}(x)$}\label{alg:randflip}
	\begin{algorithmic}[1]
		\Require{$x=(x_1,\dots,x_n)\in \mathbb{F}_2^n$, where $n=|L|$.}
		\State{$t\gets  \frac{d_0}{2}$}
		\State{$(p_1,\dots,p_n)\gets (0, \cdots, 0)\in \R^n$}
		\For{each $v\in R$}
		\State{$w_v\gets \decode (x_{N(v)})$}
		\If{$1\le d_H(w_v,x_{N(v)})
			<t$}
		\State{Choose the smallest $i\in N(v)$ where $w_v$ and $x_{N(v)}$ differ}\footnotemark
		\State{$p_i\gets p_i+\frac{t-d_H (w_v,x_{N(v)})}{ct}$}
		\EndIf
		\EndFor
		\For{each $i\in [n]$}
		\State{Flip $x_i$ with probability $p_i$}
		\EndFor
		\State{\Return $x$}
	\end{algorithmic}
\end{algorithm}

\footnotetext{Implicitly, we are viewing $w_v$ and $x_{N(v)}$ as elements in $\F_2^{N(v)}: N(v)\to\F_2$ by identifying $\F_2^d$ with $\F_2^{N(v)}$ through the total ordering on $N(v)$ defined in \cref{def:tanner_code}.}

We first check that the values $p_i$ are within $[0,1]$, ensuring the validity of Line~11.

\begin{lemma}
	For \(i \in [n]\), we have \(p_i \in [0, 1]\) at Line~11.
\end{lemma}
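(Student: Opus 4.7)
The plan is to bound $p_i$ from both sides by examining the individual contributions made to $p_i$ during the loop over $v \in R$.

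For the lower bound $p_i \ge 0$, I would simply note that $p_i$ starts at $0$ and every update on Line~7 adds the quantity $\frac{t - d_H(w_v, x_{N(v)})}{ct}$, which is strictly positive because the preceding if-condition $d_H(w_v, x_{N(v)}) < t$ guarantees $t - d_H(w_v, x_{N(v)}) > 0$, while $c,t > 0$.

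For the upper bound $p_i \le 1$, I would count how many checks $v \in R$ can actually contribute to $p_i$ and bound each contribution. A check $v$ contributes to $p_i$ only if $i \in N(v)$ (indeed, $i$ must be the smallest element of $N(v)$ on which $w_v$ and $x_{N(v)}$ disagree). Since $G$ is left-regular of degree $c$, the vertex $i \in L$ has exactly $c$ neighbors in $R$, so at most $c$ checks contribute. For each contributing $v$, the if-condition forces $d_H(w_v, x_{N(v)}) \ge 1$, so the increment satisfies
\[
\frac{t - d_H(w_v, x_{N(v)})}{ct} \le \frac{t-1}{ct}.
\]
Summing gives $p_i \le c \cdot \frac{t-1}{ct} = 1 - \frac{1}{t} < 1$, as desired.

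There is no real obstacle here: the lemma is a sanity-check bookkeeping statement, and its proof only uses (i) the left-regularity of $G$ to cap the number of contributing checks at $c$, and (ii) the guard $1 \le d_H(w_v, x_{N(v)}) < t$ to control the magnitude of each contribution. The only mildly subtle point is remembering that the tiebreaking rule on Line~6 ensures each qualifying check updates exactly one coordinate, so the total mass that can land on a single $p_i$ is genuinely bounded by the number of checks incident to $i$ times the per-check increment cap.
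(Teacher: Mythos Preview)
Your proof is correct and follows essentially the same approach as the paper: both bound the number of contributing checks by $c$ via left-regularity and bound each increment using the guard $1 \le d_H(w_v,x_{N(v)}) < t$. Your version is slightly sharper (you obtain $p_i \le 1 - \tfrac{1}{t}$ rather than just $p_i \le 1$), but the argument is the same.
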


\begin{proof}
	Each \(p_i\) is the sum of at most \(\deg(i)=c\) terms of the form \(\frac{t - d_H(w_v, x_{N(v)})}{ct}\), where each term lies between \(0\) and \(\frac{1}{c}\). The lemma follows.
\end{proof}

The following theorem states that the algorithm $\mathsf{RandFlip}(x)$ is expected to correct at least a constant fraction of errors of $x$ under certain conditions.

\begin{theorem}\label{thm:randflip_fraction}
	Assume $d_0 \delta>2$. Let $\eps_0=\frac{d_0}{2}-\frac{1}{\delta}>0$.
	Let $x\in\F_2^n$ and $y\in T(G,C_0)$ such that $d_H(x,y)\le \alpha n$.
	Let $x'$ be the output of \cref{alg:randflip} with $x$ as input.
	Then $\E[d_H(x',y)|]\leq (1-\frac{\eps_0\delta}{t}) d_H(x,y)$.
\end{theorem}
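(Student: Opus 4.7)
The plan is to analyze the expected change in distance directly in terms of the $p_i$'s. Let $F = F(x,y)$, so $|F| = d_H(x,y)$. Since each bit $i$ is flipped independently with probability $p_i$, and flipping $i \in F$ decreases $d_H(\cdot,y)$ by $1$ while flipping $i \notin F$ increases it by $1$, linearity of expectation gives
\[
\E[d_H(x',y)] - d_H(x,y) \;=\; \sum_{i \notin F} p_i \;-\; \sum_{i \in F} p_i.
\]
Thus the theorem reduces to showing $\sum_{i \in F} p_i - \sum_{i \notin F} p_i \geq \frac{\eps_0 \delta}{t}\,|F|$, with $t = d_0/2$ and $\eps_0 \delta = t\delta - 1$.

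Next I would classify each check $v \in R$ by $(|N(v)\cap F|,\, k_v)$ where $k_v := d_H(w_v, x_{N(v)})$. Call $v$ \emph{good} if $1 \leq |N(v) \cap F| < t$: then $x_{N(v)}$ lies within distance strictly less than $d_0/2$ of $y_{N(v)}\in C_0$, so $y_{N(v)}$ is the unique nearest codeword, forcing $w_v = y_{N(v)}$, $k_v = |N(v)\cap F|$, and the chosen index $i(v) \in N(v)\cap F \subseteq F$. Call $v$ \emph{bad} if $|N(v)\cap F| \geq t$ and $1 \leq k_v < t$: then $w_v \neq y_{N(v)}$, so $d_H(w_v, y_{N(v)}) \geq d_0 = 2t$, and the triangle inequality $d_H(w_v,y_{N(v)}) \leq k_v + |N(v)\cap F|$ yields the key inequality
\[
t - k_v \;\leq\; |N(v) \cap F| - t.
\]
All remaining $v$ fail the condition on Line~5 and contribute $0$ to every $p_i$.

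Good checks contribute exactly $\frac{t - |N(v)\cap F|}{ct}$ to $\sum_{i \in F} p_i$. In the worst case, a bad check dumps its entire weight $\frac{t - k_v}{ct}$ into $\sum_{i \notin F} p_i$; using the triangle bound, that weight is at most $\frac{|N(v)\cap F| - t}{ct}$. Therefore
\[
\sum_{i \in F} p_i - \sum_{i \notin F} p_i \;\geq\; \frac{1}{ct}\sum_{v \in \mathrm{good}\cup\mathrm{bad}}\bigl(t - |N(v)\cap F|\bigr).
\]
Extending the sum to all $v$ with $|N(v)\cap F| \geq 1$ only subtracts non-positive terms (since the remaining $v$'s all satisfy $|N(v)\cap F| \geq t$), so
\[
\sum_{v \in \mathrm{good}\cup\mathrm{bad}}\bigl(t - |N(v)\cap F|\bigr) \;\geq\; \sum_{v:\, |N(v)\cap F|\geq 1}\bigl(t - |N(v)\cap F|\bigr) \;=\; t\,|N(F)| - c|F|,
\]
where the last equality uses $\sum_v |N(v)\cap F| = c|F|$. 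By the expansion property $|N(F)| \geq \delta c |F|$, the right-hand side is at least $c|F|(\delta t - 1) = c|F|\,\eps_0\delta$. Dividing by $ct$ gives the required bound, completing the proof.

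The main obstacle is the bad-check contribution. A naive estimate that bounds $|R_{\mathrm{bad}}| \leq |N_{\geq t}(F)| \leq c|F|/t$ and multiplies by the maximum per-vote weight $\frac{t-1}{ct}$ leaves a shortfall and does not recover the exact factor $\frac{t\delta - 1}{t}$. The trick that closes the gap is the triangle-inequality observation $t - k_v \leq |N(v)\cap F| - t$ for bad $v$, which rewrites the harmful weight as an over-coverage penalty and merges cleanly with the positive good-check terms so that the whole sum telescopes into $t|N(F)| - c|F|$; from there the expander property finishes everything in one line.
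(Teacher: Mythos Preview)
Your proof is correct and follows essentially the same approach as the paper: both reduce to lower-bounding $\sum_{i\in F}p_i-\sum_{i\notin F}p_i$ by $\frac{1}{ct}\sum_{v\in N(F)}(t-|N(v)\cap F|)=\frac{1}{ct}(t|N(F)|-c|F|)$, and both obtain the crucial per-check bound for ``bad'' $v$ via the triangle inequality $d_0\le d_H(w_v,y_{N(v)})\le k_v+|N(v)\cap F|$ (the paper packages this as \cref{claim:dk}). The only difference is cosmetic---your good/bad dichotomy versus the paper's four cases indexed by $k=|N(v)\cap F|$.
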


To prove \cref{thm:randflip_fraction}, we need the following lemma. Recall that $F(x,y)=\{i\in [n]: x_i\neq y_i\}$.

\begin{lemma}\label{claim:dk}
Let $x\in\F_2^n$, $y\in T(G,C_0)$, and $F=F(x,y)$. Let $v\in N_k(F)$ for some integer $k$. Let $w_v$ be as in \cref{alg:randflip}, i.e., $w_v=\decode (x_{N(v)})$.
If $w_v= y_{N(v)}$, then $d_H(w_v, x_{N(v)})=k$. On the other hand, if $w_v\neq  y_{N(v)}$, then $d_0-k\leq d_H(w_v, x_{N(v)})\leq k$. The latter case occurs only if $k\geq \frac{d_0}{2}=t$, i.e., $v\in N_{\geq t}(F)$.
\end{lemma}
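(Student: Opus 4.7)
The plan is to unpack the three claims directly from the definition of $N_k(F)$, the decoding rule, and the triangle inequality; no combinatorial machinery beyond the basic definitions should be needed.

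First I would record the key identity $d_H(x_{N(v)}, y_{N(v)}) = k$. This is immediate: the coordinates of $N(v)$ on which $x$ and $y$ disagree are precisely the neighbors of $v$ that lie in $F$, and there are exactly $k$ such neighbors by the assumption $v \in N_k(F)$. Note also that $y_{N(v)} \in C_0$ because $y \in T(G, C_0)$, so $y_{N(v)}$ is one candidate codeword for $\decode$ to return.

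Case~1 is then immediate: if $w_v = y_{N(v)}$, then $d_H(w_v, x_{N(v)}) = d_H(y_{N(v)}, x_{N(v)}) = k$. For Case~2, suppose $w_v \neq y_{N(v)}$. Both belong to $C_0$, so $d_H(w_v, y_{N(v)}) \geq d_0$. The upper bound $d_H(w_v, x_{N(v)}) \leq k$ follows from the fact that $w_v$ is a closest codeword to $x_{N(v)}$, so it cannot be farther than $y_{N(v)}$, which is at distance $k$. The lower bound follows from the triangle inequality applied in $\F_2^d$: $d_0 \leq d_H(w_v, y_{N(v)}) \leq d_H(w_v, x_{N(v)}) + d_H(x_{N(v)}, y_{N(v)}) = d_H(w_v, x_{N(v)}) + k$, whence $d_H(w_v, x_{N(v)}) \geq d_0 - k$.

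Finally, to establish that Case~2 forces $k \geq t = d_0/2$, I would argue by contrapositive: if $k < d_0/2$, then $d_0 - k > k$, so the lower and upper bounds from the previous paragraph would be inconsistent, meaning Case~2 cannot occur. Equivalently, any $w' \in C_0$ with $w' \neq y_{N(v)}$ would satisfy $d_H(w', x_{N(v)}) \geq d_0 - k > k = d_H(y_{N(v)}, x_{N(v)})$, making $y_{N(v)}$ the strict closest codeword (no tie to break), and hence $\decode(x_{N(v)}) = y_{N(v)}$, contradicting the case assumption. There is no real obstacle here; the only subtlety to flag is that when $k = t = d_0/2$ a genuine tie between $y_{N(v)}$ and some other codeword is possible, which is exactly why the threshold is $k \geq t$ and not $k > t$, and this is consistent with the lex tie-breaking rule in the definition of $\decode$.
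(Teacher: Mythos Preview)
Your proof is correct and follows essentially the same approach as the paper: both establish $d_H(x_{N(v)},y_{N(v)})=k$, use that $w_v$ is a nearest codeword to get the upper bound $k$, and use the triangle inequality together with $d_H(w_v,y_{N(v)})\ge d_0$ to get the lower bound $d_0-k$. You additionally spell out the contrapositive for the final sentence and comment on the tie at $k=t$, which the paper leaves implicit, but the argument is the same.
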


\begin{proof}
By the definition of $F$ and the choice of $v$, we have $d_H(y_{N(v)}, x_{N(v)})=k$.
	As $y\in T(G, C_0)$, we have $y_{N(v)}\in C_0$.
	As $w_v$ is a vector in $C_0$ closest to $x_{N(v)}$, we have
	\[
		d_H(w_v, x_{N(v)})\leq  d_H(y_{N(v)}, x_{N(v)})=k.
	\]
	If $w_v=y_{N(v)}$, we have $d_H(w_v, x_{N(v)})=d_H(x_{N(v)}, y_{N(v)})=k$.
	On the other hand, if $w_v\neq y_{N(v)}$, then the distance between these two codewords of $C_0$ is at least $d_0$, which implies
	$d_H(w_v, x_{N(v)})\geq d_H(w_v, y_{N(v)})-d_H(x_{N(v)}, y_{N(v)})\geq d_0-k$. This proves the lemma.
\end{proof}

Now we are ready to prove \cref{thm:randflip_fraction}.

\begin{proof}[Proof of \cref{thm:randflip_fraction}]

Let $F=\{i\in [n]:x_i\neq y_i\}$, whose size is $d_H(x,y)\leq \alpha n$.
	By definition, we have
	\begin{equation}\label{eq:dh}
			d_H(x',y)  =|F|-|\{i\in F: x_i\text{ is flipped}\}|+|\{i\in [n]\setminus F: x_i \text{ is flipped}\}|.
	\end{equation}

	By \eqref{eq:dh} and linearity of expectation, we have
	\begin{equation}\label{eq:dh2}
		\E[d_H(x',y)]=|F|-\left(\sum_{i\in F} p_i-\sum_{i\in [n]\setminus F} p_i\right).
	\end{equation}

	Consider arbitrary $v\in R$. 
	In the iteration of the first \texttt{for} loop (Lines 3--9) corresponding to $v$, some $p_i$ may increase by $\frac{t-d_H (w_v,x_{N(v)})}{ct}$. We analyze how this affects the quantity $\sum_{i\in F} p_i-\sum_{i\in [n]\setminus F} p_i$. Suppose $v$ has $k$ neighbors in $F$, i.e., $v\in N_k(F)$.

	\begin{description}
		\item[Case 1:] \(k = 0\). In this case, \(d_H(w_v, x_{N(v)}) = 0\). Due to the condition \(1 \leq d_H(w_v, x_{N(v)}) < t\) at Line~5, the iteration corresponding to \(v\) does not affect \(\sum_{i \in F} p_i - \sum_{i \in [n] \setminus F} p_i\).

		\item[Case 2:] \(1 \leq k < t\). In this case, we have \(d_H(w_v, x_{N(v)}) = k \in [1, t)\) and \(w_v = y_{N(v)}\) by \cref{claim:dk}. In the iteration corresponding to \(v\), the index \(i\) chosen at Line~6 is in \(F\) since \(w_v = y_{N(v)}\). Thus, this iteration contributes exactly
		      \[
			      \frac{t - d_H(w_v, x_{N(v)})}{ct} = \frac{t - k}{ct}
		      \]
		      to \(\sum_{i \in F} p_i - \sum_{i \in [n] \setminus F} p_i\).

		\item[Case 3:] \(t \leq k< d_0\). In this case, we have \(d_H(w_v, x_{N(v)}) \geq d_0 - k = 2t - k\) by \cref{claim:dk}. Thus, the iteration corresponding to \(v\) contributes at least
		      \[
			      -\frac{t - d_H(w_v, x_{N(v)})}{ct} \geq -\frac{t - (2t - k)}{ct} = \frac{t - k}{ct}
		      \]
		      to \(\sum_{i \in F} p_i - \sum_{i \in [n] \setminus F} p_i\).

		\item[Case 4:] \(k \geq d_0\). In this case, we have \(d_H(w_v, x_{N(v)}) \geq 0\). Thus, the iteration corresponding to \(v\) contributes at least
		      \[
			      -\frac{t - d_H(w_v, x_{N(v)})}{ct} \geq -\frac{t}{ct} \geq \frac{t - k}{ct},
		      \] to \(\sum_{i \in F} p_i - \sum_{i \in [n] \setminus F} p_i\),
		      where the last inequality uses the fact that \(k \geq d_0 = 2t\).
	\end{description}
	By the above discussion, We have
	\begin{equation}\label{eq:sumofpi}
		\sum_{i\in F} p_i-\sum_{i\in [n]\setminus F} p_i\geq \sum_{k=1}^{d}\frac{t-k}{ct}|N_k(F)|.
	\end{equation}

	Next, we establish a lower bound on the RHS of \eqref{eq:sumofpi}.
	By the definition of $N_k(\cdot)$ and the fact that $G$ is left-regular of degree $c$, we have
	\begin{equation}\label{eq:eq1}
		\sum_{k=1}^{d} k|N_k(F)|=|E(F,N(F))|=c|F|.
	\end{equation}

	As $|F|\leq \alpha n$ and $G$ is a $(c,d,\alpha,\delta)$-bipartite expander, we have
	\begin{equation}\label{eq:eq2}
		\sum_{k=1}^{d} |N_k(F)|=|N(F)|\ge \delta c|F|.
	\end{equation}

	Multiplying both sides of \eqref{eq:eq2} by \(t = \frac{1}{\delta} + \epsilon_0\) and subtracting both sides of \eqref{eq:eq1}, we obtain
	\[
		\sum_{k=1}^{d}(t-k)|N_k(F)|\ge \epsilon_0 \delta c |F|,
	\]
	or equivalently,
	\begin{equation}\label{eq:eq3}
		\sum_{k=1}^{d}\frac{t-k}{ct}|N_k(F)|\ge \frac{\epsilon_0 \delta}{t}  |F|.
	\end{equation}

	Combining \eqref{eq:sumofpi} and \eqref{eq:eq3} shows
    \begin{equation}\label{eq:eq4}
        \sum_{i\in F} p_i-\sum_{i\in [n]\setminus F} p_i\geq \frac{\epsilon_0 \delta}{t} |F|.
    \end{equation}
    And \eqref{eq:dh2} and \eqref{eq:eq4} together yield $\E[d_H(x',y)]\leq (1-\frac{\eps_0\delta}{t})|F|=(1-\frac{\eps_0\delta}{t}) d_H(x,y)$, as desired.
\end{proof}

While \cref{alg:randflip} is expected to reduce the number of corrupt bits by a constant factor, the number may increase depending on the chosen randomness. Nevertheless, the following lemma shows that any such increase will not be too large. This result will be used later.

\begin{lemma}\label{lem:max_increase}
Let $x\in\F_2^n$, $y\in T(G,C_0)$, and $F=\{i\in [n]: x_i\neq y_i\}$.
The number of $i\in [n]\setminus F$ such that $p_i>0$ at the end of $\mathsf{RandFlip}(x)$ is at most $\frac{c}{t}|F|$.
In particular, for $x'=\mathsf{RandFlip}(x)$, we always have $d_H(x',y)\leq (1+\frac{c}{t})d_H(x,y)$.
\end{lemma}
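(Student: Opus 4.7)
The plan is to track, for each index $i \in [n] \setminus F$ with $p_i > 0$, a witness vertex $v \in R$ that caused the increase, and then bound the number of possible witnesses using earlier lemmas.

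First I would argue the following: if $p_i > 0$ for some $i \notin F$ at the end of the first loop, then there must exist some $v \in N(i) \cap R$ such that, in the iteration corresponding to $v$, the condition at Line~5 held and $i$ was selected at Line~6. In particular, $w_v$ and $x_{N(v)}$ must differ at coordinate $i$. Because $i \notin F$ means $x_i = y_i$, the vector $w_v$ must disagree with $y_{N(v)}$ at position $i$, hence $w_v \neq y_{N(v)}$. By \cref{claim:dk}, this forces $v \in N_{\geq t}(F)$.

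Next, I would observe that each $v \in R$ is processed exactly once in the first loop and picks at most one coordinate at Line~6, so a given $v$ can witness at most one such $i$. Therefore the map $i \mapsto v$ described above is injective, giving
\[
    \bigl|\{i \in [n]\setminus F : p_i > 0\}\bigr| \leq |N_{\geq t}(F)|.
\]
Applying \cref{lem:N_get} with $S = F$ yields $|N_{\geq t}(F)| \leq \frac{c}{t}|F|$, which establishes the first claim.

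For the second claim, note that a flip at an index $i \notin F$ only occurs when $p_i > 0$, so the number of indices outside $F$ that are flipped is at most $\frac{c}{t}|F|$. Using \eqref{eq:dh} and bounding the number of flips inside $F$ trivially by $0$ from below,
\[
    d_H(x', y) \leq |F| + \bigl|\{i \in [n]\setminus F : x_i \text{ is flipped}\}\bigr| \leq \bigl(1 + \tfrac{c}{t}\bigr)|F|,
\]
which is the desired deterministic bound. No concentration machinery or expansion lower bounds are needed here; the only non-trivial ingredients are the structural fact from \cref{claim:dk} and the upper bound from \cref{lem:N_get}. The main (minor) subtlety to watch out for is justifying the injectivity of the witness assignment, namely that Line~6 selects a single coordinate per $v$, which is what decouples the count of affected indices from their multiplicity of votes.
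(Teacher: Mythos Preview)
Your proposal is correct and follows essentially the same approach as the paper's proof: identify for each $i\in[n]\setminus F$ with $p_i>0$ a vertex $v$ that selected it at Line~6, deduce $w_v\neq y_{N(v)}$ and hence $v\in N_{\geq t}(F)$ via \cref{claim:dk}, and finish with \cref{lem:N_get}. Your explicit injectivity argument for the witness map and your derivation of the second claim via \eqref{eq:dh} are slightly more detailed than the paper's (which leaves these implicit), but the substance is identical.
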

\begin{proof}
Consider \(v \in U(x)\) such that the corresponding iteration increases \(p_i\) from zero to nonzero for some \(i \in [n] \setminus F\). By the way \(i\) is chosen at Line~6, we know $w_v$ and $x_{N(v)}$ differ at this bit. As \(i \in [n] \setminus F\), we know $x_{N(v)}$ and $y_{N(v)}$ agree at this bit. So \(w_v \neq y_{N(v)}\). By \cref{claim:dk}, this occurs only if \(v \in N_{\geq t}(F)\). Finally, by \cref{lem:N_get}, the number of \(v \in N_{\geq t}(F)\) is at most \(\frac{c}{t}|F|\). 
\end{proof}

\subsection{Time Complexity of Randomized Flipping}

It is straightforward to implement \cref{alg:randflip} in $O(n)$ time. However, since the main algorithm runs it $O(\log n)$ times, this is not sufficient to achieve overall linear time. Instead, we adapt \cref{alg:randflip} to run in time proportional to the number of corrupt bits, following analyses similar to those in \cite{sipser1996expander, dowlinggao18, cheng2024expandercodecorrectomegan}. 
This adjustment ensures that the total runtime across all executions of \cref{alg:randflip} forms a geometric series that sums to $O(n)$. The details are provided below for completeness.

In the following, let $x\in \F_2^n$ and $y$ be a code of $T(G, C_0)$.

\paragraph{Maintaining the set $U(x)$.}

First, we design a data structure to store the set $U(x)$ of unsatisfied checks, ensuring the following properties:
\begin{itemize}
    \item The elements in $U(x)$ can be enumerated in $O(|U(x)|)$ time.
    \item Given $v\in R$, we can determine whether $v\in U(x)$ in constant time. 
    \item Each time a bit of $x$ is flipped, $U(x)$ is updated in constant time.
\end{itemize}
Specifically, we store the elements of $U(x)$ via a linked list. Additionally, we maintain an array that records for each $v\in R$: 
\begin{enumerate}
    \item Whether $v\in U(x)$.
    \item The pointer to $v$ in the linked list.
\end{enumerate} 
The above information can be accessed in constant time for any given $v\in R$.
When flipping a bit $i\in [n]$, we update $U(x)$ in constant time by iterating over each $v\in N(i)$, updating its status in the array, and adding or removing $v$ from the linked list as needed.

\paragraph{Maintaining the set $P$.}
Define
\[
P:=\{i \in [n]: p_i\neq 0\}.
\]
By similarly using an array and a linked list, we maintain the set $P$ together with the flipping probabilities $p_i$ such that:
\begin{itemize}
    \item The elements in $P$ can be enumerated in $O(|P|)$ time.
    \item Given $i\in [n]$, we can find $p_i$ in constant time. 
    \item Each time some $p_i$ changes, the set $P$ is updated in constant time.
\end{itemize}

\paragraph{Initialization and clean-up.}

The array and the linked list used to maintain $U(x)$ are treated as global variables. They are initialized in $O(n)$ time at the start of the main algorithm, described in \cref{sec:rand-iter}, and are maintained throughout the algorithm. 

At the start of \cref{alg:randflip}, we require $p_i=0$ for all $i\in [n]$ and $P=\emptyset$. However, manually assigning these variables as in Line~3 of \cref{alg:randflip} would take $\Theta(n)$ time. Instead, we treat the array and the linked list used to maintain $p_i$ and $P$ as global variables, which are initialized in $O(n)$ time at the start of the main algorithm. Then, at the end of each execution of \cref{alg:randflip}, we reset the array and the linked list to their initial states so that they can be reused in the next execution of \cref{alg:randflip}. The resetting takes time linear in $|P|$.

\paragraph{Enumerating fewer elements.}

Beyond the previously described measures, we further modify \cref{alg:randflip} to enumerate only the unsatisfied checks and the bits with nonzero flipping probabilities.

Specifically, we replace $v\in R$ at Line~3 of \cref{alg:randflip} with $v\in U(x)$. To see that this change does not affect the flipping probabilities $p_i$, observe that for \(v \in R \setminus U(x)\), $x_{N(v)}$ is a codeword of $C_0$. Consequently, \(w_v=\mathsf{Decode}(x_{N(v)})\) is simply \(x_{N(v)}\) itself, implying that $d_H(w_v, (x_{N(v)})=0$. So $p_i$ is unchanged in the iteration corresponding to $v$.

Moreover, we modify $i\in [n]$ at Line~11 of \cref{alg:randflip} to $i\in P$, where $P=\{i\in [n]: p_i\neq 0\}$.

With these adjustments, the first loop executes at most $|U(x)|\leq c |F(x,y)|$ times, and the second loop at most $|P|\leq |U(x)|\leq c|F(x,y)|$ times. The function $\mathsf{Decode}(x_{N(v)})$ can be computed in constant time, e.g., by brute-force search. Thus, we have:

\begin{lemma}\label{lem:randflip_runtime}
 \cref{alg:randflip} can be implemented to run in $O(|F(x,y)|)$ time, where $x\in\F_2^n$ is the input and $y$ is any codeword of $T(G,C_0)$.
\end{lemma}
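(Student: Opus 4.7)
The plan is to assemble the facts already laid out in the preceding paragraphs into a clean accounting of the running time. The guiding principle is that every operation performed by Algorithm \ref{alg:randflip} can be charged either to $U(x)$ or to $P=\{i\in[n]:p_i\neq 0\}$, and that both sets have size $O(|F(x,y)|)$ via Lemma \ref{lem:F_U_relation}.

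First, I would justify that restricting the outer \texttt{for} loop to $v\in U(x)$ and the inner \texttt{for} loop to $i\in P$ produces identical output to the original Algorithm \ref{alg:randflip}. If $v\notin U(x)$, then $x_{N(v)}\in C_0$, so $w_v=x_{N(v)}$ and $d_H(w_v,x_{N(v)})=0$, which fails the guard $1\le d_H(w_v,x_{N(v)})<t$ at Line~5; hence no update to any $p_i$ occurs in such iterations. For the flipping loop, bits with $p_i=0$ are never flipped regardless of whether we attempt the coin flip, so restricting to $i\in P$ is harmless.

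Second, I would bound the per-call cost. The outer loop runs $|U(x)|$ times; within each iteration, $\decode(x_{N(v)})$ is $O(1)$ since $d$ is a constant and brute-force search over $C_0\subseteq\F_2^d$ has constant cost, and the update to $p_i$ (together with the insertion of $i$ into the linked list for $P$) is $O(1)$ using the array-plus-linked-list data structure described in the text. So the first loop is $O(|U(x)|)$. The second loop runs $|P|$ times, and each iteration performs a constant-cost random coin flip; whenever $x_i$ actually flips, we update $U(x)$ by iterating over the $c$ right-neighbors of $i$, each at constant cost, for a total of $O(c)=O(1)$ per iteration. Finally, resetting the $p_i$-array and the linked list for $P$ at the end of the call costs $O(|P|)$.

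Third, I would combine the size bounds. Every element inserted into $P$ arises from the Line~7 update in an iteration corresponding to some $v\in U(x)$, and each iteration inserts at most one index, giving $|P|\le |U(x)|$. By Lemma \ref{lem:F_U_relation}, $|U(x)|\le c|F(x,y)|$. Hence the total per-call cost is $O(|F(x,y)|)$. The one-time $\Theta(n)$ initialization of the global arrays is not charged to any individual call of \textsf{RandFlip}; it is amortized into the $O(n)$ preprocessing of the main algorithm. There is no real obstacle here beyond carefully verifying that each data-structure operation is genuinely $O(1)$ and that the semantic change in the loop ranges preserves the output distribution, both of which follow directly from the construction.
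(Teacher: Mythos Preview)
Your proposal is correct and follows essentially the same argument as the paper: the paper's proof is precisely the preceding discussion that restricts the loops to $U(x)$ and $P$, observes $|P|\le |U(x)|\le c|F(x,y)|$, and notes that $\decode$ and the data-structure updates are $O(1)$. You have organized these facts cleanly and added nothing incorrect.
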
 

\subsection{Flipping Iteratively}\label{sec:rand-iter}

We present the pseudocode of our randomized decoding algorithm:

\begin{algorithm}[H]
	\caption{$\mathsf{RandDecode}(x)$}\label{alg:randdecode}
	\begin{algorithmic}[1]
		\Require{$x=(x_1,\dots,x_n)\in \mathbb{F}_2^n$}
		\While{$|U(x)|>0$}
		\State{$x\gets\mathsf{RandFlip}(x)$}
		\EndWhile
		\State{\Return $x$}
	\end{algorithmic}
\end{algorithm}

\begin{theorem}\label{thm:rand_main}
Assume $d_0 \delta>2$ and let $\eps_0=\frac{d_0}{2}-\frac{1}{\delta}>0$.
Let $x\in\F_2^n$ and $y\in T(G, C_0)$ such that $d_H(x,y)\leq \alpha n$.
Then given the input $x$, \cref{alg:randdecode} outputs $y$ in \( O(n) \) time with probability $1-o(1)$.%
\end{theorem}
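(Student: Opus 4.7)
The plan is to show that within $T := \lceil C \log n \rceil$ iterations of \cref{alg:randflip} (for a sufficiently large constant $C$), the state $x$ reaches $y$, and the total running time is $O(n)$, each with probability $1 - o(1)$. Once $|U(x)| = 0$, the state lies in $T(G, C_0)$; combined with the invariant $d_H(x, y) \le \alpha n$ and \cref{lem:dis_lower} (which, since $\delta d_0 > 2 > 1$, ensures the minimum distance of $T(G,C_0)$ exceeds $\alpha n$), this forces $x = y$, so the \texttt{while} loop terminates with the correct answer.

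The core technical step is to upgrade the expectation bound of \cref{thm:randflip_fraction} to a high-probability statement. Writing $F_k := F(x_k, y)$ for the state $x_k$ after $k$ calls of \cref{alg:randflip} and $\gamma := 1 - \eps_0 \delta/t \in (0,1)$, \cref{thm:randflip_fraction} gives $\E[|F_{k+1}| \mid x_k] \le \gamma |F_k|$ whenever $|F_k| \le \alpha n$. Since the flips within one call are independent and only $|P| \le c |F_k|$ bits have nonzero flipping probability (as in the discussion preceding \cref{lem:randflip_runtime}), Hoeffding's inequality yields $\Pr[|F_{k+1}| > \gamma' |F_k|] \le \exp(-\Omega(|F_k|))$ for any fixed $\gamma' \in (\gamma, 1)$. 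I would then split the execution into two phases. In Phase~1, while $|F_k| \ge K := C' \log n$, each step shrinks $|F|$ by a factor of $\gamma'$ with per-step failure probability $n^{-\omega(1)}$; a union bound over the $T$ steps preserves $|F_k| \le \alpha n$ throughout and drives $|F_k|$ below $K$ within $O(\log n)$ iterations, with total success probability $1 - o(1)$. In Phase~2, starting from $|F_k| < K$, the deterministic growth bound $|F_{k+1}| \le (1 + c/t)|F_k|$ from \cref{lem:max_increase} keeps $|F|$ at $\mathrm{polylog}(n) \ll \alpha n$ for all the additional $T_2 = O(\log \log n)$ iterations, so \cref{thm:randflip_fraction} applies unconditionally; iterating the expectation bound gives $\E[|F_{k + T_2}|] \le \gamma^{T_2} K = o(1)$, and Markov's inequality yields $|F_{k + T_2}| = 0$ with probability $1 - o(1)$.

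The running time bound then follows from \cref{lem:randflip_runtime}: on the good event, $\sum_k |F_k| = O(|F_0| \sum_k \gamma'^k) = O(n)$ in Phase~1, while Phase~2 contributes only $O(K \cdot T_2) = o(n)$; combined with the $O(n)$ one-time initialization overhead described in \cref{sec:rand-iter}, the total running time is $O(n)$. The main obstacle I expect is maintaining the invariant $|F_k| \le \alpha n$ throughout, since a single large deviation could push $x$ outside the decoding radius where \cref{thm:randflip_fraction} no longer applies. The two-phase split sidesteps this cleanly: Hoeffding concentration enforces the invariant in Phase~1 with overwhelming probability, and in Phase~2 the worst-case bound from \cref{lem:max_increase} keeps $|F|$ small enough that the invariant holds deterministically over the short second phase, so the per-step expectation bound is valid without any further conditioning.
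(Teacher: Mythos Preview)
Your proposal is correct and mirrors the paper's own two-phase argument (Hoeffding concentration in Phase~1 while $|F|$ is large, then the deterministic growth bound of \cref{lem:max_increase} plus iterated expectation and Markov in Phase~2), differing only in the phase-switching threshold ($K=\Theta(\log n)$ and $T_2=O(\log\log n)$ for you, versus $n^\beta$ and $\Theta(\log n)$ in the paper). One minor slip: near the threshold $K$ the Hoeffding bound yields per-step failure probability $\exp(-\Omega(K))=n^{-\Omega(1)}$ rather than $n^{-\omega(1)}$, but this is still $o(1/\log n)$, so your union bound over $O(\log n)$ steps goes through unchanged.
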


The $o(1)$ term is with respect to the growing parameter $n$. We also assume $n\geq n_0$ for some large enough constant $n_0$; otherwise, the unique decoding of $T(G, C_0)$ can be solved in constant time, e.g., via brute-force search.

\begin{proof}[Proof of \cref{thm:rand_main}]
Let $t=\frac{d_0}{2}$, $\epsilon_1=\frac{\epsilon_0 \delta}{t}$, and $\epsilon_2=\frac{c}{t}$. Let $\beta\in (0,1)$ be a small enough constant depending only on $\eps_1$ and $\eps_2$.
We consider two phases of the algorithm:

\begin{description}
\item[Phase 1: Many corrupt bits.] 

Suppose at some point during the algorithm, calling $\mathsf{RandFlip}(x)$ changes $F(x,y)$ from a set $F$ to another set $F'$, and $|F|\geq n^\beta$.%
Then 
\[
|F'|=|F|-\left(\sum_{i\in F} X_i-\sum_{i\in [n]\setminus F}X_i\right),
\]
where each $X_i$ is the indicator random variable associated with the event that $x_i$ is flipped in $\mathsf{RandFlip}(x)$. Each $X_i$ takes the value one with probability $p_i$ (see \cref{alg:randflip}) and zero with probability $1-p_i$, and these random variables $X_i$ are independent. Also note that by \cref{lem:max_increase}, the number of $i\in [n]$ such that $p_i\neq 0$ is at most $(1+\eps_2) |F|$.
Let $\mu=\E[\sum_{i\in F} X_i-\sum_{i\in [n]\setminus F}X_i]$.
By \cref{thm:randflip_fraction}, we have $\mu\geq \epsilon_1 |F|$.
Choose $k\coloneq\lceil\frac{\log(n^{1-\beta})}{-\log\left(1-\frac{\eps_1}{2}\right)}\rceil$ so that $n\cdot \left(1-\frac{\eps_1}{2}\right)^k\leq n^\beta$.
We have 
\begin{align*}
\Pr\left[|F'|\geq (1-\frac{\eps_1}{2}) |F|\right]&\leq \Pr\left[\sum_{i\in F} X_i-\sum_{i\in [n]\setminus F}X_i\leq \mu-\frac{\eps_1|F|}{2}\right]\\
&\leq \exp\left(-\frac{2\left(\frac{\eps_1}{2}|F|\right)^2}{(1+\eps_2) |F|}\right) &{(\text{Hoeffding's inequality})}\\
&\leq \exp\left(-\frac{\eps_1^2}{2(1+\eps_2)}n^\beta\right) &{(\text{since }|F|\geq n^\beta)}\\
&=o(1/k).
\end{align*}

Thus, with probability at least $1-o(1/k)$, the size of $F(x,y)$ decreases by at least a factor of $1-\frac{\eps_1}{2}$.
By the union bound, with probability $1-o(1)$, \( |F(x,y)| \) decreases by at least a factor of $1-\frac{\eps_1}{2}$ in each run of $\mathsf{RandFlip}(x)$ until $|F(x,y)|\leq n \cdot \left(1-\frac{\eps_1}{2}\right)^k\leq n^\beta$.
When this happens, the total time it takes is  $O\left(\sum_{i=0}^{k-1}\left(n\cdot \left(1-\frac{\eps_1}{2}\right)^i\right)\right)=O(n)$ by \cref{lem:randflip_runtime}.

		\item[Phase 2: Few corrupt bits.] Now assume $|F(x,y)|\leq n^\beta$. %
		Consider running $\mathsf{RandFlip}(x)$ another $\ell:=\lceil\frac{2\beta \log n}{-\log(1-\epsilon_1)}\rceil$ times. By \cref{lem:max_increase}, the size of $F(x,y)$ may increase by at most a factor of $1+\epsilon_2$ each time. Therefore, \( |F(x,y)| \) would not exceed
		      \begin{align*}
			      n^\beta \cdot (1+\epsilon_2)^\ell\le n^{0.5} <\alpha n,
		      \end{align*}
where we use the fact that $\beta$ is a small enough constant depending on $\eps_1$ and $\eps_2$.
In particular, as $|F(x,y)|\leq \alpha n$, \cref{thm:randflip_fraction} applies to each run of $\mathsf{RandFlip}(x)$, which shows that the expectation of $|F(x,y)|$ after running $\mathsf{RandFlip}(x)$ $\ell$ times is at most
\[
p\coloneq n^{\beta}\cdot (1-\epsilon_1)^\ell\leq n^{-\beta}=o(1).
\]
As the algorithm outputs $y$ whenever $|F(x,y)|=0$, by Markov's inequality, the probability that the algorithm does not output $y$ after running $\mathsf{RandFlip}(x)$ $\ell$ times is $o(1)$.
And the running time of this phase is \( O(\ell n^{0.5})=o(n) \) by \cref{lem:randflip_runtime}.

	\end{description}
Combining the two phases shows that with probability $1-o(1)$, \cref{alg:randdecode} correctly outputs $y$ in $O(n)$ time.
\end{proof}

\section{Deterministic Decoding}
\label{sec:deter}

In this section, we begin by applying the same derandomization technique as in \cite{cheng2024expandercodecorrectomegan} to develop a deterministic algorithm that corrects \( \gamma n \) errors when \( \delta d_0 > 2 \), where \( \gamma = \left(1+\frac{c}{t}\right)^{-1} \frac{\delta d_0 - 1}{d_0 - 1} \alpha \). Subsequently, we introduce an additional  deterministic step before this algorithm, extending the decoding radius to \( \alpha n \).

The main idea in \cite{cheng2024expandercodecorrectomegan} is as follows: The vertices in \( L = [n] \) are divided into a constant number of buckets based on their flipping probability $p_i$. It can be shown that at least one of these buckets contains a significantly higher proportion of corrupt bits than uncorrupt bits. By flipping the bits in this bucket, a small but constant fraction of the errors can be corrected.

However, the specific bucket containing the majority of corrupt bits is not known in advance. Therefore, we must recursively search through all possible choices until the number of errors is significantly reduced, as indicated by a substantial decrease in the size of \( U(x) \). We prune branches where \( |U(x)| \) does not decrease significantly. While this approach may appear to rely on brute force, careful analysis shows that the algorithm still runs in linear time.

The previous process requires the number of corrupt bits to be bounded by $\gamma n$ initially to guarantee that this number remains below \( \alpha n \) during the search, allowing the expansion property to apply. Our key new idea is that, even if the initial number of corrupt bits exceeds $\gamma n$ (but remains bounded by $\alpha n$), we can search through the first few steps to find a branch where the number of corrupt bits drops below $\gamma n$. Although we cannot immediately verify which branch works, there is only a constant number of branches. So we can run the aforementioned decoding process on all these branches and check whether any of them produces a valid codeword.

\subsection{Deterministic Flipping}

We begin by modifying \cref{alg:randflip} to obtain the following deterministic flipping algorithm.

\begin{algorithm}[H]
	\caption{$\mathsf{DeterFlip}(x,q)$}\label{alg:deterflip}
	\begin{algorithmic}[1]
    \Require{$x=(x_1,\dots,x_n)\in \mathbb{F}_2^n$ and $q\in \R$}
		\State{$t\gets  \frac{d_0}{2}$}
		\State{$p=(p_1,\dots,p_n)\gets (0, \cdots, 0)\in \R^n$}
		\For{each $v\in R$}
		\State{$w_v\gets \decode (x_{N(v)})$}
		\If{$1\le d_H(w_v,x_{N(v)})
			<t$}
		\State{Choose the smallest $i\in N(v)$ where $w_v$ and $x_{N(v)}$ differ}
		\State{$p_i\gets p_i+\frac{t-d_H (w_v,x_{N(v)})}{ct}$}
		\EndIf
		\EndFor
		\For{each $i\in [n]$}
		\State{Flip $x_i$ if $p_i=q$}
		\EndFor
		\State{\Return $x$}
	\end{algorithmic}
\end{algorithm}

\cref{alg:deterflip} is derived from \cref{alg:randflip} with the following modifications: First, it takes an additional input $q\in \mathbb{R}$ as a guess of the flipping probability. Second, instead of flipping each $x_i$ with probability $p_i$, it flips $x_i$ when $p_i$ equals $q$. In particular, \cref{alg:deterflip} is deterministic.

Define the finite set 
\[
W \coloneq \left\{\frac{i}{cd_0}: i\in\mathbb{Z}, 0\leq i\leq cd_0\right\}.
\]
Note that each $p_i\in [0,1]$ is an integral multiple of $\frac{1}{2ct}=\frac{1}{cd_0}$ and, therefore, lies within $W$.

The following lemma shows that there exists $q\in W$ such that flipping all $x_i$ with $p_i=q$ corrects a constant fraction of errors.

\begin{lemma}\label{thm:exist_m}
	Assume $d_0 \delta>2$ and let $\eps_0=\frac{d_0}{2}-\frac{1}{\delta}>0$.
	Let $x\in\F_2^n$ and $y\in T(G,C_0)$ such that $d_H(x,y)\le \alpha n$. Let $F=F(x,y)$.
    For $q\in W$, let $P_q$ be the set of $i\in [n]$ such that $p_i=q$ at the end of $\mathsf{DeterFlip}(x,q)$.
    Then there exists $q\in W\setminus\{0\}$ such that $|P_q\cap F|-|P_q\setminus F|\ge \frac{\epsilon_0 \delta }{2c t^2}|F|$.
\end{lemma}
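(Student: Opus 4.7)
\textbf{Proof plan for \cref{thm:exist_m}.} My plan is to recycle the inequality \eqref{eq:eq4} from the analysis of \cref{alg:randflip} and convert it into a statement about a single bucket $P_q$ via a clean averaging argument. The observation is that the flipping probabilities $p_i$ produced by \cref{alg:deterflip} are identical to those produced by \cref{alg:randflip} (the only difference between the two algorithms is in the flipping step itself), and they all lie in the finite set $W$.

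The first step is to rewrite the potential function
\[
\sum_{i\in F} p_i - \sum_{i\in [n]\setminus F} p_i = \sum_{q\in W} q\bigl(|P_q\cap F| - |P_q\setminus F|\bigr),
\]
where $P_q=\{i\in[n]:p_i=q\}$, using the fact that the $P_q$'s partition $[n]$. Applying \eqref{eq:eq4}, whose derivation carried over verbatim from \cref{thm:randflip_fraction} (the $p_i$'s are defined identically), gives
\[
\sum_{q\in W} q\bigl(|P_q\cap F| - |P_q\setminus F|\bigr)\ge \frac{\eps_0\delta}{t}|F|.
\]
The $q=0$ term contributes nothing, so the same lower bound holds after restricting the sum to $q\in W\setminus\{0\}$.

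The second step is a pigeonhole on the $|W\setminus\{0\}| = cd_0 = 2ct$ surviving terms: some $q^\star\in W\setminus\{0\}$ must satisfy
\[
q^\star\bigl(|P_{q^\star}\cap F| - |P_{q^\star}\setminus F|\bigr) \ge \frac{1}{2ct}\cdot\frac{\eps_0\delta}{t}|F| = \frac{\eps_0\delta}{2ct^2}|F|.
\]
Since $q^\star>0$, the quantity $X \coloneq |P_{q^\star}\cap F|-|P_{q^\star}\setminus F|$ is itself positive, and since $q^\star\le 1$, we have $X\ge q^\star X\ge \frac{\eps_0\delta}{2ct^2}|F|$, which is exactly the inequality claimed in the lemma.

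There is no serious obstacle in this argument; the slightly delicate point is just being careful that $W$ has the right cardinality (it is $cd_0+1$, so the nonzero part has size $cd_0=2ct$), and that the weighting by $q\le 1$ goes in the correct direction to drop down to an unweighted bound once positivity of $X$ is established. The entire proof is therefore a short three-line derivation once the partition-of-unity rewrite is in place.
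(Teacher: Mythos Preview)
Your proof is correct and follows essentially the same route as the paper: rewrite $\sum_{i\in F}p_i-\sum_{i\notin F}p_i$ as a sum over the buckets $P_q$, invoke \eqref{eq:eq4}, and average over the $|W\setminus\{0\}|=2ct$ nonzero buckets. The paper phrases the averaging step as a contradiction argument rather than a direct pigeonhole, but the content is the same; your handling of the final ``$q^\star\le 1$'' step (using it only after first deducing $X>0$ from $q^\star X>0$) is arguably cleaner.
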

\begin{proof}
	Assume to the contrary that \( W \) does not contain any $q$ satisfying the lemma. In other words, for every $q\in W\setminus\{0\}$, it holds that
    \begin{equation}\label{eq:lessthan}
    |P_q\cap F|-|P_q\setminus F|< \frac{\epsilon_0 \delta }{2c t^2}|F|
    \end{equation}
    Then we have
    \begin{align*}
    \sum_{i\in F} p_i - \sum_{i\in [n]\setminus F} p_i&=\sum_{q\in W}q\left(|P_q\cap F|-|P_q\setminus F|\right)\\
    &\leq \sum_{q\in W\setminus\{0\}} \left(|P_q\cap F|-|P_q\setminus F|\right)\\
    &< (|W|-1) \frac{\epsilon_0 \delta }{2c t^2}|F|\\
    &\leq \frac{\epsilon_0 \delta }{t}|F|
    \end{align*}
    where the last two inequalities hold by \eqref{eq:lessthan} and the fact that $|W|-1=cd_0=2ct>0$.
On the other hand, we know $\sum_{i\in F} p_i-\sum_{i\in [n]\setminus F} p_i\geq \frac{\epsilon_0 \delta}{t} |F|$ (see \eqref{eq:eq4} in the proof of \cref{thm:randflip_fraction}). This is a contradiction.
\end{proof}

As $\mathsf{DeterFlip}(x,q)$ only flips the bits $x_i$ with $i\in P_q$, we immediately derive the following corollary:

\begin{corollary}\label{cor:deterflip}
Under the notation and conditions in \cref{thm:exist_m}, there exists $q\in  W\setminus\{0\}$ such that $\mathsf{DeterFlip}(x,q)$ corrects at least a $\frac{\eps_0 \delta}{2ct^2}$-fraction of corrupt bits, i.e., $|F(x',y)|\leq 
\left(1-\frac{\eps_0 \delta}{2ct^2}\right) |F(x,y)|$, where $x'$ is the output of $\mathsf{DeterFlip}(x,q)$.
\end{corollary}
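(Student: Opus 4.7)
The plan is to derive the corollary as a direct and almost immediate consequence of \cref{thm:exist_m}. First, I would invoke \cref{thm:exist_m} to select a specific $q\in W\setminus\{0\}$ for which
\[
|P_q\cap F|-|P_q\setminus F|\ge \frac{\eps_0\delta}{2ct^2}|F|,
\]
where $F=F(x,y)$ and $P_q$ is the set of indices whose aggregate flipping value $p_i$ equals $q$ at the end of $\mathsf{DeterFlip}(x,q)$.

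Next, I would unpack what the second \texttt{for} loop of \cref{alg:deterflip} does: it flips exactly the bits $x_i$ with $i\in P_q$, and leaves all other coordinates unchanged. Thus, comparing the output $x'$ to the codeword $y$ coordinate by coordinate, an index $i\in P_q\cap F$ is corrected (since $x_i\neq y_i$ becomes $x'_i=y_i$), while an index $i\in P_q\setminus F$ becomes corrupted (since $x_i=y_i$ becomes $x'_i\neq y_i$). Indices outside $P_q$ contribute the same as before. Hence
\[
|F(x',y)| \;=\; |F|-|P_q\cap F|+|P_q\setminus F| \;=\; |F|-\bigl(|P_q\cap F|-|P_q\setminus F|\bigr).
\]

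Plugging in the inequality supplied by \cref{thm:exist_m} yields
\[
|F(x',y)| \;\le\; |F|-\frac{\eps_0\delta}{2ct^2}|F| \;=\; \Bigl(1-\frac{\eps_0\delta}{2ct^2}\Bigr)|F(x,y)|,
\]
which is precisely the claimed bound. There is essentially no obstacle: the only thing to verify carefully is the bookkeeping of which indices move in and out of the ``corrupt'' set when the flipping rule is applied to $P_q$, and the fact that $\mathsf{DeterFlip}(x,q)$ flips exactly the bits in $P_q$ (nothing more, nothing less) follows immediately from Line~11 of \cref{alg:deterflip}.
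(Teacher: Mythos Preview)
Your proposal is correct and matches the paper's approach: the paper simply notes that $\mathsf{DeterFlip}(x,q)$ flips exactly the bits $x_i$ with $i\in P_q$ and declares the corollary immediate from \cref{thm:exist_m}. Your write-up spells out the bookkeeping identity $|F(x',y)|=|F|-|P_q\cap F|+|P_q\setminus F|$ explicitly, which is a faithful expansion of that one-line justification.
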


The proofs of \cref{lem:max_increase} and \cref{lem:randflip_runtime} still hold and yield the following counterparts.

\begin{lemma}\label{lem:max_increase2}
Let $x\in\F_2^n$ and $y\in T(G,C_0)$. For all $q\in W\setminus\{0\}$ and $x'=\mathsf{DeterFlip}(x, q)$, it holds that $d_H(x',y)\leq (1+\frac{c}{t})d_H(x,y)$, or equivalently, $|F(x',y)|\leq (1+\frac{c}{t})|F(x,y)|$.
\end{lemma}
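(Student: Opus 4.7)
The plan is to recognize that the first \texttt{for} loop of $\mathsf{DeterFlip}(x,q)$ (Lines~3--9) is \emph{identical} to that of $\mathsf{RandFlip}(x)$; only the final flipping step differs. Consequently, the values $p_1,\dots,p_n$ computed by $\mathsf{DeterFlip}$ are exactly the same as those computed by $\mathsf{RandFlip}$, and every structural fact proved about these $p_i$ in \cref{sec:rand} carries over verbatim.

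To bound $|F(x',y)|$, I would decompose the new error set as
\[
F(x',y)=(F\setminus P_q)\cup(P_q\setminus F),
\]
where $F=F(x,y)$ and $P_q=\{i\in[n]:p_i=q\}$ is exactly the set of coordinates flipped by $\mathsf{DeterFlip}(x,q)$. Trivially $|F\setminus P_q|\leq|F|$, so the task reduces to bounding $|P_q\setminus F|$.

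Because $q\neq 0$ by hypothesis, every $i\in P_q\setminus F$ satisfies $p_i>0$ and $i\in[n]\setminus F$. The proof of \cref{lem:max_increase} already establishes that the number of such indices is at most $\frac{c}{t}|F|$, and that argument never uses the randomized flipping step --- it only analyzes when an index $i\in[n]\setminus F$ can first receive a positive contribution to $p_i$ during the first loop, invoking \cref{claim:dk} (which forces the responsible $v$ to lie in $N_{\geq t}(F)$) and then \cref{lem:N_get} (which bounds $|N_{\geq t}(F)|$ by $\frac{c}{t}|F|$). Therefore $|P_q\setminus F|\leq\frac{c}{t}|F|$, and combining gives $|F(x',y)|\leq(1+\frac{c}{t})|F|$, as claimed. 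There is essentially no technical obstacle here; the entire content of the lemma is that the deterministic flipping rule, by flipping only a subset of $\{i:p_i>0\}$, cannot introduce more new errors than what the randomized rule could have potentially introduced, and so inherits the same upper bound.
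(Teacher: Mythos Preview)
Your proposal is correct and follows essentially the same approach as the paper, which simply remarks that the proof of \cref{lem:max_increase} still holds verbatim since the first loop of $\mathsf{DeterFlip}$ is identical to that of $\mathsf{RandFlip}$ and the deterministic rule flips only indices with $p_i=q\neq 0$.
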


\begin{lemma}\label{lem:deterflip_runtime}
For all $q\in W\setminus\{0\}$, \cref{alg:deterflip} can be implemented to run in $O(|F(x,y)|)$ time, where $(x,q)$ is the input and $y$ is any codeword of $T(G,C_0)$.
\end{lemma}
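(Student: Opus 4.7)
The plan is to reuse the implementation strategies already developed for proving \cref{lem:randflip_runtime}, since \cref{alg:deterflip} differs from \cref{alg:randflip} only in how the second loop decides which bits to flip. The same global data structures for maintaining the set $U(x)$ of unsatisfied checks, and for maintaining $P \coloneq \{i \in [n] : p_i \neq 0\}$ together with the values $p_i$, should carry over directly: each set is stored via a doubly linked list plus an auxiliary lookup array, supporting $O(1)$ membership tests and single-bit updates, and each call to \cref{alg:deterflip} can reset $P$ and the array of $p_i$'s in $O(|P|)$ time at the end so that subsequent calls start from the correct initial state.

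With these structures in place, the two loops of \cref{alg:deterflip} should be modified to enumerate only the relevant vertices. The first loop can safely be restricted to $v \in U(x)$ by exactly the same argument as in the randomized case: if $v \notin U(x)$, then $x_{N(v)} \in C_0$, so $w_v = x_{N(v)}$, $d_H(w_v, x_{N(v)}) = 0$, and the test at Line~5 fails. The key new point, specific to the deterministic setting, concerns the second loop at Line~10. Because the hypothesis $q \in W \setminus \{0\}$ explicitly excludes the value $q = 0$, every index $i$ with $p_i = q$ must satisfy $p_i \neq 0$ and hence lie in $P$. Thus replacing $i \in [n]$ with $i \in P$ at Line~10 does not alter the output but drastically reduces the work.

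For the running time, $\decode(x_{N(v)})$ is computable in constant time by brute force because $d$ is a constant, so the first loop costs $O(|U(x)|)$ time and the second costs $O(|P|)$ time. Using the double-counting bound $|U(x)| \leq c|F(x,y)|$ (the leftmost inequality of \cref{lem:F_U_relation}, which holds without any assumption on $d_H(x,y)$), together with $|P| \leq |U(x)|$ (since each iteration of the first loop modifies at most one coordinate of $p$), the total cost is $O(c|F(x,y)|) = O(|F(x,y)|)$, as $c$ is a constant. The main subtlety worth highlighting is precisely the exclusion of $q = 0$: had we allowed $q = 0$, the set $\{i : p_i = q\}$ would typically contain almost all of $[n]$, forcing $\Theta(n)$ work and breaking the desired time bound; this is exactly why \cref{alg:deterflip} is deployed only with $q \in W \setminus \{0\}$.
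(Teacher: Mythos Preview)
Your proposal is correct and follows exactly the approach the paper intends: the paper's own justification is simply that the proof of \cref{lem:randflip_runtime} carries over verbatim, and you have spelled this out in appropriate detail, including the one genuinely new observation (that $q\neq 0$ is what makes the restriction of the second loop to $i\in P$ valid). If anything, your treatment is more explicit than the paper's, and your remark about why $q=0$ must be excluded is a helpful addition.
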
 

\subsection{\texorpdfstring{Search for a Sequence of $q$}{Search for a Sequence of q}}

In the following, we assume $\delta d_0>2$ and let $\eps_0=\frac{d_0}{2}-\frac{1}{\delta}>0$.

The deterministic algorithm below is based on \cref{alg:deterflip}. \Cref{thm:deepflip_main} will show that it corrects a constant fraction of corrupt bits, though it is only guaranteed to work within a decoding radius somewhat smaller than $\alpha n$.

\begin{algorithm}[H]
	\caption{$\mathsf{DeepFlip}(x)$}\label{alg:deepflip}
	\begin{algorithmic}[1]
    \Require{$x=(x_1,\dots,x_n)\in \mathbb{F}_2^n$}
		\State $s \gets \left\lceil \frac{\log\left(\frac{\delta d_0 - 1}{2(d_0 - 1)}\right)}{\log(1 - \eps)} \right\rceil$, where $\eps\coloneq\frac{\eps_0 \delta}{2ct^2}$ and $t=d_0/2$.
        \State $k_{\min}\gets |R|+1$
        \State $x_{\min}\gets \perp$
		\For{each $(q_1,\dots,q_s)\in (W\setminus\{0\})^s$}
		\State{$x^{(0)}\gets x$}
		\For{$i \gets 1$ to $s$}
		\State{$x^{(i)}\gets \mathsf{DeterFlip}(x^{(i-1)},q_i)$}
		\If{$|U(x^{(i)})|>c \gamma n$}
		\State{Exit the inner loop}
        \ElsIf{$i=s$ and $|U(x^{(s)})|<k_{\min}$}
        \State{$k_{\min}\gets|U(x^{(s)})|$}
        \State{$x_{\min}\gets x^{(s)}$}
		\EndIf
		\EndFor
		\EndFor
		\State{\Return $x_{\min}$}
	\end{algorithmic}
\end{algorithm}
\begin{lemma}\label{lem:f_upper}
Let $x\in\F_2^n$ and $y\in T(G,C_0)$ such that $d_H(x,y)\le \gamma n$, where $\gamma= \left(1+\frac{c}{t}\right)^{-1} \frac{\delta d_0 - 1}{d_0 - 1} \alpha$.
Suppose $\mathsf{DeepFlip}(x)$ outputs some $x'\in\F_2^n$. Then $|F(x',y)|\leq \frac{d_0-1}{\delta d_0 -1} \gamma n$.
\end{lemma}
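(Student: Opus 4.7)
The plan is to fix a completing sequence $(q_1,\dots,q_s) \in (W\setminus\{0\})^s$ that produced the output $x' = x^{(s)}$ — at least one such sequence exists by the hypothesis $x' \neq \perp$ — and to argue by induction on $i$ that $|F(x^{(i)},y)| \le \frac{d_0-1}{\delta d_0-1}\gamma n$ for every $i \in \{0,1,\dots,s\}$. Because the sequence did not trigger the early-exit condition in \cref{alg:deepflip}, we have $|U(x^{(i)})| \le c\gamma n$ throughout; specializing the induction at $i = s$ then gives the desired conclusion.

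The starting arithmetic observation is the identity
\[
\frac{d_0-1}{\delta d_0-1}\,\gamma n \;=\; \left(1+\frac{c}{t}\right)^{-1}\alpha n,
\]
which follows directly from the definition of $\gamma$. This identity is what makes \cref{lem:max_increase2} and \cref{lem:F_U_relation} fit together.

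The base case $i=0$ is immediate: $|F(x^{(0)},y)| = d_H(x,y) \le \gamma n \le \frac{d_0-1}{\delta d_0-1}\gamma n$, where the second inequality uses $\delta \le 1$ (so $\frac{d_0-1}{\delta d_0-1} \ge 1$). For the inductive step, assume $|F(x^{(i-1)},y)| \le \frac{d_0-1}{\delta d_0-1}\gamma n = (1+c/t)^{-1}\alpha n$. Then \cref{lem:max_increase2} gives $|F(x^{(i)},y)| \le (1+c/t)\cdot (1+c/t)^{-1}\alpha n = \alpha n$, which places $x^{(i)}$ in the regime where \cref{lem:F_U_relation} applies and yields $|U(x^{(i)})| \ge \frac{\delta d_0-1}{d_0-1}\, c\,|F(x^{(i)},y)|$. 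Combining this with the completing-sequence bound $|U(x^{(i)})| \le c\gamma n$ and rearranging produces $|F(x^{(i)},y)| \le \frac{d_0-1}{\delta d_0-1}\gamma n$, closing the induction.

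The one delicate point — not really an obstacle, but the piece that requires care — is the seemingly circular requirement that $|F(x^{(i)},y)| \le \alpha n$ before \cref{lem:F_U_relation} can be invoked. The arithmetic identity above is exactly what breaks this circularity: the previous step's bound, inflated by the factor $(1+c/t)$ from \cref{lem:max_increase2}, lands precisely at $\alpha n$, after which \cref{lem:F_U_relation} together with the completing-sequence upper bound on $|U(x^{(i)})|$ compresses it back down to $\frac{d_0-1}{\delta d_0-1}\gamma n$. Everything else is a short calculation.
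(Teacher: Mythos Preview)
Your proposal is correct and follows essentially the same approach as the paper: fix a sequence $(q_1,\dots,q_s)$ that reached Line~12 and produced the output, then induct on $i$ using \cref{lem:max_increase2} to push $|F(x^{(i)},y)|$ up to at most $\alpha n$, \cref{lem:F_U_relation} to bound $|U(x^{(i)})|$ below in terms of $|F(x^{(i)},y)|$, and the completing-sequence bound $|U(x^{(i)})|\le c\gamma n$ to compress back. The only cosmetic difference is that the paper phrases the inductive step as a contradiction (if $|F(x^{(i)},y)|$ exceeded the target then $|U(x^{(i)})|>c\gamma n$, forcing early exit), whereas you argue directly by combining the inequalities; the content is identical.
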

\begin{proof}
Suppose the output value $x_{\min}$ is assigned $x^{(s)}$ at Line~12 in the iteration corresponding to $(q_1,\dots,q_s)\in (W\setminus\{0\})^s$. We have
$x^{(0)}=x$, $x^{(i)}=\mathsf{DeterFlip}(x^{(i-1)}, q_i)$ for $i\in [s]$, and $x_{\min}=x^{(s)}$. 

As $x^{(0)}=x$, we have $|F(x^{(0)},y)|=|F(x,y)|\leq \gamma n\leq \frac{d_0-1}{\delta d_0 -1} \gamma n$.
Now, consider $i\in [s]$ and assume $|F(x^{(i-1)},y)|\leq \frac{d_0-1}{\delta d_0 -1} \gamma n$.
By \cref{lem:max_increase2},
\[
|F(x^{(i)},y)|\leq \left(1+\frac{c}{t}\right)|F(x^{(i-1)},y)|\leq \left(1+\frac{c}{t}\right)\frac{d_0-1}{\delta d_0 -1} \gamma n=\alpha n.
\]
By \cref{lem:F_U_relation}, we have
\begin{equation}\label{eq:u-and-f}
|U(x^{(i)})|\ge \frac{\delta d_0 -1}{d_0 - 1} \cdot c |F(x^{(i)},y)|.
\end{equation}
If $|F(x^{(i)},y)|>\frac{d_0-1}{\delta d_0 -1} \gamma n$, then by \eqref{eq:u-and-f}, we would have $|U(x^{(i)})|>c\gamma n$. In this case, the algorithm would exit the inner loop at Line~9 and modify $(q_1,\dots,q_s)$ to a different sequence, contradicting the choice of $(q_1,\dots,q_s)$. Therefore, $|F(x^{(i)},y)|\leq \frac{d_0-1}{\delta d_0 -1} \gamma n$.

By induction, it follows that $|F(x^{(i)},y)|\leq \frac{d_0-1}{\delta d_0 -1} \gamma n$ for $i=0,1,\dots,s$. Choosing $i=s$ proves the lemma.
\end{proof}

\begin{theorem}\label{thm:deepflip_main}
Let $x\in\F_2^n$ and $y\in T(G,C_0)$ such that $d_H(x,y)\le \gamma n$, where $\gamma= \left(1+\frac{c}{t}\right)^{-1} \frac{\delta d_0 - 1}{d_0 - 1} \alpha$.
Then $\mathsf{DeepFlip}(x)$ outputs an element $x'\in\F_2^n$ in $O(|F(x,y)|)$ time such that $|F(x',y)|\leq \frac{1}{2}|F(x,y)|$. 
\end{theorem}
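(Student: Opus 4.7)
The plan is to show that the exhaustive search in \cref{alg:deepflip} necessarily examines a ``good'' sequence of guesses along which $|F(\cdot,y)|$ drops geometrically, and that the minimizer $x_{\min}$, although chosen using only the observable quantity $|U(\cdot)|$, inherits a comparable drop in $|F(\cdot,y)|$ via the two-sided comparison in \cref{lem:F_U_relation}. The main obstacle is exactly this asymmetry: the algorithm has no access to $y$, so the bound on $|U(x_{\min})|$ inherited from the good sequence has to be translated into a bound on $|F(x_{\min},y)|$, and this translation only works because the pruning threshold $c\gamma n$ is calibrated so that every surviving branch still satisfies the hypothesis $|F|\le \alpha n$ of \cref{lem:F_U_relation}.

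First I would construct the good sequence $(q_1^*,\dots,q_s^*)\in (W\setminus\{0\})^s$ inductively by repeatedly applying \cref{cor:deterflip}. The hypothesis $|F|\le \alpha n$ required by the corollary is preserved because each step contracts $|F(\cdot,y)|$ by a factor of $1-\eps$, and $|F|$ therefore stays below $\gamma n\le \alpha n$. I would then verify, using the upper bound in \cref{lem:F_U_relation}, that $|U(\cdot)|$ along the good branch never exceeds $c\gamma n$, so the good branch survives the pruning at Line~8 and actually reaches $i=s$. In particular the minimizer $x_{\min}$ is not $\perp$. By the choice of $s$ we have $(1-\eps)^s\le \tfrac{\delta d_0-1}{2(d_0-1)}$, so the terminal $|U|$-value on the good branch is at most $\tfrac{c(\delta d_0-1)}{2(d_0-1)}|F(x,y)|$.

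Second, I would transfer this bound to $x_{\min}$. Since $x_{\min}$ minimizes $|U(\cdot)|$ across all surviving branches, it inherits the same upper bound on $|U|$. The branch producing $x_{\min}$ also survived the pruning, so \cref{lem:f_upper} gives $|F(x_{\min},y)|\le \alpha n$, which activates the lower bound in \cref{lem:F_U_relation}, namely $|U(x_{\min})|\ge \tfrac{\delta d_0-1}{d_0-1}c|F(x_{\min},y)|$. Chaining these two estimates, the factors $\tfrac{\delta d_0-1}{d_0-1}$ cancel and yield $|F(x_{\min},y)|\le \tfrac12|F(x,y)|$, as required.

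For the running time, the outer loop enumerates $(cd_0)^s=O(1)$ sequences and each performs at most $s=O(1)$ calls to $\mathsf{DeterFlip}$. Along any branch, \cref{lem:max_increase2} gives $|F(x^{(i)},y)|\le (1+c/t)^i|F(x,y)|$, so by \cref{lem:deterflip_runtime} the $i$-th call costs $O((1+c/t)^{i-1}|F(x,y)|)$; summing the resulting geometric series over $i\in[s]$ and over the constantly many branches gives total time $O(|F(x,y)|)$, completing the claim.
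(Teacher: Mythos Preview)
Your proposal is correct and follows essentially the same route as the paper's proof: construct a good sequence via repeated use of \cref{cor:deterflip}, check that it survives the pruning using the upper bound of \cref{lem:F_U_relation}, transfer the resulting $|U|$-bound to $x_{\min}$ by minimality, invoke \cref{lem:f_upper} so that the lower bound of \cref{lem:F_U_relation} applies to $x_{\min}$, and then cancel the factors $\tfrac{\delta d_0-1}{d_0-1}$ to obtain the halving; the running-time argument via \cref{lem:max_increase2} and \cref{lem:deterflip_runtime} is also the same.
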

\begin{proof}
	By \cref{cor:deterflip}, there exists $(q_1,\dots,q_s)\in (W\setminus\{0\})^{s}$ such that for $(x^{(0)},\dots,x^{(s)})$ defined by $x^{(0)}=x$ and $x^{(i)}=\mathsf{DeterFlip}(x^{(i-1)}, q_i)$, we have $|F(x^{(i)},y)|\leq \left(1-\frac{\eps_0 \delta}{2ct^2}\right)^i|F(x,y)|$ for $i\in [s]$. By  \cref{lem:F_U_relation}, we have $|U(x^{(i)})|\leq c|F(x^{(i)},y)|\leq c|F(x,y)| \leq c\gamma n$ for $i\in [s]$. Therefore, the iteration of the outer loop corresponding to $(q_1,\dots,q_s)$ passes the test at Line~8 for $i\in [s]$ and reaches Line~12.

    The element $x'=x_{\min}$ is chosen to minimize the number of unsatisfied checks among all branches that reach Line~12. So we have 
    \begin{equation}\label{eq:boundU}
            |U(x')|\leq |U(x^{(s)})|\leq c|F(x^{(s)},y)|\leq c\left(1-\frac{\eps_0 \delta}{2ct^2}\right)^s|F(x,y)|.
    \end{equation}
    By \cref{lem:f_upper}, we have $|F(x',y)|\leq \frac{d_0-1}{\delta d_0 -1} \gamma n\leq \alpha n$. So \cref{lem:F_U_relation} applies to $x'$ and $y$.
    Finally, 
    \[
    |F(x',y)|\leq \frac{d_0-1}{\delta d_0-1} \cdot \frac{1}{c}\cdot |U(x')|\leq \frac{d_0-1}{\delta d_0-1} \left(1-\frac{\eps_0 \delta}{2ct^2}\right)^s |F(x,y)|\leq  \frac{1}{2}|F(x,y)|
    \]
    where the first inequality holds by  \cref{lem:F_U_relation}, the second one holds by  \eqref{eq:boundU}, 
    and the last one holds by the choices of $s$.

Next, we bound the time complexity. Since $|W\setminus\{0\}|=c d_0=O(1)$ and $s=O(1)$, by \cref{lem:max_increase2} and \cref{lem:deterflip_runtime}, the running time is bounded by:
\[
O\left((cd_0)^s\cdot s\left(1+\frac{c}{d_0/2}\right)^s|F(x,y)|\right)=O(|F(x,y)|).\qedhere
\]
\end{proof}

\subsection{The Deterministic Decoding Algorithm}

We now present the deterministic decoding algorithm.
The idea is to enumerate all sequences \( (q_1,\dots, q_r)\in(W\setminus\{0\})^r \) of length $r=O(1)$. For each sequence, we iteratively apply \( x\gets \mathsf{DeterFlip}(x, q_i) \), $i=1,\dots,r$, aiming to reduce the number of corrupt bits.

There is guaranteed to be at least one sequence $(q_1,\dots,q_r)$ that reduces the number of corrupt bits to below \( \gamma n \). While we do not know which sequence achieves this, we can enumerate all possible sequences, run \( \mathsf{DeepFlip} \) repeatedly for each, and verify the final result.

In the following, let $t=d_0/2$, $\gamma= \left(1+\frac{c}{t}\right)^{-1} \frac{\delta d_0 - 1}{d_0 - 1} \alpha$, 
$\eps_0=\frac{d_0}{2}-\frac{1}{\delta}$, and
$\eps=\frac{\eps_0 \delta}{2ct^2}$.

\begin{algorithm}[H]
	\caption{$\mathsf{MainDecode}(x)$}\label{alg:maindecode}
	\begin{algorithmic}[1]
    \Require{$x=(x_1,\dots,x_n)\in \mathbb{F}_2^n$}
		\State $r \gets \left\lceil \frac{\log \gamma}{\log (1 - \epsilon)} \right\rceil$, 
		\State $r' \gets \left\lceil \log_2 (\gamma n) \right\rceil+1$
		\For{each \( (q_1,\dots,q_r) \in (W\setminus \{0\})^r \)}
		\State $\hat{x} \gets x$
		\For{$i \gets 1$ to $r$}
		\State $\hat{x} \gets \mathsf{DeterFlip}(\hat{x}, q_i)$
		\EndFor
		\For{$i \gets 1$ to $r'$}
		\State $\hat{x} \gets \mathsf{DeepFlip}(\hat{x})$
		\If{$|U(\hat{x})| > c 2^{-i} \gamma n$}
		\State $\hat{x} \gets \perp$ 
        \State Exit the inner loop
		\EndIf
		\EndFor
		\State \Return \( \hat{x} \) \textbf{if} \(\hat{x}\neq \perp\) and \( |U(\hat{x})|=0 \) and \( d_H(x, \hat{x}) \leq \alpha n \)
		\EndFor
	\end{algorithmic}
\end{algorithm}
\begin{theorem}\label{thm:deter_main}
	Assume $\delta d_0>2$. Then \cref{alg:maindecode} can be implemented to correct up to $\alpha n$ errors in $O(n)$ time for the code $T(G, C_0)$.
\end{theorem}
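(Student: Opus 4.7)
The plan is to show two things: first, that among the $O(1)$ sequences $(q_1,\ldots,q_r) \in (W\setminus\{0\})^r$ enumerated by the outer loop there is a ``correct'' one that drives $\hat{x}$ all the way to the original codeword $y$, and second, that the total running time across all these sequences is $O(n)$, using the pruning at Line~10 of \cref{alg:maindecode} to cut off unproductive branches.

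For correctness, I would apply \cref{cor:deterflip} inductively: starting from $|F(x,y)| \le \alpha n$ and maintaining $|F| \le \alpha n$ (it only decreases along the correct path), at each step there exists some $q_i \in W\setminus\{0\}$ such that the $i$-th DeterFlip multiplies $|F|$ by at most $1-\eps$. The choice $r = \lceil \log\gamma/\log(1-\eps) \rceil$ gives $(1-\eps)^r \le \gamma$, so after $r$ rounds the intermediate word $\hat{x}_0$ satisfies $|F(\hat{x}_0,y)| \le \gamma\alpha n \le \gamma n$. From this point \cref{thm:deepflip_main} applies, so each subsequent DeepFlip halves $|F|$; combined with the always-valid bound $|U| \le c|F|$, this keeps $|U(\hat{x}_j)| \le c \cdot 2^{-j}\gamma n$, so every pruning test at Line~10 passes. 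After $r' = \lceil\log_2(\gamma n)\rceil + 1$ DeepFlips the error set has size strictly less than $1$ and therefore equals $0$, so $\hat{x} = y$, and the return conditions $|U(\hat{x})| = 0$ and $d_H(x,\hat{x}) \le \alpha n$ are both satisfied. (An earlier branch may return some other codeword $\hat{y}$ satisfying the same conditions, but any such $\hat{y}$ is a codeword within $\alpha n$ of $x$ and hence a valid decoding.)

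For the running time, I would measure everything in terms of $|U|$ rather than $|F|$. DeterFlip is implemented to enumerate only unsatisfied checks and bits with nonzero $p_i$, so it runs in $O(|U(x)|)$ time; moreover, a single DeterFlip flips at most $|U(x)|$ bits and each flipped bit touches $c$ checks, so $|U|$ grows by at most the factor $1+c$ per call. Composing $s = O(1)$ DeterFlips inside each of the $(cd_0)^s = O(1)$ inner iterations of DeepFlip shows that DeepFlip itself runs in $O(|U(x_{\text{in}})|)$ time. Plugging this into MainDecode, for a single outer iteration: the $r$ initial DeterFlips cost $O((1+c)^r |U(x)|) = O(n)$; the first DeepFlip costs $O(|U(\hat{x}_0)|) = O(n)$; and for $j \ge 2$, having passed the pruning test at iteration $j-1$ forces $|U(\hat{x}_{j-1})| \le c\cdot 2^{-(j-1)}\gamma n$, so the $j$-th DeepFlip costs $O(2^{-(j-1)} n)$. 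Summing this geometric tail yields $O(n)$ per outer iteration, and the $O(1)$ outer iterations give $O(n)$ total.

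The subtlety I expect to be the main obstacle is the time analysis on ``incorrect'' sequences, where $|F(\hat{x},y)|$ may grow past $\alpha n$ and so \cref{lem:F_U_relation} is no longer directly available to pass from the $|U|$-based pruning condition to an $|F|$-based time bound. Routing everything through $|U|$ bypasses this issue, since $|U|$ is exactly what the pruning controls and what the implementation's running time actually depends on. A secondary point to dispatch is the corner case where DeepFlip finds no surviving inner trajectory and returns $\perp$; this is handled cleanly by the convention $|U(\perp)| = \infty$, which makes the outer pruning at Line~10 fire immediately and exit the inner loop safely.
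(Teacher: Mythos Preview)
Your proposal is correct and follows essentially the same strategy as the paper: apply \cref{cor:deterflip} inductively to find $(q_1,\ldots,q_r)$ driving $|F|$ below $\gamma n$, then use \cref{thm:deepflip_main} to halve $|F|$ with each DeepFlip, with the pruning at Line~10 controlling the work on unproductive branches via a geometric sum. The one tactical difference worth noting is your choice to bound DeepFlip's running time directly by $O(|U(\hat{x})|)$ (using that DeterFlip enumerates only unsatisfied checks and that $|U|$ grows by at most a factor $1+c$ per call), whereas the paper states the cost as $O(|F(\hat{x},y)|)$ and then passes from the pruning bound on $|U|$ back to $|F|$ via $|F(\hat{x},y)| \le d\,|U(\hat{x})|$. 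Your route is cleaner precisely for the reason you flag---on incorrect branches $|F|$ may exceed $\alpha n$, and that last inequality is not obviously justified there---so tracking $|U|$ throughout avoids the issue. One minor sharpening: where you say an early return $\hat y$ is ``a valid decoding,'' the paper concludes $\hat{y}=y$ outright, using that the minimum distance of $T(G,C_0)$ exceeds $2\alpha n$ when $\delta d_0 > 2$.
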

\begin{proof}
	We first prove the correctness of \cref{alg:maindecode}. 
    Let $x\in\F_2^n$ and $y\in T(G, C_0)$ such that $d_H(x,y)\leq \alpha n$.
    
    By \cref{cor:deterflip}, there exists a sequence \( (q_1,\dots q_r) \in (W\setminus\{0\})^r \) such that after completing the first inner loop (Lines 5--7) with respect to this sequence, we have \( |F(\hat{x}, y)|\leq \gamma n \). Fix this sequence \( (q_1,\dots q_r)\) and consider the corresponding iteration of the outer loop (assuming it is executed). By \cref{thm:deepflip_main}, \( |F(\hat{x},y)| \) is reduced by at least half each time we apply \(\hat{x}\gets \mathsf{DeepFlip}(\hat{x}) \) (Line~9).
    By \cref{lem:F_U_relation}, for $i\in [r']$, we have $|U(\hat{x})|\leq c|F(\hat{x},y)|\leq c 2^{-i} \gamma n$ at Line~10 in the $i$-th iteration of the second inner loop (Lines~8--13), which guarantees that the algorithm does not exit this inner loop at line 12. Finally, after completing the second inner loop, \( |F(\hat{x}, y)| \) will be reduced to at most $2^{-r'} \gamma n<1$,
    which implies that $F(\hat{x}, y)=0$ and hence $\hat{x}=y$. So $y$ is output at Line~15.

    This is assuming that the iteration corresponding to \( (q_1,\dots q_r)\) is executed. However, the algorithm may terminate at Line~15 earlier and output some $\hat{x}\in \F_2^n$ satisfying $|U(\hat{x})|=0$ and \( d_H(x, \hat{x}) \leq \alpha n \). The former condition means $\hat{x}\in T(G, C_0)$.
    In this case, the output is still $\hat{x}=y$ since $\alpha n$ is less than half of the minimum distance of $T(G,C_0)$ \cite{dowlinggao18}. This proves the correctness of \cref{alg:maindecode}. 

	Next, we bound the time complexity. Note that \( r,r',|W\setminus\{0\}|=O(1)\).
    By \cref{lem:deterflip_runtime}, the first inner loop runs in $O(n)$ time.
    The condition on $|U(\hat{x})|$ at Line~10 ensures that at the end of the $i$-th iteration of the second inner loop, we have $|U(\hat{x})|\leq c 2^{-i} \gamma n$ and, consequently, $|F(\hat x,y)|\leq d|U(\hat{x})|\leq dc 2^{-i} \gamma n$.
    By \cref{thm:deepflip_main}, the second inner loop also runs in $O(n+\sum_{i=1}^{r'} dc 2^{-i} \gamma n)=O(n)$ time.
    It follows that \cref{alg:maindecode} runs in $O(n)$ time.
\end{proof}

\section{Distance and Decoding Radius}
\label{sec:dis}

\subsection{Size-Expansion Trade-off}

In this subsection, we briefly review the size-expansion trade-off introduced in~\cite{chen2023improved}, which will be used later in this section. We also compare these results with related work and establish relevant notation.

Recall that a $(c,d,\alpha,\delta)$-bipartite expander $G=(L\cup R, E)$ satisfies the condition that for any $S \subseteq L$ with $|S| \leq \alpha  n $, we have $|N(S)| \geq \delta c |S|$.
This raises a natural question: Given this condition, can we infer a positive expansion factor $\delta'$ for a larger set $S \subseteq L$ of size $k\alpha n$ with $k>1$? %

To begin, we establish a trivial lower bound, $\delta' \geq \frac{\delta}{k}$, which was also used in the proof of \cite[Lemma~1(b)]{dowlinggao18}. Consider $S\subseteq L$ of size $k \alpha n$, and let $S' \subseteq S$ be an arbitrary subset of size $\alpha  n $.
Since every neighbor of $S'$ is also a neighbor of $S$, it follows that $|N(S)| \geq |N(S')| \geq \delta c |S'| = \delta c \alpha  n = \frac{\delta}{k} c |S|$. Thus, choosing $\delta' = \frac{\delta}{k}$ provides a valid lower bound on the expansion factor of $S$. %

A tighter bound was established in~\cite{chen2023improved} using an expectation argument. The main idea is to choose $S'$ uniformly at random among all subsets of $S$ of size $\alpha n$. Suppose, for the sake of contradiction, that $|N(S)|$ is too small. This would result in an upper bound on $\mathbb{E}[|N(S')|]$ that is too strong, contradicting the fact that every subset $S'$ of size $\alpha n$ must have an expansion factor of at least $\delta$.

We now formalize this argument.
Assume $n$ is large enough such that $d\leq k\alpha n-\alpha n$.
Consider $i\in \{0,1\dots,d\}$ and let $\beta_i \coloneqq \frac{|N_i(S)|}{c\alpha n }$ (i.e., $|N_i(S)| = \beta_i c\alpha n$). For each $v \in N_i(S)$, we calculate the probability of \( v \) being a neighbor of \( S' \):
\[
	\Pr[v \in N(S')] = 1 - \Pr[S' \cap N(v) = \emptyset]=1-\frac{\binom{k\alpha n -i}{\alpha n}}{\binom{k\alpha n}{\alpha n}} = 1 - \left(1 - \frac{1}{k}\right)^i + O\left(\frac{1}{n}\right),
\]
where the last equality
holds by \cref{lem:binom_ratio}.

Therefore, the expected number of neighbors of $S'$ satisfies:
\[
	\E[|N(S')|]  =\sum_{i=1}^d \sum_{v \in N_i(S)}\Pr[v \in N(S')]                                             =O\left(\frac{1}{n}\right)+\sum_{i=1}^{d} \left(1 - \left(1 - \frac{1}{k}\right)^i\right) \cdot \beta_i c\alpha n.
\]
On the other hand, we have $\E[|N(S')|] \geq \delta\alpha n$ by the expansion property.
So the parameters $\beta_i$ must satisfy the following constraint:
\begin{equation}\label{eq:LP_constr1}
	O\left(\frac{1}{n}\right)+\sum_{i=1}^{d} \left(1 - \left(1 - \frac{1}{k}\right)^i\right) \cdot \beta_i c\alpha n\ge \delta\alpha n.
\end{equation}

Another constraint can be derived by double counting the number of edges in $E(S,N(S))$:
\begin{align}
	c\cdot k\alpha n=|E(S,N(S))|=\sum_{i=1}^d \beta_i c\alpha n\cdot i
	\label{eq:LP_constr2}
\end{align}

Also, by definition, the number of neighbors of $S$ is:
\begin{equation}
	|N(S)|=\sum_{i=1}^d |N_i(S)|=c\alpha n \cdot \sum_{i=1}^d\beta_i
	\label{eq:LP_targ}
\end{equation}

Thus, subject to the constraints \eqref{eq:LP_constr1} and \eqref{eq:LP_constr2} in the variables $\beta_i$, the minimum of $c\alpha n \cdot \sum_{i=1}^d\beta_i$ is a lower bound on $|N(S)|$. Consequently, subject to the constraints \eqref{eq:LP_constr1} and \eqref{eq:LP_constr2}, the minimum of $\frac{1}{k} \sum_{i=1}^d\beta_i$ is a lower bound on the expansion factor $\frac{|N(S)|}{c|S|}=\frac{|N(S)|}{ck\alpha n}$ of $S$.

This can be formalized as a linear program (LP), as shown in \cite[Equations (4) and (5)]{chen2023improved}.

\begin{definition}[Degree-Dependent Size-Expansion Function \cite{chen2023improved}]\label{def:original-size-expansion}
	For $k>1$, define \(f_{d,\delta}(k)\) as the optimal value of the following LP:
	\begin{equation*}%
		\begin{aligned}
			\text{minimize} \quad   & \frac{1}{k} \sum_{i=1}^{d} \beta_i                                                       \\
			\text{subject to} \quad & \sum_{i=1}^{d} i \cdot \beta_i = k,                                                      \\
			                        & \sum_{i=1}^{d} \left(1 - \left(1 - \frac{1}{k}\right)^i\right) \cdot \beta_i \ge \delta, \\
			                        & \beta_i \ge 0, \quad \forall i\in [d].
		\end{aligned}
	\end{equation*}
	We call $f_{d,\delta}$ the Degree-Dependent Size-Expansion Function.
\end{definition}
Note that the \( O\left( \frac{1}{n} \right) \) term in the constraint \eqref{eq:LP_constr1} is omitted in this definition. However, we will see in \cref{lem:size_expansion_prop} below that this omission has a negligible effect on the LP value.

To simplify our discussion, we focus on the following LP, where the summation range is extended from $[1, d]$ to $[1, +\infty)$.

\begin{definition}[Size-Expansion Function]\label{def:size-expansion}
	For $k>1$, define \(f_{\delta}(k)\) as the optimal value of the following LP.
	\begin{equation*}%
		\begin{aligned}
			\text{minimize} \quad   & \frac{1}{k} \sum_{i=1}^{\infty} \beta_i                                                       \\
			\text{subject to} \quad & \sum_{i=1}^{\infty} i \cdot \beta_i = k,                                                      \\
			                        & \sum_{i=1}^{\infty} \left(1 - \left(1 - \frac{1}{k}\right)^i\right) \cdot \beta_i \ge \delta, \\
			                        & \beta_i \ge 0, \quad \forall i\in \mathbb{N}^+.
		\end{aligned}
	\end{equation*}
	We call $f_\delta$ the Size-Expansion Function.
\end{definition}

This modification represents a slight relaxation: The value of the new LP is less than or equal to the original, i.e., $f_{\delta}(k)\leq f_{d, \delta}(k)$, resulting in a formally weaker bound on the expansion factor. This is because any feasible solution to the original LP (with \( \beta_i = 0 \) for all \( i > d \)) remains feasible for the new LP, preserving the same objective function value.

Nevertheless, \cref{lem:size_expansion_prop} below states that $f_{d,\delta}(k) = \max \left\{ f_{\delta}(k), \frac{1}{d} \right\}$. In particular, the two are equal when $f_{\delta}(k)\geq 1/d$.
For this reason, we focus on the simpler function $f_{\delta}$, which does not depend on the parameter $d$.

\begin{restatable}{lemma}{LPprops}\label{lem:size_expansion_prop}
	The functions $f_\delta$ and $f_{d,\delta}$ satisfy the following properties:
	\begin{enumerate}%
		\item\label{p1} \( f_{\delta} \) and \( f_{d,\delta}\) are non-increasing.
		\item\label{p2} \( f_{d,\delta}(k) = \max \left\{ f_{\delta}(k), \frac{1}{d} \right\} \).
		\item\label{p3} $f_{\delta}(k)\ge\frac{\delta}{k}$.
		\item\label{p4} Omitting the \( O\left(\frac{1}{n}\right) \) term in the constraint \eqref{eq:LP_constr1} increases the value of $f_{d,\delta}(k)$ by at most \( O\left(\frac{1}{n}\right) \).
		\item\label{p5} $f_\delta(k)$ is continuous with $\lim_{k\to1}f_{\delta}(k)=\delta$ and $\lim_{k\to\infty}f_{\delta}(k)=0$.
	\end{enumerate}
\end{restatable}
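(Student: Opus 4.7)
The plan is to handle Properties~\ref{p1}, \ref{p3}, \ref{p4}, and~\ref{p5} via elementary calculations and a probability-measure reformulation of the LP, and to focus the main effort on Property~\ref{p2}.

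I would first reformulate the LP for $f_\delta(k)$ via the change of variables $\beta_i=(k/i)\,p_i$, which converts $\sum_{i\ge 1}i\beta_i=k$ into $\sum_{i\ge 1}p_i=1$ (so $p$ is a probability measure on $\mathbb{N}^+$), the objective into $\mathbb{E}_p[1/i]$, and the second constraint into $\mathbb{E}_p[h_k(i)]\ge\delta$, where $h_k(i)\coloneqq (k/i)(1-(1-1/k)^i)$. Writing $u=1-1/k$, a short computation gives
\[
i\cdot\partial_k h_k(i)\;=\;1-u^{i-1}\bigl(i-(i-1)u\bigr),
\]
and since $v\mapsto v^{i-1}(i-(i-1)v)$ has derivative $i(i-1)v^{i-2}(1-v)\ge 0$ on $[0,1]$ with value $1$ at $v=1$, we obtain $\partial_k h_k(i)\ge 0$. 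Thus any $p$ feasible at $k$ remains feasible at every $k'>k$ with the same objective, which proves Property~\ref{p1} (the argument is identical for $f_{d,\delta}$ with support restricted to $[d]$). For Property~\ref{p5}, continuity is standard LP sensitivity; the limits follow from the reformulation --- as $k\to 1^+$, $h_k(i)\to 1/i$ collapses the LP to $\min\mathbb{E}_p[1/i]$ subject to $\mathbb{E}_p[1/i]\ge\delta$ with value $\delta$, while taking $p$ concentrated at $i_0=\lfloor\sqrt{k}\rfloor$ as $k\to\infty$ gives $h_k(i_0)\to 1$ and $\mathbb{E}_p[1/i]\to 0$. Property~\ref{p3} is immediate since $1-(1-1/k)^i\le 1$ forces $\sum\beta_i\ge\delta$, hence $f_\delta(k)\ge\delta/k$. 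For Property~\ref{p4}, omitting the $O(1/n)$ term perturbs the right-hand side of one inequality constraint by $O(1/n)$, and by LP sensitivity (the optimal dual multiplier $\mu$ is bounded on compact parameter sets) the optimal value shifts by $O(1/n)$.

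The main work is Property~\ref{p2}. The lower bound $f_{d,\delta}(k)\ge\max\{f_\delta(k),1/d\}$ is direct: restricting support to $[d]$ gives $f_{d,\delta}(k)\ge f_\delta(k)$, while $\sum i\beta_i=k$ with $i\le d$ forces $\sum\beta_i\ge k/d$. For the upper bound, I would exploit the structure of an LP basic optimum $\beta^*$ of $f_\delta(k)$: it has at most two positive entries, supported among the indices where the dual function $h(i)=\lambda i+\mu g_k(i)$ is tight, with $g_k(i)\coloneqq 1-(1-1/k)^i$. The key observation is that $g_k$ is \emph{strictly} concave on $\mathbb{N}^+$, since the forward differences $g_k(i+1)-g_k(i)=(1/k)(1-1/k)^i$ are strictly decreasing (using $k>1$). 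Therefore, when $\mu>0$, $h$ is strictly concave, and tightness at two indices $i_1<i_2$ would force every integer strictly between them to satisfy $h(i)>1/k$, violating dual feasibility --- so $i_2=i_1+1$; when $\mu=0$, the support is a single index. In particular the optimal support never straddles the integer $d$: it lies either entirely in $[1,d]$ or entirely in $[d,\infty)$. In the first case $\beta^*$ is itself feasible for $f_{d,\delta}(k)$, giving objective $f_\delta(k)$. In the second, the single-point solution $\beta_d\coloneqq k/d$ is feasible --- from $g_k(i)/i\le g_k(d)/d$ for $i\ge d$ (concavity of $g_k$ with $g_k(0)=0$) we get $g_k(d)\cdot(k/d)\ge\sum g_k(i)\beta^*_i\ge\delta$ --- and attains objective $1/d$. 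Combining the two cases yields $f_{d,\delta}(k)\le\max\{f_\delta(k),1/d\}$. The main obstacle I expect is this strict-concavity dichotomy, which is precisely what eliminates the awkward ``straddling'' case and avoids any delicate redistribution of mass between positions on either side of $d$.
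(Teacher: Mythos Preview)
Your proposal is correct and takes a genuinely different route. The paper proves all five parts by first solving the LP explicitly: via duality and complementary slackness it derives the closed form $f_\delta(k)=\frac{\delta-(1-1/k)^i}{k-(k+i)(1-1/k)^i}$ on the piece where the optimal support is $\{i,i+1\}$ (together with the truncated analogue for $f_{d,\delta}$), and then reads each property directly off these formulas. You instead argue structurally without ever computing the formula --- the probability-measure change of variables plus $\partial_k h_k\ge 0$ for Property~\ref{p1}, and the strict-concavity adjacent-support dichotomy for Property~\ref{p2}. Your Property~\ref{p2} argument is cleaner than the paper's case-by-case formula comparison, and your $\partial_k h_k\ge 0$ calculation is in fact exactly the step the paper's Property~\ref{p1} proof glosses over (the paper scales the optimizer by $k'/k$ and asserts feasibility at $k'$ without verifying the second constraint, which is equivalent to $h_{k'}\ge h_k$). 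Conversely, the paper is more concrete where you wave hands: its Property~\ref{p4} bounds the relevant dual multiplier by $1/(1-\delta)$ rather than invoking generic LP sensitivity, and its Property~\ref{p5} continuity and limits are checked explicitly from the formula rather than asserted for an infinite LP. The explicit formula is also a useful byproduct of the paper's route (it underlies \cref{lem:f_delta_expr_chen} and the figures), which your approach does not yield.
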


\cref{lem:size_expansion_prop} is proved in \cref{sec:size_expansion_prop}. The proof also provides explicit descriptions of the functions $f_\delta(k)$ and $f_{d,\delta}(k)$ as piecewise rational functions; see \eqref{eq:f_delta_expr} and \eqref{eq:f_d_delta_expr}.

Note that $f_\delta$ and $f_{d,\delta}$ are non-increasing by \cref{lem:size_expansion_prop}. In particular, $f_{\delta}(k)$ is not only a lower bound for the expansion factor of sets of size $k\alpha n$, but also for smaller sets.
By definition, this implies the following result.

\begin{lemma}[Size-Expansion Trade-off~\cite{chen2023improved}]\label{lem:size-expansion_trade_off}
	For $k>1$, a \((c, d, \alpha, \delta)\)-bipartite expander with $n$ left vertices is also a \((c, d, k\alpha, f_\delta(k)-O(\frac{1}{n}))\)-bipartite expander.
\end{lemma}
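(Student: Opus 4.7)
The plan is to formalize the sampling argument already sketched in the excerpt before \cref{def:original-size-expansion}. Fix $S\subseteq L$ with $|S|\leq k\alpha n$. Our goal is to show $|N(S)|\geq (f_\delta(k)-O(1/n))\cdot c|S|$. The easy case is $|S|\leq \alpha n$: the original expansion gives $|N(S)|\geq \delta c|S|$, and since $f_\delta$ is non-increasing (property~\ref{p1} of \cref{lem:size_expansion_prop}) and $f_\delta(k)\leq \lim_{k'\to 1}f_\delta(k')=\delta$, the claim follows. So I will focus on the case $\alpha n<|S|\leq k\alpha n$ and let $k'\coloneq |S|/(\alpha n)\in(1,k]$.

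Next I would apply the random restriction. Pick $S'\subseteq S$ uniformly at random among subsets of size $\alpha n$. For each $i\in[d]$ and each $v\in N_i(S)$, the probability that $v\in N(S')$ equals $1-\binom{|S|-i}{\alpha n}/\binom{|S|}{\alpha n}$, which by \cref{lem:binom_ratio} (the result cited in the derivation preceding the LP) equals $1-(1-1/k')^i+O(1/n)$, with the implicit constant independent of $i\in[d]$ since $d$ is a fixed constant. Summing over $i$ and writing $\beta_i\coloneq |N_i(S)|/(c\alpha n)$, I obtain
\[
\E[|N(S')|]=\sum_{i=1}^d\Big(1-(1-1/k')^i\Big)\beta_i c\alpha n+O(1).
\]
Since $|S'|=\alpha n$, the expansion property of $G$ gives $\E[|N(S')|]\geq \delta c\alpha n$. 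Combined with the edge-counting identity $\sum_{i=1}^d i\beta_i=k'$ and nonnegativity of the $\beta_i$, I get a feasible solution (up to an $O(1/n)$ slack in the second LP constraint) to the LP defining $f_{d,\delta}(k')$, with objective value $\frac{1}{k'}\sum_{i=1}^d\beta_i=|N(S)|/(c|S|)$.

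From here I would invoke properties \ref{p4} and \ref{p2} of \cref{lem:size_expansion_prop}: loosening the LP by $O(1/n)$ decreases its value by only $O(1/n)$, and $f_{d,\delta}(k')\geq f_\delta(k')$. Finally, monotonicity (property~\ref{p1}) gives $f_\delta(k')\geq f_\delta(k)$ since $k'\leq k$. Chaining these bounds yields
\[
\frac{|N(S)|}{c|S|}\geq f_{d,\delta}(k')-O(1/n)\geq f_\delta(k')-O(1/n)\geq f_\delta(k)-O(1/n),
\]
which is exactly what is required. The only delicate point is ensuring the $O(1/n)$ error is uniform over $k'\in(1,k]$ and over the choice of $S$; this follows because $d$, $k$, $\alpha$, $\delta$, and hence all constants hidden in the binomial-ratio estimate and in property~\ref{p4}, are independent of $n$. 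Everything else is a bookkeeping exercise using results already available from \cref{lem:size_expansion_prop}.
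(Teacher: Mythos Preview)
Your approach matches the paper's: the random-sampling/LP argument is sketched in the text preceding \cref{def:original-size-expansion}, and the paper then states \cref{lem:size-expansion_trade_off} as following ``by definition'' from that sketch together with the monotonicity in \cref{lem:size_expansion_prop}. Your write-up is a more careful elaboration of what the paper leaves implicit.

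There is, however, a gap in your uniformity claim at the end. The hidden constant in \cref{lem:binom_ratio} is of order $\frac{i(i-1)}{k'-1}$ (see its proof), so it blows up as $k'\to 1^+$. Since $k'=|S|/(\alpha n)$ can be as small as $1+1/(\alpha n)$, the error you feed into constraint~\eqref{eq:LP_constr1} is \emph{not} uniformly $O(1/n)$ over all $S$ with $\alpha n<|S|\leq k\alpha n$; for $|S|$ just above $\alpha n$ that error is $\Theta(1)$ and the LP bound becomes vacuous. Your sentence ``this follows because $d,k,\alpha,\delta$ \dots\ are independent of $n$'' overlooks that the relevant constant depends on $k'$, not on $k$. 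The fix is easy and does not change your overall plan: since $f_\delta(k)<\delta$ strictly for $k>1$ (e.g., by \cref{lem:f_delta_expr_chen}), choose a constant $k_0\in(1,k]$ with $\delta/k_0>f_\delta(k)$. For $|S|\in(\alpha n,k_0\alpha n]$ use the trivial bound $|N(S)|\geq \delta c\alpha n=(\delta/k')c|S|\geq(\delta/k_0)c|S|>f_\delta(k)\,c|S|$; for $|S|\in(k_0\alpha n,k\alpha n]$ run your LP argument with $k'\geq k_0$ bounded away from $1$, where the constants in \cref{lem:binom_ratio} (and hence in property~\ref{p4}) really are uniformly bounded, giving a genuine $O(1/n)$.
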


The following result was also proved in \cite{chen2023improved}.

\begin{lemma}[\cite{chen2023improved}]
	\( f_{\delta }(k) \)  can be efficiently computed as a piecewise rational function. Specifically, $f_\delta(k) = 1 - (1 - \delta)k$ when $k \leq \frac{1}{2(1 - \delta)}$ (or equivalently, when $f_\delta(k) \geq \frac{1}{2}$).
	\label{lem:f_delta_expr_chen}
\end{lemma}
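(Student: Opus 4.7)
The plan is to use LP duality for the linear program in \cref{def:size-expansion}. Since the LP has only two functional constraints (one equality, one inequality), any optimal basic feasible solution has support of size at most two. I would therefore reduce the problem to enumerating pairs $(p,q)$ of support indices with $p<q$ (and singleton supports): for each such support, solving the resulting $2 \times 2$ linear system yields $\beta_p, \beta_q$ as rational functions of $k$, hence the objective as a rational function of $k$ on the feasibility region $\beta_p, \beta_q \geq 0$. Taking the pointwise minimum over these finitely many competing rational functions on each bounded interval of $k$ gives the desired piecewise rational expression for $f_\delta(k)$.

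For the explicit formula in the range $k \leq \frac{1}{2(1-\delta)}$, I would test the support $\{1,2\}$. Solving the $2 \times 2$ system
\[
\beta_1 + 2\beta_2 = k, \qquad \frac{\beta_1}{k} + \left(\frac{2}{k} - \frac{1}{k^2}\right)\beta_2 = \delta
\]
yields $\beta_2 = k^2(1-\delta)$ and $\beta_1 = k - 2k^2(1-\delta)$, both nonnegative precisely when $k \leq \frac{1}{2(1-\delta)}$. The objective $(\beta_1+\beta_2)/k$ evaluates to $1 - (1-\delta)k$, giving the upper bound.

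For the matching lower bound, I would exhibit the dual solution obtained by requiring the dual constraints at $i=1,2$ to be tight, namely $\mu = (1-k)/k$ and $\lambda = k$. The dual objective $\mu k + \lambda \delta = (1-k) + k\delta = 1 - (1-\delta)k$ matches the primal value. Dual feasibility requires
\[
\frac{(1-k)i}{k} + k\left(1 - (1-1/k)^i\right) \leq \frac{1}{k} \quad \text{for all } i \geq 1.
\]
After substituting $p = 1 - 1/k \in (0,1)$ and clearing $(1-p)^2$, this reduces to $h(p) := (i-2) - (i-1)p + p^{i-1} \geq 0$, which I verify via $h(1) = 0$ together with $h'(p) = (i-1)(p^{i-2}-1) \leq 0$ on $[0,1]$.

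The main obstacle is justifying the restriction to support size two in the infinite-dimensional setting: in finite LP this is the standard basic-feasible-solution characterization, but here one must invoke semi-infinite LP strong duality or, equivalently, verify optimality directly by exhibiting matching primal-dual pairs. For the specific range of $k$ in the lemma, the explicit primal-dual construction above bypasses this subtlety entirely. The general piecewise rational claim then follows from arguing that on each bounded interval of $k$ only finitely many support pairs can be optimal, an argument that requires analyzing how the optimal support index grows with $k$ and is the most delicate part of the overall proof.
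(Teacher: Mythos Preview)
Your primal--dual certificate for the range $k\le \frac{1}{2(1-\delta)}$ is correct: the primal solution supported on $\{1,2\}$ and the dual solution $(\mu,\lambda)=\bigl(\tfrac{1-k}{k},\,k\bigr)$ have matching objective value $1-(1-\delta)k$, and your verification of dual feasibility via $h(p)=(i-2)-(i-1)p+p^{i-1}\ge 0$ is clean. For this specific part of the lemma your argument is complete and, if anything, more self-contained than the paper's, which first proves a general structural result and then specializes to $i=1$.

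Where your proposal diverges from the paper is in the general ``piecewise rational'' claim. You plan to enumerate all support pairs $(p,q)$ with $p<q$; that is an infinite family, and your sketch for reducing to finitely many pairs on each bounded $k$-interval is left vague. The paper (following \cite{chen2023improved}, in the proof of \cref{lem:size_expansion_prop}) sharpens this considerably: using complementary slackness together with the strict concavity of $i\mapsto 1-(1-1/k)^i$, it shows that the optimal primal support must lie on two \emph{adjacent} indices $\{i,i+1\}$. This reduces the competition to a one-parameter family indexed by $i\in\N^+$, and for each $i$ the resulting $2\times 2$ system yields the explicit formula
\[
f_\delta(k)=\frac{\delta-(1-1/k)^i}{\,k-(k+i)(1-1/k)^i\,}
\quad\text{on the region}\quad
\frac{k\bigl(1-(1-1/k)^{i+1}\bigr)}{i+1}\le \delta\le \frac{k\bigl(1-(1-1/k)^{i}\bigr)}{i},
\]
from which the $i=1$ case recovers your formula directly. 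The adjacency observation is exactly the ``delicate part'' you flag, and it is what turns the piecewise rational description from a plan into a proof; your convexity computation for $h(p)$ is in fact the same mechanism that drives it.
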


\cref{fig:size_expansion_func} shows the Size-Expansion Function $f_\delta$ for various values of $\delta$, along with the trivial lower bound $\frac{\delta}{k}$ when $\delta=0.8$. As the figure illustrates, the bound $f_\delta(k)$ is always tighter than $\frac{\delta}{k}$ for $k>1$.

\begin{figure}[H]
	\centering
	\includegraphics[width=0.9\textwidth]{./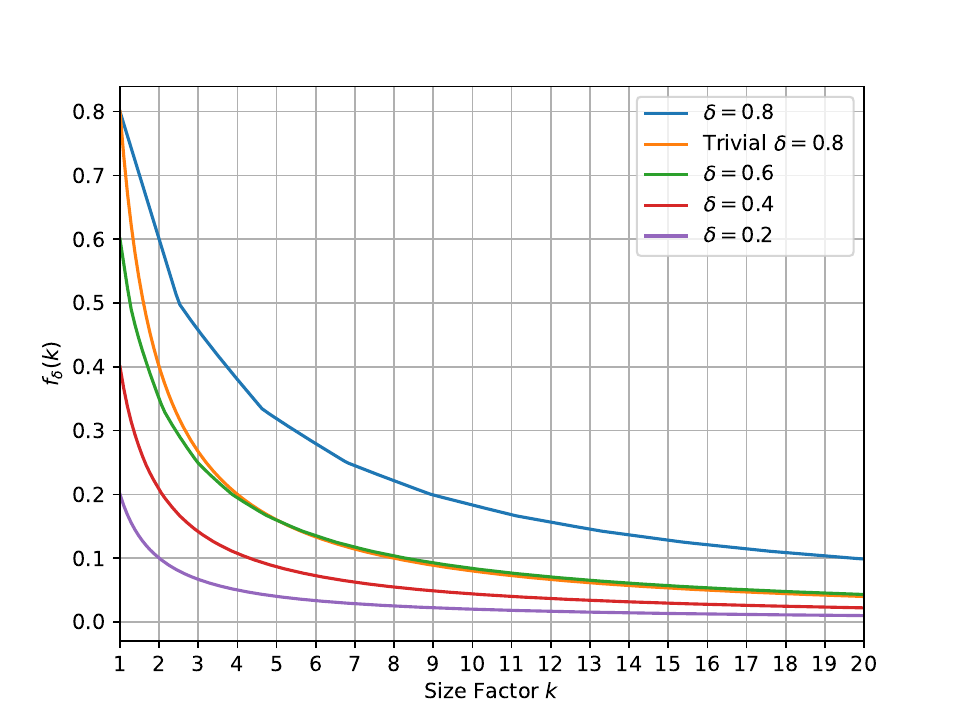}
	\caption{The Size-Expansion Function $f_\delta$. For $\delta=0.8$, the trivial bound $\frac{\delta}{k}$ is also shown.}
	\label{fig:size_expansion_func}
\end{figure}

\subsection{Bounds on Distance and Decoding Radius}

The following theorem gives a lower bound on the minimum distance of a Tanner code $T(G, C_0)$.

\begin{theorem}\label{thm:dis_lower}
	Suppose $\delta d_0>1$.
	The minimum distance of the Tanner code \( T(G,C_0) \) is greater than \(  f_\delta^{-1}\left( \frac{1}{d_0} + \epsilon \right) \alpha  n \) for any constant\footnote{It is likely that the statement holds for sub-constant $\eps$ as well. For simplicity, we do not discuss it further.} \( \epsilon \in(0,\delta-\frac{1}{d_0}) \) and all sufficiently large \( n \).
\end{theorem}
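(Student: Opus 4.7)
The plan is to show that any nonzero codeword $x \in T(G, C_0)$ has Hamming weight strictly greater than $f_\delta^{-1}(1/d_0 + \epsilon)\alpha n$; since $T(G, C_0)$ is linear, this gives the stated distance bound. Let $F = \{i \in [n] : x_i \neq 0\}$ be the support of $x$. By \cref{lem:dis_lower}, the hypothesis $\delta d_0 > 1$ already rules out $|F| \leq \alpha n$, so I may set $k := |F|/(\alpha n) > 1$, a quantity that is also bounded above by $1/\alpha$.

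The core step is to sandwich $|N(F)|$ between two estimates. For the upper bound, I use the code structure: for every $v \in R$, $x_{N(v)} \in C_0$, so if $v$ had between $1$ and $d_0 - 1$ neighbors in $F$, then $x_{N(v)}$ would be a nonzero element of $C_0$ of weight less than $d_0$, contradicting the fact that $C_0$ has minimum distance $d_0$. Hence $N(F) = N_{\geq d_0}(F)$, and \cref{lem:N_get} applied with $t = d_0$ gives $|N(F)| \leq (c/d_0)|F|$. For the lower bound, I invoke the size-expansion trade-off (\cref{lem:size-expansion_trade_off}) at parameter $k$: since $k > 1$ is bounded, $G$ is a $(c, d, k\alpha, f_\delta(k) - O(1/n))$-expander, and therefore $|N(F)| \geq (f_\delta(k) - O(1/n))\,c|F|$.

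Combining the two bounds cancels the common factor $c|F|$ and yields $f_\delta(k) \leq 1/d_0 + O(1/n)$, which for all sufficiently large $n$ is strictly less than $1/d_0 + \epsilon$. By the continuity and monotonicity of $f_\delta$ from \cref{lem:size_expansion_prop}, together with the limits $\lim_{k \to 1} f_\delta(k) = \delta > 1/d_0 + \epsilon$ and $\lim_{k\to\infty} f_\delta(k) = 0$, the value $f_\delta^{-1}(1/d_0 + \epsilon)$ is well-defined, and the strict inequality $f_\delta(k) < 1/d_0 + \epsilon$ forces $k > f_\delta^{-1}(1/d_0 + \epsilon)$. Therefore $|F| = k\alpha n > f_\delta^{-1}(1/d_0 + \epsilon)\,\alpha n$, completing the argument.

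The main subtlety I anticipate is this final inversion: I need strict monotonicity of $f_\delta$ near the preimage value, so that the strict inequality $f_\delta(k) < 1/d_0 + \epsilon$ passes to the strict inequality $k > f_\delta^{-1}(1/d_0 + \epsilon)$ rather than merely $k \geq f_\delta^{-1}(1/d_0 + \epsilon)$. This should follow from the explicit piecewise-rational description of $f_\delta$ promised in the discussion surrounding \cref{lem:size_expansion_prop}, but it is the point where careful bookkeeping is needed. In contrast, the uniformity of the $O(1/n)$ error in $k$ is not an issue, since $k \leq 1/\alpha$ is bounded by a fixed constant, making the error term uniform across all relevant codewords.
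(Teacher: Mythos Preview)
Your argument is correct and uses the same two ingredients as the paper: the size-expansion trade-off and the observation that the support of any nonzero codeword can only have neighbors in $N_{\ge d_0}(F)$, hence expansion factor at most $1/d_0$. The paper organizes these in the reverse order: it fixes $k=f_\delta^{-1}(1/d_0+\epsilon)$ once, applies \cref{lem:size-expansion_trade_off} to view $G$ as a $(c,d,k\alpha,\delta')$-expander with $\delta'=1/d_0+\epsilon-O(1/n)$, and then invokes \cref{lem:dis_lower} as a black box to conclude distance $>k\alpha n$. This sidesteps exactly the subtlety you flagged: since $k$ is fixed in advance rather than recovered by inverting $f_\delta(k)<1/d_0+\epsilon$, no strict-monotonicity argument is needed. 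Your route works too (the explicit piecewise description of $f_\delta$ in \eqref{eq:f_delta_expr} does give strict decrease on each piece), but the paper's ordering is cleaner precisely because it avoids that bookkeeping.
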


\begin{proof}
	Let $\delta' =\frac{1}{d_0} + \epsilon$, $k= f_\delta^{-1}\left(\delta'\right)$, and \(\alpha' =k \alpha\). By \cref{lem:size_expansion_prop}, $f_{\delta}^{-1}(\delta')$ is well-defined.

	Since $G$ is a $(c,d,\alpha,\delta)$-bipartite expander and  $k\alpha =\alpha', f_\delta(k)=\delta'$, applying \cref{lem:size-expansion_trade_off} implies that $G$ is also a $(c,d,\alpha', \delta')$-bipartite expander. By \cref{lem:dis_lower} and the fact that $\delta' d_0=1+d_0\epsilon>1$, we have that the distance of $T(G, C_0)$ is greater than $\alpha'$, thus proving this theorem.
\end{proof}

\begin{remark}
	\cref{thm:dis_lower} generalizes \cite[Theorem~3.1]{chen2023improved}, which establishes a lower bound of $\frac{\alpha}{2(1-\delta)} n$ on the minimum distance of expander codes, where $d_0=2$.
	To see that \cref{thm:dis_lower} recovers the this result, note that when $d_0=2$, we have
	\(f_\delta(k) = 1 - (1 - \delta)k\) when \(f_\delta(k) \geq \frac{1}{2}=\frac{1}{d_0}\) by \cref{lem:size-expansion_trade_off}.
	Therefore, $f_{\delta}^{-1}\left(\frac{1}{d_0}\right)=f_{\delta}^{-1}\left(\frac{1}{2}\right)=\frac{1}{2(1-\delta)}$.
	Substituting this into the bound gives \( f_{\delta}^{-1}\left(\frac{1}{d_0}\right)\alpha n = \frac{\alpha}{2(1-\delta)} n\).
\end{remark}

\begin{remark}
	If the Size-Expansion Function in \cref{thm:dis_lower} is replaced with the trivial bound \(\frac{\delta}{k}\), the resulting distance bound is \(\delta d_0 \alpha n\), as proved in \cite[Lemma~1(b)]{dowlinggao18}. However, this trivial bound is strictly weaker than the bound provided by \cref{thm:dis_lower}.
\end{remark}

The relationship between the distance bound and the parameters \(\delta\) and \(d_0\) is illustrated in the following figure:

\begin{figure}[H]
	\centering
	\includegraphics[width=0.9\textwidth]{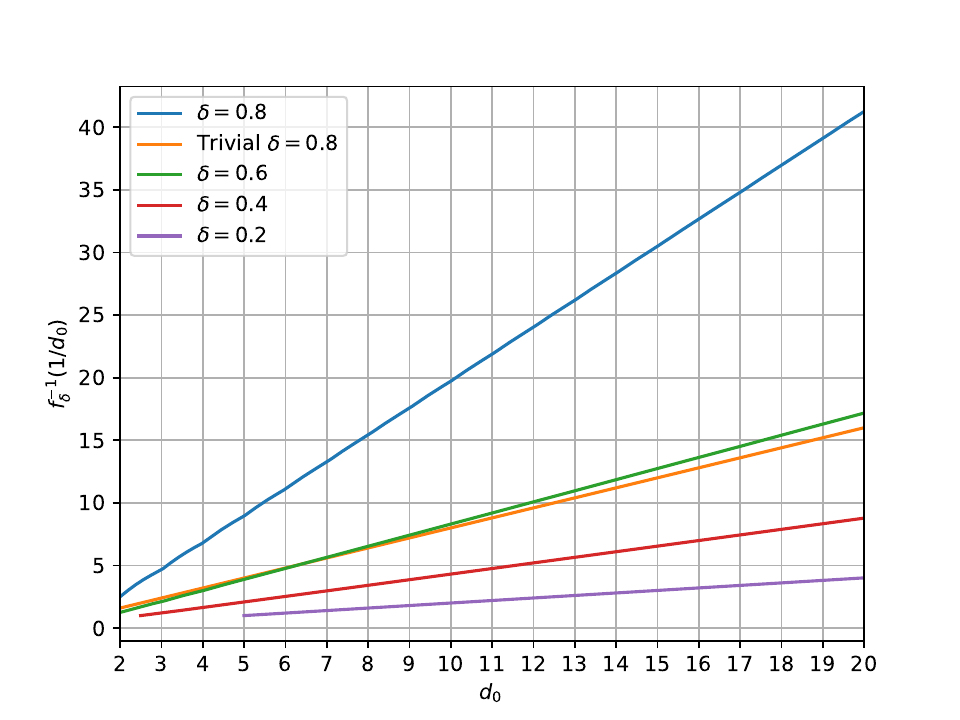}
	\caption{Plot of \(f_\delta^{-1}\left(\frac{1}{d_0}\right)\). For $\delta=0.8$, the factor $\delta d_0$ from the trivial bound $\delta d_0 \alpha n$ is also shown.}
\end{figure}

We have the following theorem on the decoding radius of our previous algorithm.
\begin{theorem}	\label{thm:decode_more}
	Suppose $\delta d_0>2$.	\cref{alg:randdecode} and \cref{alg:maindecode} can decode up to $ f_\delta^{-1}\left(\frac{2}{d_0}+\epsilon\right)\alpha n$ errors in $O(n)$ time for any constant \( \epsilon \in(0,\delta-\frac{2}{d_0}) \) and all sufficiently large \( n \).
\end{theorem}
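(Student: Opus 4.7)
\begin{proofof}{\cref{thm:decode_more} (Proof Proposal)}
The plan is to reinterpret the given graph $G$ as a bipartite expander with a larger left-set threshold and a slightly smaller expansion factor, then reapply \cref{thm:rand_main,thm:deter_main} with these new parameters. Since neither \cref{alg:randdecode} nor \cref{alg:maindecode} actually depends on $\alpha$ or $\delta$ in its execution---only $c$, $d$, and $d_0$ appear in the code---the exact same algorithms will inherit the improved decoding radius once we establish the underlying expansion.

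Concretely, fix $\epsilon \in \left(0, \delta - \tfrac{2}{d_0}\right)$ and set $k \coloneqq f_\delta^{-1}\!\left(\tfrac{2}{d_0} + \epsilon\right)$. The continuity and limits in \cref{lem:size_expansion_prop}\ref{p5}, together with $\lim_{k\to 1} f_\delta(k) = \delta > \tfrac{2}{d_0} + \epsilon$, guarantee that $k$ is a well-defined constant greater than $1$. By the Size-Expansion Trade-off (\cref{lem:size-expansion_trade_off}), the same graph $G$ is a $(c, d, k\alpha, \delta')$-bipartite expander, where $\delta' = f_\delta(k) - O(1/n) = \tfrac{2}{d_0} + \epsilon - O(1/n)$. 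For all sufficiently large $n$, we have $\delta' d_0 \geq 2 + \tfrac{d_0 \epsilon}{2} > 2$, so the hypothesis $\delta' d_0 > 2$ required by \cref{thm:rand_main,thm:deter_main} is satisfied for this reinterpretation.

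The final step is to invoke \cref{thm:rand_main} (respectively \cref{thm:deter_main}) with $(c, d, k\alpha, \delta')$ in place of $(c, d, \alpha, \delta)$. Since the Tanner code $T(G, C_0)$ is unchanged by this reinterpretation, and since the algorithms' code and time analysis depend only on $c$, $d$, $d_0$, and $n$ (all of which are unchanged), the conclusion is that \cref{alg:randdecode} and \cref{alg:maindecode} correctly decode up to $k \alpha n = f_\delta^{-1}\!\left(\tfrac{2}{d_0} + \epsilon\right) \alpha n$ errors in $O(n)$ time.

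The only delicate point---and the main obstacle---is the $O(1/n)$ slack in \cref{lem:size-expansion_trade_off}: the expansion factor we obtain is $\delta'$ rather than exactly $\tfrac{2}{d_0}+\epsilon$. This is why the theorem is stated with a constant $\epsilon > 0$ of slack rather than equality; the constant $\epsilon$ provides a buffer that absorbs the vanishing $O(1/n)$ term so that $\delta' d_0 > 2$ continues to hold for all sufficiently large $n$. One should verify that the $o(1)$ failure probability in \cref{thm:rand_main} remains $o(1)$ under the new parameters, but this is immediate since $c$, $d_0$, and the effective $\eps_0' \coloneqq \tfrac{d_0}{2} - \tfrac{1}{\delta'}$ all remain bounded away from their critical values as $n \to \infty$.
\end{proofof}
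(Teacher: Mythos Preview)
Your approach is essentially identical to the paper's: set $k=f_\delta^{-1}(2/d_0+\epsilon)$, invoke the size-expansion trade-off to view $G$ as a $(c,d,k\alpha,\delta')$-expander with $\delta' d_0>2$, and then reapply \cref{thm:rand_main,thm:deter_main}. You are in fact more careful than the paper in explicitly absorbing the $O(1/n)$ loss from \cref{lem:size-expansion_trade_off} into the $\epsilon$ slack.

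One factual slip: your assertion that ``neither \cref{alg:randdecode} nor \cref{alg:maindecode} actually depends on $\alpha$ or $\delta$ in its execution'' is false for the deterministic side. Both \cref{alg:deepflip} and \cref{alg:maindecode} hard-code $\gamma=(1+c/t)^{-1}\frac{\delta d_0-1}{d_0-1}\alpha$ and $\eps=\frac{\eps_0\delta}{2ct^2}$ (Lines~1, 8, 10, 15), so the pseudocode genuinely uses $\alpha$ and $\delta$. This does not break your argument---the correct phrasing is simply that \cref{thm:deter_main}, applied with parameters $(c,d,k\alpha,\delta')$, yields a (re-instantiated) version of \cref{alg:maindecode} with the claimed radius---but the ``parameter-free'' justification you give should be dropped.
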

\begin{proof}
	Let $\delta' = \frac{2}{d_0} + \epsilon$, $k= f_\delta^{-1}\left(\delta'\right)$, and \(\alpha'= k \alpha\). By \cref{lem:size_expansion_prop}, $f_{\delta}^{-1}(\delta')$ is well-defined.

	Since $G$ is a $(c,d,\alpha,\delta)$-bipartite expander and  $k\alpha =\alpha', f_\delta(k)=\delta'$, applying \cref{lem:size-expansion_trade_off} implies that the graph \(G\) is also a \((c, d, \alpha', \delta')\)-bipartite expander. Since \(\delta' d_0 = 2 + d_0 \epsilon > 2\), it follows from \cref{thm:rand_main} and \cref{thm:deter_main} that   \cref{alg:randdecode} and \cref{alg:maindecode} can both decode up to \(\alpha' n\) errors in linear time.
\end{proof}

\begin{remark}
	If the Size-Expansion Function in \cref{thm:decode_more} is replaced with the trivial bound \(\frac{\delta}{k}\), the resulting decoding radius becomes \(\frac{\delta d_0}{2} \alpha n\). This bound is strictly weaker than the one derived in \cref{thm:decode_more}.
\end{remark}

The relationship between the decoding radius and the parameters \(\delta\) and \(d_0\) is illustrated in the following figure:
\begin{figure}[H]
	\centering
	\includegraphics[width=0.9\textwidth]{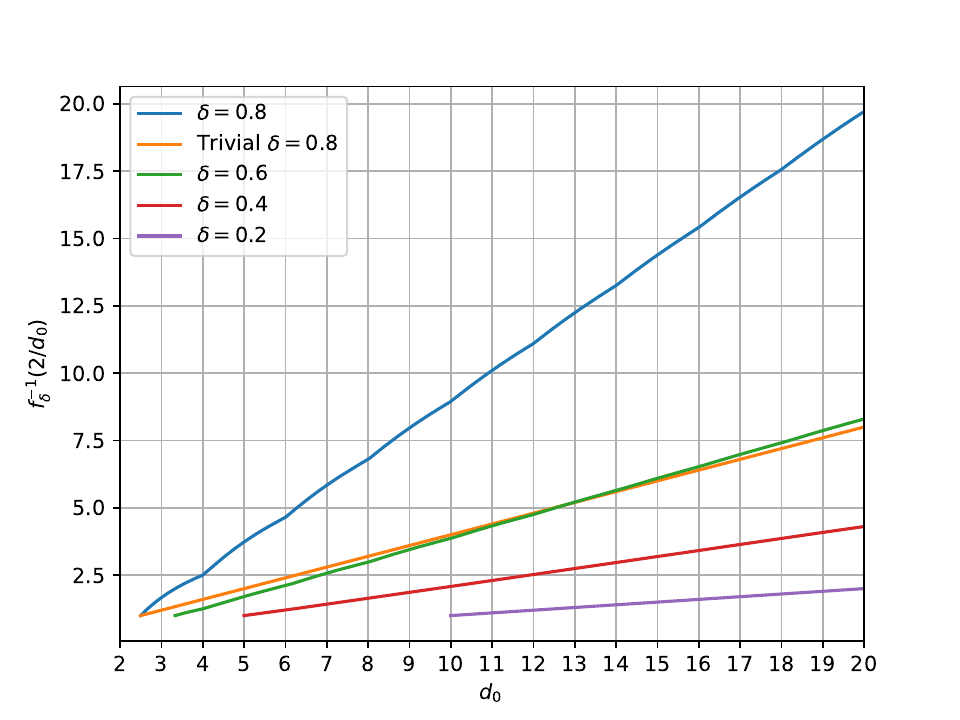}
	\caption{Plot of \(f_\delta^{-1}\left(\frac{2}{d_0}\right)\). For $\delta=0.8$, the factor $\frac{\delta d_0}{2}$ from the trivial bound $\frac{\delta d_0}{2} \alpha n$ is also shown.}
\end{figure}

\subsection{Tightness of the Distance Bound}

The following theorem, \cref{thm:dis_upper}, demonstrates that the distance bound in \cref{thm:dis} is essentially tight. This theorem follows \cite[Theorem~3.2]{chen2023improved} but extends it in two significant ways. First, while \cite[Theorem~3.2]{chen2023improved} establishes the existence of infinitely many expander codes ($d_0 = 2$) with a minimum distance of $\frac{\alpha}{2(1-\delta)} n$, the underlying graphs are only  ``almost regular'' bipartite expanders. \cref{thm:dis_upper} extends the result to strictly regular bipartite expanders. It also applies to Tanner codes $T(G, C_0)$ with $d_0 > 2$.

\begin{theorem}\label{thm:dis_upper}
	Given any constants $\delta, d_0, \epsilon > 0$ with $\delta d_0 > 1$, there exist constants $c$, $d$, and $\alpha$ such that for infinitely many values of $n$, a $(c,d,\alpha,\delta-\epsilon)$-bipartite expander $G$ with $n$ left vertices exists. Moreover, for any linear code $C_0\subseteq\F_2^d$ of minimum distance $d_0$, there exists such a graph $G$ such that, by fixing an appropriate total ordering on $N(v)$ for each $v\in R(G)$, the minimum distance of the resulting Tanner code $T(G,C_0)$ is at most $f_\delta^{-1}\left(\frac{1}{d_0}\right) \alpha n$.
\end{theorem}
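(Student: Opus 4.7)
The plan is to construct a bipartite expander $G$ containing a planted substructure whose indicator yields a low-weight Tanner codeword. Set $k^* \coloneqq f_\delta^{-1}(1/d_0)$, which is well-defined by \cref{lem:size_expansion_prop} since $\delta > 1/d_0$; the target codeword will have support of size $s = \lfloor k^*\alpha n\rfloor$. For the indicator vector of a set $S \subseteq L$ to lie in $T(G,C_0)$, it suffices that every $v \in N(S)$ has exactly $d_0$ neighbors in $S$ and that the ordering on $N(v)$ matches the nonzero positions of a fixed weight-$d_0$ codeword of $C_0$; this in particular forces $|N(S)| = c|S|/d_0$, matching the size-expansion trade-off at $k = k^*$.

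I would construct $G$ as follows. Fix $c,d$ large and $\alpha$ small depending on $\delta, d_0, \epsilon$, and restrict $n$ to an infinite arithmetic progression on which $s$ and $cs/d_0$ are integers. Partition $L = S \sqcup L'$ with $|S| = s$, and $R = R_S \sqcup R'$ with $|R_S| = cs/d_0$ and $|R| = cn/d$. Build a uniformly random $(c, d_0)$-regular bipartite graph $H$ on $S \cup R_S$ via the configuration model, so that every $S$-vertex has all its $c$ edges into $R_S$ and every $R_S$-vertex has exactly $d_0$ neighbors in $S$. Extend $H$ to a $(c, d)$-regular graph $G$ by randomly attaching $d - d_0$ additional edges from each $R_S$-vertex to $L'$, and then completing the degrees of $L'$ to $c$ and of $R'$ to $d$ via uniformly random edges between $L'$ and $R'$; crucially, no edges are placed between $S$ and $R'$, so $N_G(S) = R_S$ is preserved.

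The main obstacle is to show that $G$ is a $(c, d, \alpha, \delta-\epsilon)$-bipartite expander with positive probability over the random edges. For any $T \subseteq L$ with $|T| \leq \alpha n$, decompose $T = T_S \sqcup T_{L'}$. Since $|T_S| \leq \alpha n = (\alpha/k^*)|S|$ is a small fraction of $|S|$, a standard Chernoff plus union-bound argument on random $(c, d_0)$-regular bipartite graphs---see the analysis in \cite{chen2023improved} for $d_0 = 2$, which extends to general $d_0$ once $c$ is taken sufficiently large depending on $\delta, d_0, \epsilon$---shows that $|N_H(T_S)| \geq (\delta - \epsilon/2) c |T_S|$ uniformly over all such $T_S$, with failure probability $o(1)$. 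A similar argument for the random extension edges (treated as a random bipartite multigraph between $L'$ and $R_S \cup R'$ with prescribed degrees) yields analogous expansion of $T_{L'}$ into $R \setminus N_H(T_S)$. Combining the two contributions via union bound, while carefully avoiding double-counting of shared neighbors, gives $|N_G(T)| \geq (\delta - \epsilon) c |T|$, so a valid $G$ exists for infinitely many $n$.

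Finally, I would exhibit the codeword. Since $H$ is right-regular of degree $d_0$ and $N_G(S) = R_S$, every $v \in R_S$ has exactly $d_0$ neighbors in $S$, while every $v \in R'$ has no neighbor in $S$. Choose any weight-$d_0$ codeword $w \in C_0$, which exists since $d_H(C_0) = d_0$. For each $v \in R_S$, order $N(v)$ so that the $d_0$ elements of $N(v) \cap S$ occupy exactly the nonzero coordinates of $w$; for $v \in R'$, use any ordering. Define $x \in \F_2^n$ to be the indicator vector of $S$. Then $x_{N(v)} = w \in C_0$ for every $v \in R_S$, and $x_{N(v)} = 0 \in C_0$ for every $v \in R'$, so $x \in T(G, C_0)$ is a nonzero codeword of weight $|S| = \lfloor k^*\alpha n\rfloor \leq f_\delta^{-1}(1/d_0)\,\alpha n$, bounding the minimum distance of $T(G, C_0)$ as claimed.
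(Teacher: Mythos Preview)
Your construction and the codeword argument are correct and essentially identical to the paper's: plant a random $(c,d_0)$-regular bipartite graph $H$ on $S\cup R_S$ with $|S|=k^*\alpha n$ and $|R_S|=c|S|/d_0$, extend to a $(c,d)$-regular graph, then observe that the indicator of $S$ is a Tanner codeword once the orderings on $N(v)$ match a fixed weight-$d_0$ word of $C_0$.

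The genuine gap is your expansion claim for $T_S \subseteq S$. First, an arithmetic slip: $\alpha n = |S|/k^*$, not $(\alpha/k^*)|S|$, and the fraction $1/k^*$ is \emph{not} small in general --- $k^*=f_\delta^{-1}(1/d_0)$ can be arbitrarily close to $1$. In a random $(c,d_0)$-regular bipartite graph on $|S|$ left vertices, a set of size $|S|/k^*$ has expected expansion factor
\[
\frac{|R_S|\bigl(1-(1-1/k^*)^{d_0}\bigr)}{c\cdot |S|/k^*} \;=\; \frac{k^*}{d_0}\Bigl(1-\bigl(1-\tfrac{1}{k^*}\bigr)^{d_0}\Bigr),
\]
which is independent of $c$, so taking $c$ large helps only with concentration, never with the mean. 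The assertion that this quantity is at least $\delta$ is exactly the inequality $(1-1/k^*)^{d_0}\le 1-\delta d_0/k^*$. This is the nontrivial heart of the proof --- the paper isolates it as a separate claim and proves it via the LP structure of $f_\delta$ --- and it holds \emph{precisely because} $k^*=f_\delta^{-1}(1/d_0)$; it would fail for any larger $k$. For $d_0=2$ the inequality collapses to a one-line quadratic check, which is why \cite{chen2023improved} can proceed directly, but saying it ``extends to general $d_0$ once $c$ is taken sufficiently large'' does not supply this step. In short, the place where the Size-Expansion Function actually enters the tightness argument is exactly the step you are waving away as standard.

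Your combination of the $T_S$ and $T_{L'}$ contributions is also underspecified (neighbors of $T_{L'}$ may land in $R_S$ and collide with $N_H(T_S)$), though this is secondary and can be handled by taking $\alpha$ small enough; the paper avoids the recombination altogether by bounding $\E[|N(T)|]$ for all of $T$ at once via inclusion--exclusion and then applying Azuma.
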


\begin{proof}
	We denote the left and right parts of a bipartite graph $G$ by $L(G)$ and $R(G)$, respectively. Let $k= f_\delta^{-1}\left(\frac{1}{d_0}\right)$.

	We construct two bipartite graphs:
	\begin{itemize}
		\item $G_0$ is a uniformly chosen random $(c,d_0)$-regular graph with $|L(G_0)| = k \alpha n$ and $|R(G_0)| = c n \frac{k\alpha}{d_0}$.
		\item $G_1$ is a uniformly chosen random graph subject to the constraints that (1) $G_1$ is left-regular of degree $c$ with $|L(G_1)| = (1 - k\alpha)n$ and $|R(G_1)| = c n\frac{1}{d}$, and (2)
		      $c n \frac{k\alpha}{d_0}$ right vertices of $G_1$ have degree $d - d_0$, while the remaining ones have degree $d$. The parameter $\alpha$ is chosen to satisfy $\alpha\leq \frac{d_0}{dk}$ so that this is possible.

	\end{itemize}

	We formalize the algorithm used to generate the regular graphs $G_0$ and $G_1$ as \cref{alg:gengraph} in \cref{sec:rg}.

	Next, we merge \( G_0 \) and \( G_1 \) into a single bipartite graph \( G \). Specifically, for each vertex \( u \in R(G_0) \), we pair it with a vertex \( v\in R(G_1) \) of degree \( d - d_0 \), merging \( u \) and \( v \) into a single vertex. All other vertices in \( G_0 \) and \( G_1 \) remain unchanged. This process results in the final graph \( G \).

	\begin{figure}[H]
		\centering
		\includegraphics[width=0.9\textwidth]{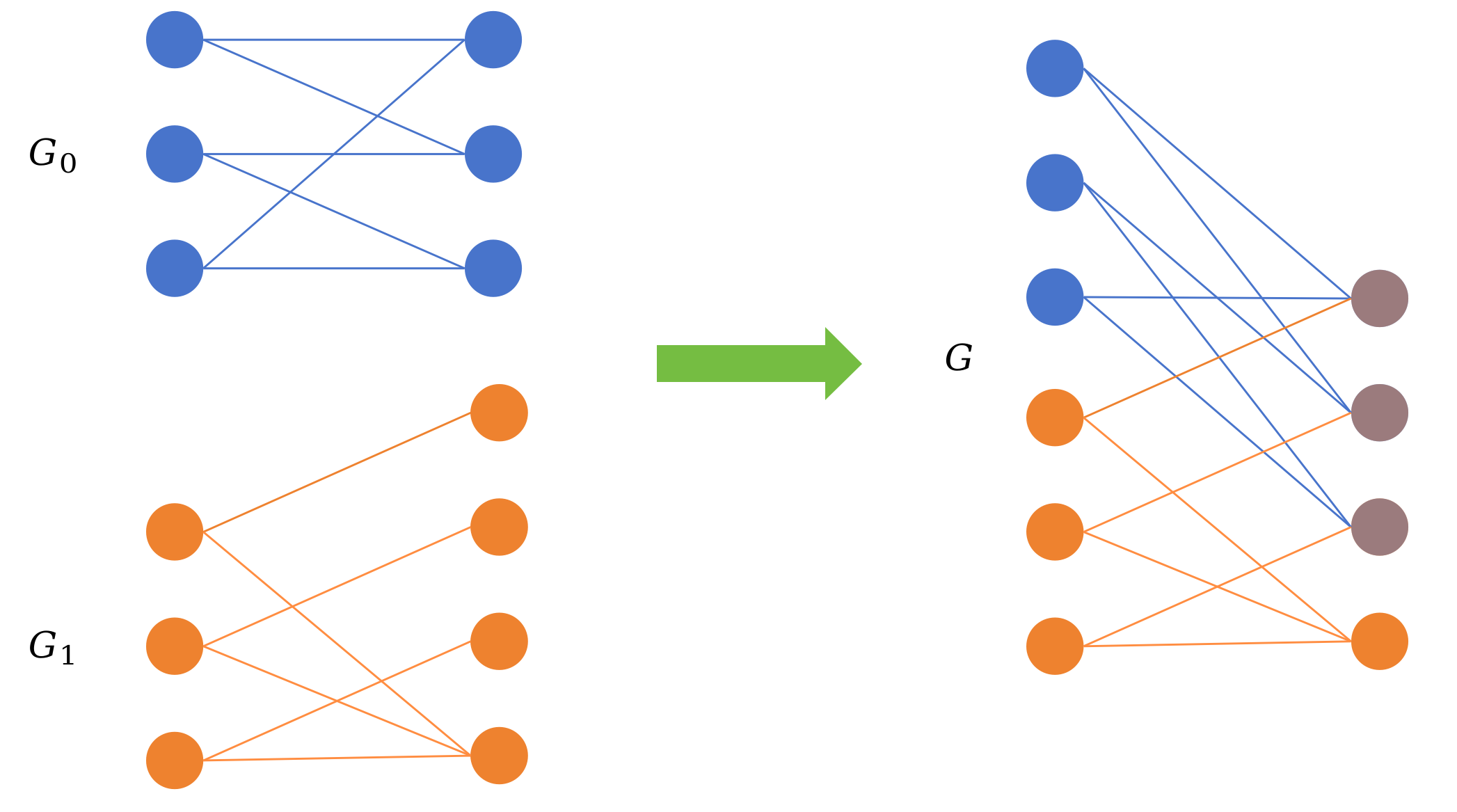}
		\caption{An illustration of merging $G_0$ and $G_1$, where $d_0=2$ and $d=3$.}
	\end{figure}

	Assume without loss of generality that $1^{d_0}0^{d-d_0}\in C_0$.
	Then, for each merged vertex $v\in R(G)$, we may order its $d$ neighbors such that the first $d_0$ neighbors are in $L(G_0)$.
	Then the unique vector in $\F_2^{L(G)}$ with support $L(G_0)$ is a codeword of $T(G,C_0)$, implying that the minimum distance of $T(G, C_0)$ is at most  $|L_0(G)|=k\alpha n$.

	It remains to prove that $G$ is a \((c,d,\alpha,\delta - \epsilon)\)-bipartite expander with high probability.
	Fix a set \( S \subseteq L(G) \) with \( |S| \leq \alpha n \). Let \( S_0 \coloneqq S \cap L(G_0) \) and \( S_1 \coloneqq S \cap L(G_1) \). Suppose \( |S_0| = \alpha_0 n \) and \( |S_1| = \alpha_1 n \), where \( \alpha_0 + \alpha_1 \eqcolon\alpha'\leq \alpha \). Next, we bound the probability that \( N(S) \) is small.
	\paragraph{Bounding the probability when $S$ is large.} Assume $\gamma n\le |S|\le \alpha n$, where $\gamma=O_{\delta,\alpha,d_{0},d}(1)$ is a small constant to be determined later. Denote the set of merged vertices in $R(G)$ as $M$.
	We calculate the expected size of $N(S)$ using Azuma's inequality.
	\begin{align}
		\E[|N(S)|] & =\notag \sum_{u \in R(G)} \Pr[u \in N(S)]                                                                                                                                                                                                                                                      \\
		         & =\notag |R(G)|-\sum_{u \in R(G), u \in M} \Pr[u \not\in N(S)] -\sum_{u \in R(G), u\not \in M} \Pr[u \not\in N(S)]                                                                                                                                                                              \\
		         & \geq c n \left(\frac{1}{d} - \frac{k\alpha}{d_0} \left(1 - \frac{\alpha_0}{k\alpha}\right)^{d_0} \left(1 - \frac{\alpha_1}{1 - k\alpha}\right)^{d-d_0} - \left(\frac{1}{d} - \frac{k\alpha}{d_0}\right)\left(1 - \frac{\alpha_1}{1 - k\alpha}\right)^{d}\right),\label{ineq:expected_neibors1}
	\end{align}
	where the last step holds since for $u\in R(G)$,
	\[
		\Pr[u \not\in N(S)]\leq\begin{cases}
			\left(1 - \frac{\alpha_0}{k\alpha}\right)^{d_0} \left(1 - \frac{\alpha_1}{1 - k\alpha}\right)^{d-d_0} & u\in M,     \\
			\left(1 - \frac{\alpha_1}{1 - k\alpha}\right)^{d}                                                     & u\not\in M.
		\end{cases}
	\]

	We claim \eqref{ineq:expected_neibors1} is bounded from below by
	\begin{align}
		 & c n\left(\frac{1}{d} - \frac{k\alpha}{d_0} \left(1 - \delta d_0 \frac{\alpha_0}{k\alpha}\right) - \left(\frac{1}{d} - \frac{k\alpha}{d_0}\right)(1 - \alpha_1 d+O_{k,d,d_0}(\alpha'^2))\right) \label{ineq:expected_neibors2} \\
		 & = c n \left(\delta \alpha_0 + \alpha_1 - O_{k,d,d_0}( \alpha'^2)\right).\label{ineq:expected_neibors2-line2}
	\end{align}
To see this, we first prove the following claim.
\begin{claim}\label{claim:new}
 \( \left( 1-\frac{1}{k} \right) ^{d_{0}}\le 1-\delta d_{0}\frac{1}{k} \).
\end{claim}
\begin{proof}
Assume to the contrary that the claim is not true. Then $ \left( 1 - \frac{1}{k} \right)^{d_0} = 1 - (1 - \eta) \delta d_0 \cdot \frac{1}{k} $ for some constant $\eta > 0$. 

Define the function $h(x)\coloneq \left( 1 - \frac{1}{k} \right)^{x}+ (1 - \eta) \delta  \frac{x}{k}-1$. Then $h(0)=0$ and $h(d_0)=0$. Note that $h(x)$ is a convex function. So if $h(t)<0$ for some $t\geq d_0$, then we would have $h(d_0)<0$, yielding a contradiction. Therefore, $h(t)\geq 0$ for all $t\geq d_0$, or equivalently,
    \begin{equation} 1 - \left( 1 - \frac{1}{k} \right)^{t} \le (1-\eta)\delta \frac{t}{k}, \quad \forall t\ge d_0. \label{eq:LP_coeff_lower_bound}
    \end{equation}

    By the proof of \cref{lem:size_expansion_prop}, the optimal solution $(\hat{\beta}_i)_{i\in\N^+}$ of $f_{\delta}(k)$ is supported on at most two adjacent indices. That is, the support is either $\{j, j+1\}$ or a singleton $\{j\}$ for some integer $j\in\N^+$.

    \begin{description}
        \item[Case 1:] $j\geq d_0$. By the constraints from \cref{def:size-expansion} and inequality \eqref{eq:LP_coeff_lower_bound}, we have
        \begin{align*}
            \delta &\le \sum_{i=1}^{\infty} \left(1 - \left(1 - \frac{1}{k}\right)^i\right) \cdot \hat{\beta}_i \\
            &= \sum_{i=d_0}^{\infty} \left(1 - \left(1 - \frac{1}{k}\right)^i\right) \cdot \hat{\beta}_i \\
            &\le (1-\eta)\frac{\delta}{k} \sum_{i=d_0}^{\infty} i \cdot \hat{\beta}_i \\
            &= (1-\eta)\frac{\delta}{k} \sum_{i=1}^{\infty} i \cdot \hat{\beta}_i \\
            &= (1-\eta)\delta,
        \end{align*}
        which yields a contradiction since $(1-\eta)\delta < \delta$.

        \item[Case 2:] $j<d_0$. By the constraints from \cref{def:size-expansion}, either $k=j\hat{\beta}_j + (j+1)\hat{\beta}_{j+1}$ or $k=j\hat{\beta}_j$, depending or whether the support of the optimal solution is $\{j, j+1\}$ or $\{j\}$.
        
        In the former case, we have
        $k<d_0 (\hat{\beta}_j+\hat{\beta}_{j+1})=k d_0 f_\delta(k)$. Equivalently,  $f_\delta(k)>\frac{1}{d_0}$. However, this contradicts the definition $k= f^{-1}_{\delta}(\frac1{d_0})$.
       
        In the latter case, we have $k<d_0 \hat{\beta}_j=k d_0 f_\delta(k)$. This similarly leads to a contradiction.
    \end{description}
Since both cases lead to a contradiction, the claim is proved.
\end{proof}

By the convexity of the function $g(x)\coloneqq (1-x)^{d_{0}}$ for $x\in [0,1]$, we know that
	\begin{equation}
		\left( 1 - \frac{\alpha_0}{k \alpha} \right)^{d_0} =g\left(\frac{\alpha_{0}}{k \alpha }\right)\le \left(1-\frac{\alpha_{0}}{\alpha }  \right) g(0)+\frac{\alpha_{0}}{\alpha }g\left(\frac{1}{k}\right)=1-\frac{\alpha_0}{\alpha}+\frac{\alpha_0}{\alpha}(1-\frac{1}{k})^{d_0}\le 1 - \delta d_0 \frac{\alpha_0}{k \alpha},
		\label{ineq:expected_neibors3}
	\end{equation}
    where the last inequality holds by \cref{claim:new}.
	By \eqref{ineq:expected_neibors3}, we see that \eqref{ineq:expected_neibors1}, and consequently $\E[|N(S)|]$, is bounded from below by  \eqref{ineq:expected_neibors2}.

	By choosing $\alpha\geq \alpha'$ to be sufficiently small (but still a constant depending on $k,d,d_0,\epsilon$), we can ensure that the term $O_{k,d,d_0}(\alpha'^2)$ in \eqref{ineq:expected_neibors2-line2} is at most $\alpha' \frac{\epsilon}{2}$. Thus, as $|S|=\alpha_0 n+\alpha_1 n=\alpha' n$, we have

	$$ \E[|N(S)|] \geq c n \left(\delta \alpha_0 + \alpha_1 - \alpha' \frac{\epsilon}{2}\right)\geq  \left(\delta -  \frac{\epsilon}{2}\right)c|S|. $$

	Now, consider the process of revealing the edges going out of $S$ one by one, in an arbitrary order. For $i=0,1,\dots, c|S|$, let
	\[
		X_i=\E [|N(S)| \mid \text{The first }i\text{ edges going out of }S\text{ are revealed}].
	\]
	Clearly, the random variables $X_i$ form a martingale, and $|X_{i+1} - X_i| \leq 1$. By Azuma's inequality (\cref{thm:azuma}), for any $\lambda>0$,
	\begin{align*}
		\Pr[|X_{c|S|} - X_0| \geq \lambda] \leq 2\exp\left(-\frac{\lambda^2}{2 c|S|}\right).
	\end{align*}

	Choose $\lambda = \sqrt{\log \left(2n^2\binom{n}{|S|}\right)\cdot2c|S|} =O\left(\sqrt{c|S|\log n+ c|S|^2\log \frac{1}{\gamma } } \right)= O_{\gamma }(\sqrt{c}|S|)$.
	Also choose sufficiently large $c=\Omega_{\gamma }(\frac{1}{\epsilon^2})$ such that $\lambda\leq \frac{\eps}{2} c|S|$. Then we have
	\begin{align}
		\Pr[|N(S)| \leq (\delta - \epsilon)c |S|]
		\leq \Pr\left[||N(S)|-\E[|N(S)|]|\geq \lambda\right]
		\leq \left(n^2 \binom{n}{|S|}\right)^{-1}.\label{ineq:prob_large_s}
	\end{align}
	\paragraph{Bounding the probability when $S$ is small.}
    Assume \(  1\le |S|< \gamma n \). We claim that with high probability, $|N(S)|\ge\delta c|S|$. By the union bound,
\begin{align*}
\Pr[|N(S)|\le \delta c|S|] & \le \sum_{\substack{T\subseteq R(G),\\|T|=\delta c|S|}}\Pr[N(S)\subseteq T]\\
& \le \binom{\frac{c}{d}n}{\delta c|S|} \cdot \max\left\{\frac{d\cdot \delta c|S|}{cn k\alpha},
		\frac{d\cdot \delta c|S|}{(1-k \alpha )n c}\right\}^{c|S|}\\
&=\binom{\frac{c}{d}n}{\delta c|S|} \cdot \max\left\{\frac{d\cdot \delta |S|}{n k\alpha},
		\frac{d\cdot \delta |S|}{(1-k \alpha )n }\right\}^{c|S|},
	\end{align*}
    where $\max\left\{\frac{d\cdot \delta c|S|}{cn k\alpha},
		\frac{d\cdot \delta c|S|}{(1-k \alpha )n c}\right\}^{c|S|}$ is an upper bound on $\Pr[N(S) \subseteq T]$. 
        
    Using the standard estimate $\binom{n}{m}\leq \left(\frac{ne}{m}\right)^m$, we obtain
    \begin{align*}
        \Pr[|N(S)|\le \delta c|S|]  & \le \left(\frac{ne}{d\delta |S|}\right)^{\delta c|S|}\cdot \max\left\{\frac{d\cdot \delta |S|}{n k\alpha},
		\frac{d\cdot \delta |S|}{(1-k \alpha )n }\right\}^{c|S|}\\
		                         & =
		\left(O_{\delta,\alpha,d_{0},d}(1)\cdot \left(\frac{|S|}{n}\right)^{1-\delta }\right)^{c|S|} 
    \end{align*}
	Choose \( \gamma  \) to be a sufficiently small constant depending on  $\delta$, $\alpha$, $d_{0}$, and $d$. And choose \( c \) to be a sufficiently large constant depending on $\delta$.\footnote{Due to the proof for large sets $S$, the parameter $c$ also needs to depend on $\gamma$ and $\eps$.} We have
	\begin{align}
		\Pr[|N(S)|\le \delta c|S|] \le \left(O_{\delta,\alpha,d_{0},d}(1)\cdot \gamma^\frac{1-\delta}{2}\left(\frac{|S|}{n}\right)^\frac{1-\delta}{2} \right)^{c|S|}  
        \leq \left(\frac{|S|}{en}\right)^{\frac{1-\delta}{2}c|S|} 
        \leq \left(\frac{|S|}{en}\right)^{2|S|}.
		\label{ineq:prob_small_s}
	\end{align}

	\paragraph{Union bound.}
	Combining the bounds \eqref{ineq:prob_large_s} and \eqref{ineq:prob_small_s} and taking the union bound over all $S$, we see that the probability that $|N(S)|\geq (\delta-\eps) c|S|$ for all $S\subseteq L(G)$ of size at most $\alpha n$ is at least
	\begin{align*}
		1 & -\Pr[\exists\,S \subseteq L(G) \text{ such that }  |S|\in [\gamma n, \alpha n] \text{ and } |N(S)|\le (\delta-\epsilon)c|S|] \\
		  & -\Pr[\exists\,S\subseteq L(G)\text{ such that } |S|\in [1,\gamma n) \text{ and } |N(S)|\le (\delta-\epsilon)c|S|]\cr
		  & \ge 1- \sum_{i=\gamma n}^n\binom{n}{i}\left(n^2\binom{n}{i}\right)^{-1}-\sum_{i=1}^{\gamma n-1}\binom{n}{i}\left(\frac{i}{en}\right)^{2i}\cr
          &\ge 1- \sum_{i=\gamma n}^n\binom{n}{i}\left(n^2\binom{n}{i}\right)^{-1}-\sum_{i=1}^{\gamma n-1}\left(\frac{ne}{i}\right)^i\left(\frac{i}{en}\right)^{2i}\cr
		  & \ge 1-o(1).
	\end{align*}
	Therefore, $G$ is a $(c,d,\alpha,\delta-\epsilon)$-bipartite expander with high probability.
\end{proof}

\section{Final Remarks}

There remains a gap between the sufficient condition $\delta d_0 > 2$ required by our algorithms and the necessary condition $\delta d_0 > 1$ established in \cite{cheng2024expandercodecorrectomegan}.
We note that if $\delta d_0$ is not greater than $2$, then there exists a set $S \subseteq [n]$ such that $|N(S)| \leq \frac{2c}{d_0} |S|$. When $S$ is the set of corrupt bits, a random vertex $v\in N(S)$ is expected to have $\frac{c|S|}{|N(S)|}\geq \frac{d_0}{2}$ neighbors that are corrupt. So we cannot expect the local word $x|_{N(v)}$ to be uniquely decodable. This suggests that $\delta d_0 > 2$ might be the best achievable condition using the approach of \cite{dowlinggao18, cheng2024expandercodecorrectomegan}, where algorithms depend solely on the unique decoding of the inner code.

However, local words might still provide useful information even when unique decoding of the inner code is not feasible. For instance, the flip algorithm of Sipser and Spielman \cite{sipser1996expander} works in a setting where the inner code is a parity-check code. In this case, $d_0=2$, making unique decoding of the inner code impossible. Nevertheless, the flip algorithm proves to be effective, requiring only that $\delta \geq \frac{3}{4}$. This observation raises the question of whether insights from this special case can be generalized to achieve improved results for $d_0>2$.

As a complementary result, in \cref{sec:viderman-style} we generalize Viderman's find‑erasures‑and‑decode framework to arbitrary linear inner codes, yielding a deterministic linear‑time decoder that succeeds whenever $\delta > \frac{d_0}{2d_0-1}$.  For $d_0=3$, this requires only $\delta > 0.6$, equivalently $\delta d_0 > 1.8$, thereby improving upon the $\delta d_0>2$ threshold. Unlike the approaches of \cite{dowlinggao18,cheng2024expandercodecorrectomegan}, this framework avoids unique decoding of the inner code and instead relies on erasure decoding, which can tolerate a larger number of corrupted neighbors. It remains open whether this approach can be further developed to break the $\delta d_0>2$ barrier for $d_0>3$.

\section*{Acknowledgments}

The first author thanks Xue Chen for explaining their results \cite{chen2023improved}.
The second author thanks Chong Shangguan and Yuanting Shen for helpful discussions and for explaining their results \cite{cheng2024expandercodecorrectomegan}, and Zihan Zhang for additional discussions.

\bibliographystyle{alpha}
\bibliography{cite}

\appendix
\section{Appendix}

\subsection{Generating a Random Graph}\label{sec:rg}

The following algorithm generates a random graph given the degrees of the right vertices, while all left vertices have degree $c$.

\begin{algorithm}[H]
	\caption{$\mathsf{GenerateGraph}(\deg)$}\label{alg:gengraph}
	\begin{algorithmic}[1]
		\Require{A function $\deg: R\to\N$ }\Comment{Degrees of right vertices}
		\State $E\gets \{(v, j)\mid v \in R, j\in [\deg(v)]\}$ %
		\For{each $u \in L$}
		\For{each $i \in [c]$}
		\State Choose $(v,j) \in E$ uniformly at random
		\State Add an edge connecting $u$ and $v$
		\State $E\gets E \setminus \{(v,j)\}$
		\EndFor
		\EndFor
	\end{algorithmic}
\end{algorithm}

We claim the following property of this process and omit its proof.

\begin{lemma}
	The probability distribution of the graph generated by \cref{alg:gengraph} is independent of the order in which the pairs $ (u, i) \in L \times [c] $ are enumerated.
	\label{lem:randgen_order}
\end{lemma}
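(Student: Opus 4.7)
My plan is to recognize that Algorithm~\ref{alg:gengraph} is essentially a procedure for generating a uniformly random bijection between the set of ``left slots'' $S_L \coloneqq L \times [c]$ and the set of ``right slots'' $S_R \coloneqq \{(v,j) : v \in R, j \in [\deg(v)]\}$ (this requires $|S_L| = |S_R|$, which we may assume since otherwise the algorithm fails to terminate). The output graph is then a deterministic ``forgetful'' projection of this bijection, so invariance under enumeration order reduces to invariance of the uniform distribution on bijections.

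Concretely, I would proceed as follows. Fix an arbitrary enumeration order $\sigma$ of the pairs in $S_L$, say $\sigma_1, \sigma_2, \ldots, \sigma_N$ where $N = |S_L|$. Observe that each execution of the algorithm is uniquely described by the sequence of right slots it draws, which collectively defines a bijection $M: S_L \to S_R$ sending $\sigma_k$ to the $k$-th slot drawn. For any fixed bijection $M$, the algorithm produces $M$ precisely when at each step $k$ it picks the element $M(\sigma_k)$ from the $N - k + 1$ remaining right slots, an event of probability
\[
\prod_{k=1}^{N} \frac{1}{N - k + 1} = \frac{1}{N!}.
\]
Hence the induced distribution on bijections is the uniform distribution on $\mathrm{Bij}(S_L, S_R)$, regardless of the ordering $\sigma$.

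Finally, I would observe that the output graph is obtained from $M$ by the deterministic projection $(u, i) \mapsto u$ and $(v, j) \mapsto v$: the edge multiset of the graph consists of the pairs $\{u, v\}$ for every $((u, i), (v, j))$ in $M$. Because this projection map does not reference the enumeration order, and because the distribution over $M$ is uniform (hence identical) for every choice of $\sigma$, the pushforward distribution on graphs is also identical for every choice of $\sigma$. This proves the lemma.

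I do not anticipate significant obstacles. The only mild subtlety is setting up the correspondence between algorithm executions and bijections cleanly, and verifying that the denominator $N - k + 1$ in the probability calculation truly does not depend on which specific earlier slots were drawn, only on how many. This is where the ``without replacement'' structure of the uniform draw from $E$ is essential.
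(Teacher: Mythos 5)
Your proof is correct, and in fact the paper offers nothing to compare it against: the authors explicitly state the property ``and omit its proof,'' so your argument fills a gap rather than paralleling an existing one. The core idea is sound: conditioning on the enumeration order $\sigma$, each run of the algorithm determines a map $M$ from left slots $S_L = L\times[c]$ to right slots $S_R=\{(v,j): v\in R,\ j\in[\deg(v)]\}$, and the chain rule gives probability $\prod_{k=1}^{N}\frac{1}{N-k+1}=\frac{1}{N!}$ for every fixed bijection $M$, since at step $k$ exactly $N-k+1$ right slots remain no matter which ones were consumed earlier. The induced distribution on $M$ is therefore uniform for every $\sigma$, and the graph is the $\sigma$-independent pushforward under $(u,i)\mapsto u$, $(v,j)\mapsto v$, which proves the lemma.

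One small inaccuracy worth fixing: your justification for assuming $|S_L|=|S_R|$ is not quite right. If $|S_R|>|S_L|$ the algorithm terminates perfectly well, simply leaving some right slots unused; only $|S_R|<|S_L|$ makes a draw from an empty set occur. The clean statement is that your argument needs no equality assumption at all: when $|S_R|\ge|S_L|$ the run determines an injection $M:S_L\to S_R$, and the same computation gives probability $\prod_{k=1}^{|S_L|}\frac{1}{|S_R|-k+1}$ for every fixed injection, again independent of $\sigma$, so the pushforward argument goes through verbatim. (In the paper's actual use for $G_0$ and $G_1$ the slot counts do balance, e.g.\ $c\cdot k\alpha n = \frac{ck\alpha n}{d_0}\cdot d_0$ for $G_0$, so the bijection case is the one that arises, but the injection formulation makes the lemma unconditional.)
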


\subsection{A Combinatorial Lemma}

\begin{lemma}
	Let $ m > 0 $ and $ i\geq 0 $ be integers. Let $ k > 1 $ be a real number such that %
	$i\leq km-m$. Then,
	\[
		\left(1 - \frac{1}{k}\right)^i \geq  \frac{\binom{km-i}{m}}{\binom{km}{m}} = \left(1 - \frac{1}{k}\right)^i - O_{i,k}\left(\frac{1}{m}\right) ,
	\]
	where the $ O(\cdot) $ term is with respect to the growing parameter $ m $.
	\label{lem:binom_ratio}
\end{lemma}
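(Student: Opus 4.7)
The plan is to expand the binomial ratio as a telescoping product and compare each factor to $1 - 1/k$. Using the standard identity $\binom{n}{m} = n!/(m!(n-m)!)$, I would rewrite
\[
\frac{\binom{km-i}{m}}{\binom{km}{m}} = \frac{(km-i)!\,(km-m)!}{(km)!\,(km-i-m)!} = \prod_{j=0}^{i-1} \frac{km-m-j}{km-j} = \prod_{j=0}^{i-1}\left(1 - \frac{m}{km-j}\right).
\]
The hypothesis $i \leq km-m$ ensures that $km-m-j \geq 0$ for each $j$ in the range, so every factor is nonnegative and at most $1$.

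For the upper inequality $\bigl(1-\tfrac1k\bigr)^i \geq \binom{km-i}{m}/\binom{km}{m}$, I would observe that for each $j \in \{0,1,\dots,i-1\}$, since $j \geq 0$ we have $km-j \leq km$, hence $\frac{m}{km-j} \geq \frac{1}{k}$, which gives $1 - \frac{m}{km-j} \leq 1 - \frac{1}{k}$. Taking the product over $j$ yields the claimed bound.

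For the lower inequality, I would write
\[
1 - \frac{m}{km-j} = \left(1 - \frac{1}{k}\right) - \frac{j}{k(km-j)},
\]
so each factor differs from $1 - 1/k$ by at most $\frac{j}{k(km-j)} = O_{j,k}(1/m)$, provided $m$ is large enough (say $m \geq 2i/k$, so that $km-j \geq km/2$ for all $j < i$). Then applying the elementary bound $\bigl|\prod_{j=0}^{i-1} a_j - \prod_{j=0}^{i-1} b_j\bigr| \leq \sum_{j=0}^{i-1} |a_j - b_j|$, valid when all $a_j, b_j \in [0,1]$, with $a_j = 1 - 1/k$ and $b_j = 1 - m/(km-j)$, I get
\[
\left|\left(1 - \frac{1}{k}\right)^i - \prod_{j=0}^{i-1}\left(1 - \frac{m}{km-j}\right)\right| \leq \sum_{j=0}^{i-1} \frac{j}{k(km-j)} = O_{i,k}\!\left(\frac{1}{m}\right),
\]
which is the desired lower bound. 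No serious obstacle is expected; the only point demanding mild care is ensuring the error bound on the product is uniform in the constants that depend on $i$ and $k$, which is handled by the elementary inequality above.
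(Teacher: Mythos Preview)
Your proposal is correct and follows essentially the same approach as the paper: both rewrite the ratio as the product $\prod_{j=0}^{i-1}\frac{(k-1)m-j}{km-j}$, obtain the upper bound by noting each factor is at most $1-\tfrac1k$, and bound the deficit by $\sum_{j<i}\frac{j}{k(km-j)}=O_{i,k}(1/m)$. The only cosmetic difference is that the paper factors out $(1-\tfrac1k)^i$ and applies $\prod_j(1-x_j)\geq 1-\sum_j x_j$, whereas you invoke the telescoping bound $\bigl|\prod a_j-\prod b_j\bigr|\leq\sum|a_j-b_j|$; these are equivalent for the purpose at hand.
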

\begin{proof}
	The ratio can be expressed as:
	\[
		\frac{\binom{km-i}{m}}{\binom{km}{m}} = \frac{(km-i)(km-i-1)  \ldots((k-1)m-i+1)}{km(km-1)  \ldots((k-1)m+1)}.
	\]
	Since the product $(km-i)\cdots((k-1)m+1)$ appears in both the numerator and denominator, by factoring out the $ \left(1 - \frac{1}{k}\right) $ terms, we can simplify the ratio as follows:
	$$\frac{\binom{km-i}{m}}{\binom{km}{m}} =\prod_{j=0}^{i-1}\frac{(k-1)m-j}{km-j}
		=\left( 1-\frac{1}{k} \right)^i\cdot \prod_{j=0}^{i-1}\left(1-\frac{j}{(k-1)(km-j)}\right)\leq \left(1 - \frac{1}{k}\right)^i.
	$$
	Further, we bound the additional factor:
	\begin{align*}
		\prod_{j=0}^{i-1}\left(1-\frac{j}{(k-1)(km-j)}\right) & \ge 1-\sum_{j=0}^{i-1}\frac{j}{(k-1)(km-j)} \\
		                                                      & \ge  1-\frac{i(i-1)}{2(k-1)(km-i)}
		.\end{align*}
	Therefore, we have
	\begin{align*}
		\frac{\binom{km-i}{m}}{\binom{km}{m}} & \ge  \left(1 - \frac{1}{k}\right)^i \left( 1-\frac{i(i-1)}{2(k-1)(km-i)} \right) \\
		                                      & \ge  \left(1 - \frac{1}{k}\right)^i-  \frac{i(i-1)}{2(k-1)(km-i)}                \\
		                                      & = \left(1 - \frac{1}{k}\right)^i - O_{k,i}\left(\frac{1}{m}\right).
	\end{align*}
	This completes the proof.
\end{proof}

\subsection{Properties of the Size-Expansion Function}
\label{sec:size_expansion_prop}

Next, we restate and prove \cref{lem:size_expansion_prop}.

\LPprops*

\begin{proof}
	We prove these properties separately.
	\paragraph{Monotonicity.}

	Let $ k' > k $, and let $ (\hat{\beta_i})_{i\in\N^+} $ denote the optimal solution to the LP defining $f_{\delta}(k)$. Then, $(\frac{k'}{k} \hat{\beta_i})_{i\in\N^+}$ is a feasible solution to the LP defining $f_{\delta}(k')$,
	and the corresponding value of the objective function is
	\[
		\frac{1}{k'}\sum_{i=1}^\infty \frac{k'}{k} \hat{\beta_i}=\frac{1}{k}\sum_{i=1}^\infty\hat{\beta_i}=f_{\delta}(k).
	\]
	Therefore, $f_{\delta}(k')\leq f_{\delta}(k)$. The same argument applies to $f_{d,\delta}$. This proves \cref{p1}.
	\paragraph{Relation between $f_\delta$ and $f_{d,\delta}$.}Recall the definition of $f_{\delta}(k)$ from \cref{def:size-expansion}, which is the minimum of the following LP:
	\begin{equation}
		\begin{aligned}
			\text{minimize} \quad   & \frac{1}{k} \sum_{i=1}^{\infty} \beta_i                                                       \\
			\text{subject to} \quad & \sum_{i=1}^{\infty} i \cdot \beta_i = k,                                                      \\
			                        & \sum_{i=1}^{\infty} \left(1 - \left(1 - \frac{1}{k}\right)^i\right) \cdot \beta_i \ge \delta, \\
			                        & \beta_i \ge 0, \quad \forall i.
		\end{aligned}
		\label{eq:LP}
	\end{equation}

	We now show that the optimal solution is achieved by $\hat{\beta}=(\hat{\beta}_i)_{i\in\N^+}$ that is supported on at most two adjacent indices. The proof follows \cite[Lemma~3.2]{chen2023improved}.

	The dual of this LP is:
	\begin{equation}
		\begin{aligned}
			\text{maximize} \quad   & kx + \delta y                                                                                                   \\
			\text{subject to} \quad & x \cdot i + y \left(1 - \left(1 - \frac{1}{k}\right)^i\right) \le \frac{1}{k}, \quad \forall i \in \mathbb{N}^+ \\
			                        & y \ge 0.
		\end{aligned}
		\label{eq:DP}
	\end{equation}

	Assume, for contradiction, that $\hat{\beta}_i, \hat{\beta}_j \neq 0$ with $j > i + 1$. By the complementary slackness conditions, the corresponding constraints in Equation~\eqref{eq:DP} must be satisfied as equalities:
	\[
		x \cdot i + y \left(1 - \left(1 - \frac{1}{k}\right)^i\right) = \frac{1}{k}
	\]
	and
	\[
		x \cdot j + y \left(1 - \left(1 - \frac{1}{k}\right)^j\right) = \frac{1}{k}.
	\]
	These two equalities imply that $y \neq 0$.

	The function $1 - \left(1 - \frac{1}{k}\right)^i$ is convex with respect to $i$. In particular:
	\[
		\left(1 - \left(1 - \frac{1}{k}\right)^i\right) - \left(1 - \left(1 - \frac{1}{k}\right)^{i-1}\right)
		> \left(1 - \left(1 - \frac{1}{k}\right)^{i+1}\right) - \left(1 - \left(1 - \frac{1}{k}\right)^i\right).
	\]
	From the two equalities derived from the slackness conditions, $y > 0$, and the convexity of the function, we obtain:
	\[
		x \cdot (i + 1) + y \left(1 - \left(1 - \frac{1}{k}\right)^{i+1}\right) > \frac{1}{k}.
	\]
	This contradicts the constraints of the dual program. Thus, we have shown that $\hat{\beta}$ must be supported on at most two adjacent indices.

	Now, assume that the support of $\hat{\beta} $ is contained in $\{i,i+1\}$ for some $i\in\N^+$. Then, the LP \eqref{eq:LP} simplifies to:
	\begin{align*}
		\text{minimize} \quad   & \frac{1}{k}(\beta_i + \beta_{i+1})                                                                                                                \\
		\text{subject to} \quad & \beta_i \cdot i + \beta_{i+1} \cdot (i+1) = k,                                                                                                    \\
		                        & \beta_i \cdot \left(1 - \left(1 - \frac{1}{k}\right)^i\right) + \beta_{i+1} \cdot \left(1 - \left(1 - \frac{1}{k}\right)^{i+1}\right) \ge \delta, \\
		                        & \beta_i, \beta_{i+1} \ge 0.
	\end{align*}

	The optimal value of this problem is given by:
	\begin{align}
		\begin{cases}
			\frac{\delta - (1 - \frac{1}{k})^i}{k - (k + i)(1 - \frac{1}{k})^i}, \quad & \text{for } \frac{k \left(1 - \left(1 - \frac{1}{k}\right)^{i+1}\right)}{i+1} \le \delta \le \frac{k \left(1 - \left(1 - \frac{1}{k}\right)^i\right)}{i}, \\
			\frac{1}{i+1}, \quad                                                       & \text{for } \delta < \frac{k \left(1 - \left(1 - \frac{1}{k}\right)^{i+1}\right)}{i+1},                                                                   \\
			\text{infeasible}, \quad                                                   & \text{otherwise}.
		\end{cases}
	\end{align}

	By comparing the values above for different choices of $ i $, we find the value of $f_{\delta}(k)$ is given by:
	\begin{equation}
		\label{eq:f_delta_expr}
		f_\delta(k) = \frac{\delta - (1 - \frac{1}{k})^i}{k - (k + i)(1 - \frac{1}{k})^i} \quad\text{for } i\in\N^+ \text{ such that } \frac{k \left(1 - \left(1 - \frac{1}{k}\right)^{i+1}\right)}{i+1} \le \delta \le \frac{k \left(1 - \left(1 - \frac{1}{k}\right)^i\right)}{i}.
	\end{equation}

	The same analysis applies to the LP defining $ f_{d,\delta} $, with the only difference being that the choice of $ i $ is restricted to the range $[d-1]$. Therefore, we have the expression for $ f_{d,\delta}(k) $ as follows:
	\begin{equation}
		\label{eq:f_d_delta_expr}
		f_{d,\delta}(k) =
		\begin{cases}
			\frac{1}{d}, \quad                                                         & \text{for } \delta < \frac{k \left(1 - \left(1 - \frac{1}{k}\right)^d\right)}{d},                                                                                                       \\
			\frac{\delta - (1 - \frac{1}{k})^i}{k - (k + i)(1 - \frac{1}{k})^i}, \quad & \text{for } i\in [d-1] \text{ such that } \frac{k \left(1 - \left(1 - \frac{1}{k}\right)^{i+1}\right)}{i+1} \le \delta \le \frac{k \left(1 - \left(1 - \frac{1}{k}\right)^i\right)}{i}.
		\end{cases}
	\end{equation}

	Next, we consider the condition $ f_\delta(k) \neq f_{d,\delta}(k) $, which implies that the support of the optimal solution $\hat{\beta}$ of the LP \eqref{eq:LP} has the form $\{i+1\}$ or $\{i,i+1\}$ with $i\geq d$. Therefore, comparing \eqref{eq:f_d_delta_expr} with \eqref{eq:f_delta_expr} proves \cref{p2}.

	\paragraph{Improvement over the trivial bound $\frac{\delta}{k} $.}
	By substituting $ x = 0 $ and $ y = \frac{1}{k} $ into the dual program \eqref{eq:DP}, we see that $ f_{\delta}(k)\geq \frac{\delta}{k} $, which establishes \cref{p3}.

	\paragraph{Effect of omitting the $ O\left(\frac{1}{n}\right) $ term.}
	Let $ f^*_{d,\delta}(k) $ be the size-expansion trade-off defined by the LP without omitting the error term. The only difference from the LP defining $ f_{d,\delta}(k) $ is that the constraint becomes $ \sum_{i=1}^{d} \left(1 - \left(1 - \frac{1}{k}\right)^i\right) \beta_i \ge \delta - O\left(\frac{1}{n}\right) $, and the dual objective function changes to $ kx + \left(\delta - O\left(\frac{1}{n}\right)\right)y $.

	Let $ (\hat{x}, \hat{y}) $ be the optimal solution for the dual program for $ f_{d,\delta}(k) $, so $ f_{d,\delta}(k) = k\hat{x} + \delta \hat{y} $. Since $ \hat{x}, \hat{y} $ also satisfy the dual program for $ f^*_{d,\delta}(k) $, we obtain the bound:
	\[
		f^*_{d,\delta}(k) \ge f_{d,\delta}(k) - \hat{y}\cdot  O\left( \frac{1}{n} \right).
	\]

	Next, we bound $\hat y$. The constraint in \eqref{eq:DP} with $ i = 1 $ states that $ \hat{x} + \frac{\hat{y}}{k} \le \frac{1}{k} $, i.e., $ \hat{x} \le \frac{1 - \hat{y}}{k} $. Therefore,
	\[
		f_{d,\delta}(k) = k \hat{x} + \delta \hat{y} \le 1 - \hat{y} + \delta \hat{y}.
	\]
	By \cref{p3}, $f_{d,\delta}(k)\geq \frac{\delta}{k}\geq 0$. So $y\leq \frac{1}{1-\delta}=O(1)$.
	Thus, $f^*_{d,\delta}(k) \ge f_{d,\delta}(k) -   O\left( \frac{1}{n} \right)$, which
	proves \cref{p4}.

	\paragraph{Continuity and range.} The continuity of $f_\delta$ can be established via the explicit description of $f_\delta$ given in \eqref{eq:f_delta_expr}.
	When $k$ is located at the endpoint of two pieces, namely, when
	\[
		\frac{k \left(1 - \left(1 - \frac{1}{k}\right)^{i+1}\right)}{i+1} = \delta
	\]
	for some positive integer $i$, we have
	\[
		\lim_{x \to k^-} f_{\delta}(x) = \frac{\delta - \left(1 - \frac{1}{k}\right)^i}{k - (k + i)\left(1 - \frac{1}{k}\right)^i} = \frac{1}{i+1} = \frac{\delta - \left(1 - \frac{1}{k}\right)^{i+1}}{k - (k + i+1)\left(1 - \frac{1}{k}\right)^{i+1}} = \lim_{x \to k^+} f_{\delta}(x).
	\]
	Thus, $f_\delta$ is continuous.

	To prove that  $\lim_{k \to 1} f_{\delta}(k) = \delta$, note that for bounded $k$, the value of $i$ chosen in \eqref{eq:f_delta_expr}
	satisfies $i=O(1/\delta)$.
	Thus,
	\[
		\lim_{k \to 1} f_{\delta}(k)=\lim_{k\to 1} \frac{\delta - (1 - \frac{1}{k})^i}{k - (k + i)(1 - \frac{1}{k})^i} =\lim_{k\to 1}\frac{\delta}{k}=\delta.
	\]

	Next, we prove that $\lim_{k \to \infty} f_{\delta}(k) = 0$. For any fixed $i$, we have
	\[
		\lim_{k \to \infty} \frac{k \left(1 - \left(1 - \frac{1}{k}\right)^{i+1}\right)}{i+1} = 1.
	\]
	Thus, as $k \to \infty$, the value of $i$ chosen in \eqref{eq:f_delta_expr} also tends to infinity. Therefore, the value of the function satisfies
	\[
		\lim_{k \to \infty} f_\delta(k) = \lim_{k \to \infty} \frac{\delta - \left(1 - \frac{1}{k}\right)^i}{k - (k + i)\left(1 - \frac{1}{k}\right)^i} \le \lim_{k \to \infty} \frac{1}{i} = 0,
	\]
	which completes the proof of \cref{p5}.
\end{proof}

\section{Generalizing Viderman's Algorithm to Tanner Codes}
\label{sec:viderman-style}

In this section, we generalize Viderman's find-erasures-and-decode algorithm~\cite{viderman2013}. Viderman's algorithm first identifies a small superset of the corrupt bits and then decodes this set as erasures. We show that this approach extends to arbitrary linear inner codes, yielding a deterministic linear-time decoder in the regime $\delta>\frac{d_0}{2d_0-1}$. In particular, when $d_0=3$, this yields the below-two regime $\delta d_0>1.8$.

Throughout this section, we set the threshold parameter:
\begin{equation}
	\label{eq:viderman-h}
	h\coloneq \frac{c(\delta d_0-1)}{d_0-1}.
\end{equation}
Note that $h\le c$, and under the condition $\delta>\frac{d_0}{2d_0-1}$ we have $h>c(1-\delta)$. The latter inequality ensures that the find-erasure step identifies a superset of the corrupt set $F$ that is sufficiently close to $F$.

We first present the algorithm for finding erasures.

\begin{algorithm}[H]
	\caption{$\mathsf{FindErasure}(x)$}\label{alg:finderasure}
	\begin{algorithmic}[1]
		\Require{$x=(x_1,\dots,x_n)\in \F_2^n$}
		\State $h\gets \frac{c(\delta d_0-1)}{d_0-1}$
		\State $L'\gets \emptyset$, $R'\gets U(x)$
		\While{$\exists\ i\in L\setminus L' \text{ such that } |N(i)\cap R'|\geq h$}
		\State $L'\gets L' \cup\{i\}$
		\State $R'\gets R' \cup N(i)$
		\EndWhile
		\State \Return $L'$
	\end{algorithmic}
\end{algorithm}

Next, we describe the erasure decoding algorithm $\mathsf{ErasureDecode}(x,L')$, which takes as input the received word $x$ and a set $L'\subseteq L$ that is intended to be a superset of $F$.

This algorithm invokes a subroutine $\erasuredecode$, the erasure decoder for the inner code $C_0$. Formally, given a local word $z\in\F_2^d$ and a set of erased coordinates $E\subseteq[d]$ with $|E|<d_0$, $\erasuredecode(z,E)$ returns the unique codeword $\omega\in C_0$, if it exists, such that $\omega_i=z_i$ for all $i\notin E$.

In the algorithm below, the set $N(v)\cap L'$ is identified with the corresponding local coordinate positions of $C_0$ when evaluating $\erasuredecode(\hat{x}_{N(v)},N(v)\cap L')$.

\begin{algorithm}[H]
	\caption{$\mathsf{ErasureDecode}(x,L')$}\label{alg:erasuredecode}
	\begin{algorithmic}[1]
		\Require{$x=(x_1,\dots,x_n)\in \F_2^n$ and $L'\subseteq L$}
		\State $\hat x\gets x$
		\While{$L'\neq \emptyset$}
		\State Find $v\in R$ such that $1\leq |N(v)\cap L'|\leq d_0-1$
		\State $\hat x_{N(v)}\gets \erasuredecode(\hat x_{N(v)},N(v)\cap L')$
		\State $L'\gets L'\setminus N(v)$
		\EndWhile
		\State \Return $\hat x$
	\end{algorithmic}
\end{algorithm}

The main algorithm first applies \cref{alg:finderasure} to identify a superset of $F$, and then applies \cref{alg:erasuredecode} to decode the erasures.

\begin{algorithm}[H]
	\caption{$\mathsf{MainDecode}(x)$}\label{alg:vider_maindecode}
	\begin{algorithmic}[1]
		\Require{$x=(x_1,\dots,x_n)\in \F_2^n$}
		\State \Return $\mathsf{ErasureDecode}(x, \mathsf{FindErasure}(x))$
	\end{algorithmic}
\end{algorithm}

The main result of this section is the following theorem:
\begin{theorem}\label{thm:viderman-style}
	Assume $d_0\ge 2$ and $\delta>\frac{d_0}{2d_0-1}$. Then \cref{alg:vider_maindecode} can be implemented in $O(n)$ time to correct up to $\gamma(\alpha n-1)$ errors, where
	\[
		\gamma
		\coloneq 1-\frac{c(1-\delta)}{h}
		=\frac{\delta(2d_0-1)-d_0}{\delta d_0-1}>0.
	\]
\end{theorem}

Before proving \cref{thm:viderman-style}, we present two lemmas analyzing the properties of the set $L'$ output by \cref{alg:finderasure}.
\begin{lemma}\label{lem:finderasure-small}
	Let $y\in T(G,C_0)$, let $F=F(x,y)$, and let $L'$ be the output of \cref{alg:finderasure}. Assume $|F|\le \gamma(\alpha n-1)$, where $\gamma$ is as in \cref{thm:viderman-style}. Then $|L'|\le \alpha n$.
\end{lemma}

\begin{proof}
In the proof, we use $L'$ and $R'$ to denote the intermediate sets at any point of executing the algorithm. We prove that the invariant $|L'|\le \alpha n -1$ is maintained throughout the whole procedure.

If, during the execution of $\mathsf{FindErasure}(x)$, there are multiple valid choices for the next vertex to be added to $L'$, the order in which they are added does not affect the final set $L'$. Indeed, since $R'$ only grows throughout the algorithm, any vertex $i$ satisfying $|N(i)\cap R'|\ge h$ at some point will continue to satisfy this condition until it is added to $L'$.

Let
\[
    F_0\coloneq \{i\in F: |N(i)\cap U(x)|\ge h\}.
\]
We choose an order in which all vertices in $F_0$ are added before any other vertices. This is valid since every vertex in $F_0$ satisfies the threshold condition at the beginning of the algorithm. 

Since $|F|\le \gamma(\alpha n-1)$ and  $\gamma\le 1$, we have
\[ |F_0|\le |F|\le \alpha n-1. \]
So the invariant holds for $L'=F_0$.

Every check node in $U(x)$ has at least one neighbor in $F$. Therefore, $|E(F,U(x))|\ge |U(x)|$. On the other hand, every vertex in $F\setminus F_0$ contributes fewer than $h$ edges to $E(F, U(x))$. Hence
\[
    |E(F_0,U(x))|\geq |E(F,U(x))|-h(|F|-|F_0|) \geq|U(x)|-h(|F|-|F_0|).
\]

Therefore, the size of $R'$ after adding all vertices in $F_0$, denoted by $R_0$, is bounded by
\begin{equation}
    \begin{aligned}
        |R_0|
         & = |U(x)\cup N(F_0)|                      \\
         & \le |U(x)|+|N(F_0)\setminus U(x)|        \\
         & \le |U(x)|+|E(F_0,N(F_0)\setminus U(x))| \\
         & \le |U(x)|+c|F_0|-|E(F_0,U(x))|          \\
         & \le (c-h)|F_0|+h|F|.
        \label{eq:R0-upper}
    \end{aligned}
\end{equation}

After all the vertices of $F_0$ have been added, we consider the remaining vertices added to $L'$ one by one. Whenever a vertex $i$ is added to $L'$, it already has at least $h$ neighbors in $R'$, and hence at most $c-h$ new right vertices are added to $R'$. So we always have
\begin{equation}
    \label{eq:viderman-R-upper}
    |R'|\leq |R_0|+(c-h)(|L'|-|F_0|)\leq(c-h)|L'|+h|F|,
\end{equation}
where the second inequality follows from \eqref{eq:R0-upper}. By the invariant, $|L'|\le \alpha n-1$ before adding a single vertex. So after that, we still have $|L'|\leq \alpha n$, and the expansion property gives
\[
    \delta c|L'|\leq |N(L')|\leq |R'|\le(c-h)|L'|+h|F|,
\]
where the last inequality holds by \eqref{eq:viderman-R-upper}.
Therefore,
\[
    \left(h-c(1-\delta)\right)|L'|\le h|F|.
\]
As $|F|\le \gamma(\alpha n-1)$ and $\gamma=1-\frac{c(1-\delta)}{h}$, we have
\[
    |L'|\le \frac{h|F|}{h-c(1-\delta)}
    \leq \frac{h\gamma(\alpha n-1)}{h-c(1-\delta)}
    =\alpha n -1.
\]
Repeating this argument for each vertex added to $L'$ shows that the invariant $|L'|\leq \alpha n-1$ holds throughout the algorithm, and in particular for the final output.
\end{proof}

\begin{lemma}\label{lem:finderasure-contain}
	Let $y\in T(G,C_0)$, let $F=F(x,y)$, and let $L'$ be the output of \cref{alg:finderasure}. If $|F|\leq \alpha n$, then $F\subseteq L'$.
\end{lemma}

\begin{proof}
	Let $R'$ be the final set of right vertices maintained by \cref{alg:finderasure}. Suppose for contradiction that
	\[
		S\coloneq F\setminus L'\neq \emptyset.
	\]
	Since the algorithm has stopped, every $i\in S$ satisfies $|N(i)\cap R'|<h$. Thus
	\begin{equation}
		\label{eq:viderman-edge-S-R}
		|E(S,R')|<h|S|.
	\end{equation}
	We now upper bound $|N(S)|$. Consider $v\in N(S)\setminus R'$. As $N(L')\subseteq R'$, the check $v$ has no neighbor in $L'$. 
    As $v\in N(S)\subseteq N(F)$ and $F=F(x,y)$, we have $x_{N(v)}\neq y_{N(v)}$. On the other hand, as $U(x)\subseteq R'$, the check $v$ is satisfied by $x$, implying that $x_{N(v)}$ is a codeword of $C_0$. So $x_{N(v)}$ and $y_{N(v)}$ are distinct codewords of $C_0$.
   It follows that
	\[
	|N(v)\cap F|=|N(v)\cap S|\geq d_0.
	\]
    As this holds for all $v\in N(S)\setminus R'$, we have
    \[
   |N(S)\setminus R'|\leq  \frac{|E(S,N(S)\setminus R')|}{d_0}=\frac{c|S|-|E(S,R')|}{d_0}.
    \]
	It follows that
	\begin{align*}
		|N(S)|  
		 & =|N(S)\cap R'|+|N(S)\setminus R'|                            \\
		 & \leq |E(S,R')|+\frac{c|S|-|E(S,R')|}{d_0}                    \\
		 & =\left(1-\frac{1}{d_0}\right)|E(S,R')|  +\frac{c}{d_0}|S|    \\
		 & <\left(\left(1-\frac{1}{d_0}\right)h+\frac{c}{d_0}\right)|S| \\
		 & =\delta c|S|,
	\end{align*}
	where the strict inequality follows from \eqref{eq:viderman-edge-S-R}, and the last equality follows from the definition of $h$ \eqref{eq:viderman-h}. On the other hand, since $S\subseteq F$ and $|F|\leq \alpha n$, the expansion property gives $|N(S)|\geq \delta c|S|$, a contradiction. Hence $F\subseteq L'$.
\end{proof}

We then prove the correctness of the erasure decoding algorithm.
\begin{lemma}\label{lem:erasuredecode-correct}
	Assume $\delta d_0>1$. Let $y\in T(G,C_0)$, and suppose $L'\subseteq L$ satisfies $F(x,y)\subseteq L'$ and $|L'|\leq \alpha n$. Then \cref{alg:erasuredecode}, with input $(x,L')$, outputs $y$ in $O(n)$ time.
\end{lemma}

\begin{proof}
	We maintain the invariant that $\hat x_i=y_i$ for every $i\in L\setminus L'$. This invariant holds initially because $F(x,y)\subseteq L'$.

	Suppose $L'\neq \emptyset$. Since $|L'|\leq \alpha n$, applying \cref{lem:N_let} with $S=L'$ and $t=d_0-1$ gives
	\[
		|N_{\leq d_0-1}(L')|\geq \frac{\delta d_0-1}{d_0-1}\cdot c|L'|>0.
	\]
	Thus there exists $v\in R$ such that $1\leq |N(v)\cap L'|\leq d_0-1$. In particular, Line~3 of \cref{alg:erasuredecode} can always be executed whenever $L'\neq\emptyset$, which is guaranteed by the condition in Line~2.

	By the invariant, we have $\hat x_{N(v)\setminus L'}=y_{N(v)\setminus L'}$. Since $y_{N(v)}\in C_0$, there exists a codeword of $C_0$ that agrees with $\hat x$ on the unerased coordinates $N(v)\setminus L'$. This codeword is unique. Indeed, if two distinct codewords of $C_0$ agreed with $\hat x_{N(v)\setminus L'}$ on the unerased coordinates, then their difference would be a nonzero codeword supported on  at most $|N(v)\cap L'|\le d_0-1$ coordinates, contradicting the minimum distance of $C_0$. Therefore, $\erasuredecode(\hat x_{N(v)},N(v)\cap L')$ returns $y_{N(v)}$
	and the invariant is preserved after removing $N(v)$ from $L'$.

The loop terminates after at most $|L'|$ iterations, since $|N(v)\cap L'|\ge 1$ in every iteration and all elements of $N(v)\cap L'$ are removed from $L'$. At termination, $L'=\emptyset$, and the invariant implies that $\hat x=y$.

To achieve the desired running time, maintain the values $|N(v)\cap L'|$ for all $v\in R$, together with a queue of checks $v$ satisfying $1\leq |N(v)\cap L'|\leq d_0-1$. Whenever erased coordinates are recovered, update only the counts of their neighboring checks. Since $c$, $d$, and the local erasure-decoding time of $C_0$ are constants, each edge of the graph $G$ is processed only $O(1)$ times. Therefore, the total running time is $O(n)$.
\end{proof}

We are now ready to prove \cref{thm:viderman-style} by combining the preceding lemmas.

\begin{proof}[Proof of \cref{thm:viderman-style}]
	Let $y\in T(G,C_0)$ be the transmitted codeword, and let $F=F(x,y)$. Assume $|F|\leq \gamma(\alpha n-1)$.

	By \cref{lem:finderasure-small}, the output $L'$ of \cref{alg:finderasure} satisfies $|L'|\leq \alpha n$. Since $|F|\leq \gamma(\alpha n-1)$ and $\gamma\leq 1$, we also have $|F|\leq \alpha n$, and therefore \cref{lem:finderasure-contain} gives $F\subseteq L'$. Finally, the condition $\delta>\frac{d_0}{2d_0-1}$ implies $\delta d_0>1$, so \cref{lem:erasuredecode-correct} shows that \cref{alg:erasuredecode} outputs $y$.

	It remains to verify that \cref{alg:finderasure} runs in $O(n)$ time. We first compute $U(x)$ by checking every right vertex, which takes $O(n)$ time. We then maintain the values $|N(i)\cap R'|$ for all $i\in L$, together with a queue of vertices in $L\setminus L'$ satisfying the threshold condition. Whenever a new right vertex is added to $R'$, we update the counters of its $d$ left neighbors, adding them to the queue when necessary. When a vertex from the queue is added to $L'$, it is removed from the queue.

    Every left vertex is added to the queue at most once and removed to the queue at most once. Moreover, since each right vertex is added to $R'$ at most once and $c,d$ are constants, every edge is processed only $O(1)$ times. Therefore, \cref{alg:finderasure} runs in $O(n)$ time. Combined with \cref{lem:erasuredecode-correct}, this implies that the entire decoder runs in $O(n)$ time.
\end{proof}

\end{document}